\documentclass[11pt,a4paper]{article}

\usepackage[T1]{fontenc}
\usepackage[utf8]{inputenc}
\usepackage{lmodern}
\usepackage[margin=1in]{geometry}
\usepackage{cite}
\usepackage{xcolor}
\usepackage{amsmath,amssymb,amsfonts,amsthm}
\usepackage{graphicx}
\usepackage{bm}
\usepackage{textcomp}
\usepackage{microtype}
\usepackage[hidelinks]{hyperref}
\hypersetup{pdftitle={Exact PID and PI Gain Regions for Uncertain Non-Affine MIMO Systems},
  pdfauthor={Cheng Zhao}}
\allowdisplaybreaks[2]
\newtheorem{theorem}{Theorem}
\newtheorem{proposition}{Proposition}
\newtheorem{lemma}{Lemma}
\newtheorem{corollary}{Corollary}
\newtheorem{remark}{Remark}
\theoremstyle{definition}
\newtheorem{definition}{Definition}

\newcommand{\R}{\mathbb{R}}

\newcommand{\Sym}{\operatorname{Sym}}
\newcommand{\norm}[1]{\lVert #1\rVert}
\renewcommand{\Re}{\operatorname{Re}}
\renewcommand{\Im}{\operatorname{Im}}
\newcommand{\Realify}{\mathcal{R}}

\begin{document}

\title{Exact PID and PI Gain Regions for Uncertain Non-Affine MIMO Systems}
\author{Cheng Zhao\thanks{State Key Laboratory of Mathematical Sciences,
Academy of Mathematics and Systems Science, Chinese Academy of Sciences,
Beijing 100190, China. E-mail: \href{mailto:zhaocheng@amss.ac.cn}{zhaocheng@amss.ac.cn}.}}
\date{}

\maketitle

\begin{abstract}
Characterizing all proportional--integral--derivative (PID) gains that guarantee uniform exponential
regulation of uncertain nonlinear systems remains a fundamental problem.
Most existing results provide only sufficient conditions derived from a
particular Lyapunov construction. This paper gives exact gain regions,
defined directly by uniform regulation over the entire uncertainty class,
for two classes of nonaffine multi-input multi-output (MIMO) systems
with $m\ge2$ controlled coordinates
and input channels. For systems of
second order under PID control, we independently characterize a common
quadratic inner region and an outer region based on a linear subclass.
The two regions coincide and therefore characterize the PID gain region
exactly at the system level. For systems of first order under PI control,
a lossless reduction to a linear family on the uncertainty boundary,
combined with a common quadratic certificate, yields the exact PI region.
Both regions strictly enlarge existing sufficient regions, and the PID
result also provides a sequential tuning rule. A compact linear family
with one uncertain parameter shows that common quadratic certification
is not generally lossless. These results establish endpoint coincidence
and boundary family equivalence as two routes to exact gain regions and
motivate the broader question of which nonlinear uncertainty classes
admit such characterizations.
\end{abstract}

\medskip
\noindent\textbf{Keywords:}
PID and PI control, uncertain non-affine MIMO systems, robust gain regions,
 uniform exponential regulation.

\section{Introduction}

Proportional--integral--derivative (PID) control and its common PI and PD
variants remain among the most widely used feedback strategies in industrial
automation, robotics, and aerospace applications, owing to their relative
simplicity and ease of implementation~\cite{AstromHagglund2001}. Since closed-loop stability is a prerequisite for achieving a wide range of performance objectives, the widespread use of PID control makes reliable gain selection a problem of both practical and theoretical importance. A complete
stabilizing gain region does more than certify individual controllers: it
reveals the robustness limits of a prescribed plant class and provides a
stability-preserving domain for performance-oriented tuning \cite{ZhuZhaoGuo2026,leiPlenary2026}. For example,
an $H_\infty$ (worst-case frequency-response gain) criterion may be optimized over such a region \cite{BlanchiniEtAl2004}. Classical
extremum-seeking methods for feedback-gain tuning are typically analyzed
under stability of the frozen closed loops throughout the search domain,
whereas stability-certified policy-learning methods require the entire
time-varying policy sequence to remain certified~\cite{AriyurKrstic2003,
KillingsworthKrstic2006,PangBianJiang2022,TanNesicMareels2006}. A sharp stabilizing gain region, especially an exact one, therefore provides
a basis for defining a broad certified search domain, helping
preserve closed-loop stability throughout optimization without
excluding potentially high-performing controllers.

A mature theory for characterizing complete stabilizing PID regions has
developed primarily for linear plants, including fixed systems, interval
families, and time-delay models~\cite{HoDattaBhattacharyya1997,
DattaHoBhattacharyya2000,SilvaDattaBhattacharyya2005,OuZhangYu2009}.
Closely related gain-region questions and fundamental performance
limitations have also been studied for uncertain stochastic nonlinear
systems under PD control \cite{ZhaoZhang2024}. The corresponding linear PID theory relies primarily on polynomial,
quasipolynomial, frequency-domain, and parameter-space methods
\cite{ShafieiShenton1997,HuangWang2000,Hohenbichler2009,
LiNiculescuChenChai2021,HoDattaBhattacharyya2001,
MichielsNiculescu2007,AppeltansNiculescuMichiels2022}.
These tools can yield complete stabilizing regions or finite
robust-stability tests. Their reliance on a closed-loop characteristic
equation and its root distribution, however, prevents a direct extension to
general nonlinear uncertainty classes.

For uncertain nonlinear systems, substantial progress has been made through
Lyapunov, small-gain, differential-inclusion, and
Lyapunov--Krasovskii methods. These approaches have established global or
semiglobal regulation, exponential convergence, and constructive PID designs
for systems with structural uncertainty and extended PID feedback
\cite{ZhaoGuo2021}, stochastic perturbations \cite{ZhangGuo2020}, input or
time-varying delays \cite{LiMa2021,LyuLin2022}, and sampled-data
implementations \cite{ZhangFridman2022,LiMa2023}. Related
feedback-domination arguments have also yielded explicit sufficient gain
conditions for uncertain planar nonlinear systems
\cite{ChenWuHsu2026}.

Broadly speaking, existing theoretical approaches to PID design for
uncertain nonlinear systems are predominantly certificate-driven. They
typically select a particular Lyapunov function, comparison estimate,
small-gain argument, or linear matrix inequality (LMI) formulation and then derive sufficient gain
conditions. Such conditions are useful for controller design, but they
generally certify only the gains captured by the chosen construction.
Even when the Lyapunov matrices and auxiliary variables are optimized
over, the resulting gain set often remains implicit and tied to a
prescribed certificate class. Moreover, an exact solution of a
certificate-level feasibility problem need not recover the intrinsic
system-level gain region. The unresolved problem is therefore to
characterize all PID or PI gains that guarantee uniform exponential
regulation of every admissible nonlinear system, directly in terms of
the controller gains and uncertainty bounds and without restricting the
stability certificate a priori. This system-level all-gains problem is
particularly challenging for non-affine MIMO dynamics, for which neither
a common characteristic equation nor a direct scalar stability criterion
is available.

We address this problem through an inner--outer strategy. A common
quadratic certificate defines a tractable inner region, whereas a
carefully selected linear test family provides independently derived
necessary conditions and hence an outer region. The two regions are
characterized separately, after which system-level exactness is obtained
by proving that the gap between them closes. For the second-order PID
class, write $\Omega_{\rm pid}$ for the system-level gain region,
$\Omega_{\rm pid}^{\rm cq}$ for the common-quadratic inner region, and
$\Omega_{\rm pid}^{\rm lin}$ for the linear-subclass outer region.
This strategy then takes the form
\begin{equation*}
\Omega_{\rm pid}^{\rm cq}
\subseteq
\Omega_{\rm pid}
\subseteq
\Omega_{\rm pid}^{\rm lin}.
\end{equation*}
For the first-order PI class, the same principle is realized through a
lossless boundary-linear test family together with an explicit common
quadratic certificate. These two constructions provide complementary
routes to a complete gain-region characterization: inner--outer endpoint
coincidence for PID control and boundary-linear equivalence for PI
control. The analogous PI regions are denoted by $\Omega_{\rm pi}$ and
$\Omega_{\rm pi}^{\rm cq}$. Throughout, $\mathbb R_{>0}^r$ denotes the
set of real $r$-vectors with strictly positive components.

The contributions below make these two exactness mechanisms precise.
The main contributions are summarized as follows.

 \emph{i) Exact PID region through endpoint coincidence.}
For second-order non-affine MIMO systems, we characterize both
endpoints exactly and independently. The common-quadratic inner region is reduced to explicit algebraic
inequalities in the PID gains and two auxiliary scalars, whereas the
linear-subclass outer region is described by an exact frequency-domain
condition. Although
obtained through  different arguments, the two endpoints
coincide for every $m\geq 2$:
\begin{equation*}
\Omega_{\rm pid}^{\rm cq}
=
\Omega_{\rm pid}
=
\Omega_{\rm pid}^{\rm lin}.
\end{equation*}
Thus the complete system-level PID gain region is obtained:
common-quadratic certification loses no admissible gains, and the
linear subclass loses no necessary information. For $m=1$, both
endpoints are still characterized exactly, but they are generally
distinct and provide explicit inner and outer bounds on the
system-level region.

 \emph{ii) Exact PI region through boundary-linear equivalence.}
For the first-order non-affine MIMO class, we construct a countable
boundary-linear family and prove that, for $m\geq 2$, uniformly
exponentially regulating this family is equivalent to regulating the
full nonlinear uncertainty class. Combining this lossless boundary test
with an explicit common quadratic Lyapunov function yields
\begin{equation*}
\Omega_{\rm pi}^{\rm cq}
=
\Omega_{\rm pi}
=
\left\{
(k_p,k_i)\in\mathbb{R}_{>0}^{2}:
k_i+Lk_p<bk_p^2
\right\}.
\end{equation*}
This establishes system-level exactness through a mechanism
 different from the endpoint coincidence used for PID
control.

\emph{iii) Strict improvements and structural limits.}
The exact PID and PI regions strictly enlarge the corresponding
sufficient regions in~\cite{ZhaoGuo2022}. For PID control, the new
region captures high-gain scalings excluded by the previous condition
and yields a simple sequential tuning rule. We further construct a
compact one-parameter linear family whose system-level PID gain region
strictly exceeds its common-quadratic region. Thus common-quadratic
losslessness is a structural property of the uncertainty classes
considered here, rather than a generic feature of robust stability.

The remainder of the paper is organized as follows.
Section~\ref{sec:pid} develops the exact PID gain-region
characterization for second-order non-affine MIMO systems, including
the two endpoint regions, their coincidence for $m\geq2$, and the
resulting tuning implications. Section~\ref{sec:pi} establishes the
boundary-linear equivalence and the exact PI gain region for
first-order non-affine MIMO systems.
Section~\ref{sec:cq-losslessness-counterexample} presents a linear
counterexample showing that common-quadratic losslessness is not a
generic property. The proofs of the main PID and PI results are
collected in Section~\ref{sec:proof-main}, followed by the conclusion.

\paragraph{Notation.}
For a real or complex vector $v$, $\norm v=(v^*v)^{1/2}$ is its Euclidean
norm. For a matrix $A$, the same notation denotes the induced Euclidean
(spectral or $2$-) norm,
\[
\norm A:=\sup_{v\ne0}\frac{\norm{Av}}{\norm v}
=\sqrt{\lambda_{\max}(A^*A)}.
\]
The superscripts $\top$ and $*$ denote transpose and conjugate transpose,
respectively. For a real square matrix $A$, let
$\Sym(A):=(A+A^\top)/2$. The symbols $\mathbb S^r$, $\mathbb S_+^r$, and
$\mathbb S_{++}^r$ denote the real symmetric $r\times r$ matrices and their
positive-semidefinite and positive-definite cones. For symmetric matrices,
$A\succeq B$ (respectively, $A\succ B$) means that $A-B$ is positive
semidefinite (respectively, positive definite); $\preceq$ and $\prec$
have the reverse meanings. The symbols $\lambda_{\min}$ and
$\lambda_{\max}$ denote the smallest and largest eigenvalues of a symmetric
or Hermitian matrix. We write $I_r$ for the $r\times r$ identity,
$0_{p\times q}$ for the $p\times q$ zero matrix, and $0_r:=0_{r\times r}$;
unsubscripted zero blocks have the dimensions dictated by their positions.
The notation $\operatorname{col}(v_1,\ldots,v_j)$ means vertical stacking,
and $\operatorname{diag}(A_1,\ldots,A_j)$ denotes a block-diagonal matrix.
A matrix is \emph{Hurwitz} if all its eigenvalues have strictly negative
real parts; a polynomial is Hurwitz if all its zeros do. Finally,
$C^1(\mathbb R^p;\mathbb R^q)$ denotes continuously differentiable maps
from $\mathbb R^p$ to $\mathbb R^q$.

\section{PID control of second-order non-affine MIMO systems}
\label{sec:pid}

This section studies PID regulation for the second-order non-affine
MIMO uncertainty class and aims to \emph{determine all gains that guarantee
uniform exponential regulation} over the full nonlinear class. We first
formulate the system-level PID gain region directly in terms of
closed-loop trajectories. We then introduce and characterize two
independently derived endpoint regions: a common-quadratic inner region
and a linear-subclass outer region. After obtaining gain-only
descriptions of both endpoints, we prove that they coincide for
$m\geq2$, thereby yielding the exact system-level PID gain region. The
scalar case is considered separately. Finally, we compare the resulting
region with an existing sufficient region and derive a simple sequential
PID tuning rule.

\subsection{Problem formulation and the system-level PID gain region}
For an integer $m\ge1$, consider the following class of uncertain second-order non-affine MIMO
systems:
\begin{equation}
\label{eq:system}
\dot x_1=x_2,\qquad
\dot x_2=f(x_1,x_2,u),
\qquad x_1,x_2,u\in\R^m,
\end{equation}
where $x_1$ and $x_2$ denote generalized position and velocity,
respectively, $u$ is the control input, and
$f\in C^1(\R^{3m};\R^m)$ is an unknown nonlinear map that need not be affine
in $u$. The dimension $m$ is the number of controlled coordinates and input
channels. Subject to the bounds specified below, this class accommodates
coupled second-order agents and multi-degree-of-freedom force- or
torque-driven systems with state-dependent nonlinearities, cross-channel
coupling, and static input nonlinearities.

We apply the  PID controller with scalar gains
$k_p,k_i,k_d$:
\begin{align}
u(t)&=k_p e(t)+k_i\textstyle\int_0^t e(s)\,ds+k_d\dot e(t),
\label{eq:pid}\\
e(t)&=y^*-x_1(t).\nonumber
\end{align}
Here $y^*\in\mathbb{R}^m$ is a constant setpoint.

For prescribed constants $L_1,L_2,b>0$, let
$\mathcal F_{L_1,L_2,b}$ be the set of all functions
$f\in C^1(\R^{3m};\R^m)$ satisfying
\begin{equation}
\label{eq:function-class}
\begin{aligned}
\norm{\frac{\partial f}{\partial x_1}}\le L_1,\quad
\norm{\frac{\partial f}{\partial x_2}}\le L_2,\quad
\Sym\big(\frac{\partial f}{\partial u}\big)\succeq bI_m,
\end{aligned}
\end{equation}
where $\frac{\partial f}{\partial x_i}$, $i=1,2$, and
$\frac{\partial f}{\partial u}$ are the $m\times m$ Jacobian matrices
of $f$ with respect to $x_i$ and $u$, respectively. Writing
$f=(f_1,\ldots,f_m)^\top$, their entries are
\[
\left[\frac{\partial f}{\partial x_i}\right]_{rs}
=\frac{\partial f_r}{\partial(x_i)_s},
\qquad
\left[\frac{\partial f}{\partial u}\right]_{rs}
=\frac{\partial f_r}{\partial u_s},
\qquad r,s=1,\ldots,m.
\]
The matrix norms in \eqref{eq:function-class} are the induced $2$-norm
defined above, and all three bounds hold at every $(x_1,x_2,u)$. Strong
monotonicity and coercivity imply that (see Proposition 4.1 in \cite{ZhaoGuo2022}), for
each $f\in\mathcal F_{L_1,L_2,b}$ and $y^*\in\R^m$, there is a unique $u^*$
such that $f(y^*,0,u^*)=0$.

The objective is to select the PID gains using only the uncertainty bounds
$(L_1,L_2,b)$ so that every admissible closed loop is forward complete and
uniformly exponentially converges to the regulation equilibrium. This leads naturally to the following definition of the
system-level PID gain region, in which gain admissibility is determined
directly from the closed-loop trajectories of all systems in the uncertainty
class, rather than from a prescribed stability certificate.
Throughout, the dependence of the gain regions on the dimension $m$
and uncertainty bounds is suppressed for notational simplicity.

\begin{definition}[System-level robust PID gain region]
\label{def:pid-system-level}
A gain triple $(k_p,k_i,k_d)\in\mathbb R^3$ with $k_i\ne0$ belongs to
$\Omega_{\rm pid}$ if there exist constants $M\ge1$ and $\lambda>0$,
depending only on the gains and the uncertainty bounds $(L_1,L_2,b)$, such
that, for every $f\in\mathcal F_{L_1,L_2,b}$, every constant reference, and
every initial condition, the corresponding closed-loop solution is forward
complete and satisfies
\begin{equation}
\label{eq:full-exp-certificate}
\norm{z(t)}
\le
Me^{-\lambda t}\norm{z(0)},
\qquad t\ge0,
\end{equation}
where
\[
z=
\operatorname{col}\!\left(
\eta(t)-u^*/k_i,\,e(t),\,\dot e(t)
\right),
\qquad
\eta(t)=\int_0^t e(s)\,ds.
\]
The set $\Omega_{\rm pid}$ is called the
\emph{system-level robust PID gain region}.
\end{definition}

\begin{remark}\normalfont
The restriction $k_i\ne0$ in Definition~\ref{def:pid-system-level}
entails no loss of admissible PID gains. Indeed, the class
$\mathcal F_{L_1,L_2,b}$ constrains $f$ only through its partial derivatives,
so any constant additive disturbance can be absorbed into $f$ without
altering the uncertainty bounds. Robust asymptotic regulation over the full
uncertainty class therefore requires integral action. Hence every admissible
gain triple necessarily satisfies $k_i\ne0$, and the term $u^*/k_i$ is well
defined.
\end{remark}

Our main objective is to characterize $\Omega_{\rm pid}$ exactly, that is,
to derive necessary and sufficient algebraic conditions for membership in
terms of $(k_p,k_i,k_d)$ and the uncertainty bounds $(L_1,L_2,b)$.
Unlike a sufficient gain condition, an exact characterization identifies all
gains that guarantee uniform exponential regulation over the entire
uncertainty class. It therefore makes explicit the robustness
limits under the prescribed uncertainty bounds and provides the complete
admissible domain for performance-oriented tuning.

Obtaining such a characterization directly is difficult because the
admissible dynamics may depend nonlinearly and non-affinely on vector-valued
states and inputs, with coupling across channels. Consequently, neither a
common characteristic equation nor a direct scalar stability criterion is
available. Rather than attacking this nonlinear trajectory-level set
directly, we adopt an inner--outer strategy based on the inclusion chain
\[
\Omega_{\rm pid}^{\rm cq}
\subseteq
\Omega_{\rm pid}
\subseteq
\Omega_{\rm pid}^{\rm lin},
\]
where $\Omega_{\rm pid}^{\rm cq}$ is the common-quadratic inner endpoint and
$\Omega_{\rm pid}^{\rm lin}$ is the linear-subclass outer endpoint (to be defined later).

\subsection{Common-quadratic inner region}\label{sec:cq-pid-region}

We first ask a certificate-level question: which PID gains admit \emph{some} common quadratic Lyapunov function for the entire induced matrix family? Unlike a construction based on a prescribed quadratic form, this searches over the whole common-quadratic class.

It is worth noting that the nonemptiness of $\Omega_{\rm pid}$ follows from
~\cite{ZhaoGuo2022}, which constructed an explicit open and
unbounded sufficient region contained in $\Omega_{\rm pid}$ using a
\emph{particular common quadratic Lyapunov function}. Since their construction does not reveal whether alternative
certificates can admit additional gains, it motivates determining the
 set of PID gains that admit a common quadratic certificate. This
certificate-level problem provides a tractable first step toward an exact
characterization of the system-level robust PID gain region.

To formulate this problem, we identify the associated uncertain closed-loop
matrix family. Consider first a linear member of the uncertainty class,
written about the equilibrium $(y^*,0,u^*)$:
\[
f(x_1,x_2,u)
=
D_1(x_1-y^*)+D_2x_2+\Theta(u-u^*).
\]
Here $D_1,D_2,\Theta\in\mathbb R^{m\times m}$ are the constant state
and input Jacobians of this linear member.
For $k_i\ne0$, define
\(
z:=\operatorname{col}\!\left(
\eta-u^*/k_i,\,e,\,\dot e
\right).
\)
Using $x_1-y^*=-e$, $x_2=-\dot e$, and
\[
u-u^*
=
k_i\left(\eta-u^*/k_i\right)
+k_pe+k_d\dot e,
\]
the augmented error dynamics become
\begin{align}\label{linear-repre}
\dot z=\mathcal A(D_1,D_2,\Theta)z,
\end{align}
where
\[
\mathcal A(D_1,D_2,\Theta):=
\begin{bmatrix}
0&I_m&0\\
0&0&I_m\\
-k_i\Theta&D_1-k_p\Theta&D_2-k_d\Theta
\end{bmatrix}.
\]

For a general nonlinear function $f\in\mathcal F_{L_1,L_2,b}$, the mean-value theorem applied
along the line segment joining $(y^*,0,u^*)$ to $(x_1,x_2,u)$ gives the same
representation \eqref{linear-repre} with \emph{state-dependent matrices} satisfying
\[
\norm{D_j(z)}\le L_j,\quad j=1,2,
\qquad
\Sym(\Theta(z))\succeq bI_m.
\]

Thus every admissible nonlinear closed loop is embedded pointwise in the
uncertain matrix family $\mathcal A(D_1,D_2,\Theta)$. This representation
leads to the following definition.

\begin{definition}[Common-quadratic PID gain region]
\label{def:cq-pid-region}
A gain triple $(k_p,k_i,k_d)\in\mathbb R^3$ is called
\emph{common-quadratically admissible} if there exist a common
$P=P^\top\succ0$ in $\mathbb R^{3m\times3m}$ and $\alpha>0$ such that
\begin{equation}
\label{eq:robust-matrix-inequality}
P\mathcal A(D_1,D_2,\Theta)
+\mathcal A(D_1,D_2,\Theta)^\top P
\preceq-\alpha I_{3m}
\end{equation}
for every $D_1,D_2,\Theta\in\mathbb R^{m\times m}$ satisfying
\begin{equation}
\label{eq:admissible-matrix-triples}
\norm{D_1}\le L_1,\qquad
\norm{D_2}\le L_2,\qquad
\Sym(\Theta)\succeq bI_m.
\end{equation}
Any such $P$ is called a \emph{common quadratic certificate}. The set of
all common-quadratically admissible PID gains is denoted by
$\Omega_{\rm pid}^{\rm cq}$ and called the
\emph{common-quadratic PID gain region}.
\end{definition}

The following proposition connects this certificate-level notion to the
system-level PID region by showing that every common-quadratically
admissible PID gain belongs to $\Omega_{\rm pid}$.

\begin{proposition}\normalfont
\label{prop:cq-certificate}
Every common-quadratically admissible PID gain triple guarantees the uniform
exponential regulation \eqref{eq:full-exp-certificate} over the
uncertainty class $\mathcal F_{L_1,L_2,b}$. Hence,
\(\Omega_{\rm pid}^{\rm cq}
\subseteq
\Omega_{\rm pid}.
\)
\end{proposition}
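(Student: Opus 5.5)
Fix a common-quadratically admissible triple with certificate $P\succ0$ and margin $\alpha>0$. The plan is to show that $V(z)=z^\top Pz$ is a strict Lyapunov function for every closed loop generated by $f\in\mathcal F_{L_1,L_2,b}$, with decay rate and overshoot constant depending only on $P$ and $\alpha$. Since $P$ and $\alpha$ are chosen once for the gains and bounds, independently of $f$, $y^*$ and the initial data, this yields \eqref{eq:full-exp-certificate} with $M=\sqrt{\lambda_{\max}(P)/\lambda_{\min}(P)}$ and $\lambda=\alpha/(2\lambda_{\max}(P))$.

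\emph{Step 1 (exact pointwise embedding).} Fix $f$ and $y^*$, and let $u^*$ be the unique root of $f(y^*,0,u^*)=0$. Write $w=(x_1,x_2,u)$ and $w^*=(y^*,0,u^*)$. Apply the integral form of the mean-value theorem:
\[
f(w)-f(w^*)=\int_0^1 \nabla f\big(w^*+s(w-w^*)\big)\,ds\,(w-w^*).
\]
This gives $D_1(z)$, $D_2(z)$ and $\Theta(z)$ as averages of the Jacobian blocks along the segment. The admissible set \eqref{eq:admissible-matrix-triples} is convex: the norm balls are convex, and $\{\Theta:\Sym(\Theta)\succeq bI_m\}$ is convex. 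Hence the averaged matrices satisfy the same bounds. Using $x_1-y^*=-e$, $x_2=-\dot e$ and $u-u^*=k_i(\eta-u^*/k_i)+k_pe+k_d\dot e$, one obtains $\dot z=\mathcal A(D_1(z),D_2(z),\Theta(z))z$ along trajectories. Here $z$ is the function of the state described before Definition~\ref{def:cq-pid-region}, and $u$ is itself an affine function of $z$ and $y^*$. The closed loop is an ODE in $(\eta,x_1,x_2)$ with a $C^1$ right-hand side, so local solutions exist and are unique.

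\emph{Step 2 (dissipation inequality).} Along any solution, \eqref{eq:robust-matrix-inequality} evaluated at the admissible triple $(D_1(z),D_2(z),\Theta(z))$ gives
\[
\dot V=z^\top\big(P\mathcal A+\mathcal A^\top P\big)z\le-\alpha\norm z^2\le-\tfrac{\alpha}{\lambda_{\max}(P)}V .
\]
By the comparison lemma, $V(t)\le e^{-2\lambda t}V(0)$, and sandwiching $V$ between $\lambda_{\min}(P)\norm z^2$ and $\lambda_{\max}(P)\norm z^2$ gives the bound.

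\emph{Step 3 (forward completeness).} The map from $(\eta,x_1,x_2)$ to $z$ is an affine bijection, since $k_i\ne0$, $e=y^*-x_1$ and $\dot e=-x_2$. Therefore boundedness of $z$ on the maximal interval forces boundedness of the state, which rules out finite escape and gives forward completeness.

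I expect the main obstacle to be Step 1. Two points need care there. First, the averaged $\Theta$ must keep $\Sym(\Theta)\succeq bI_m$, which holds because symmetric-part positivity is preserved under integration. Second, the representation must be exact rather than approximate, because $u$ depends on $z$; this is exactly why the segment is taken in the joint variable $(x_1,x_2,u)$. Everything after that is routine.
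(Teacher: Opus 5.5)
Your proof follows the paper's own argument. Both use the segment-wise mean-value embedding into the admissible matrix family and the decay estimate for $V=z^\top Pz$ with $M=\sqrt{\lambda_{\max}(P)/\lambda_{\min}(P)}$ and $\lambda=\alpha/(2\lambda_{\max}(P))$. Both then get forward completeness because bounded $z$ rules out finite escape. Your convexity argument for the averaged Jacobians is more self-contained than the paper's, which defers this step to~\cite{ZhaoGuo2022}.

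There is one small gap. You never show that a common-quadratically admissible triple has $k_i\ne0$; Step~3 simply asserts it. Definition~\ref{def:cq-pid-region} does not assume $k_i\ne0$. However, Definition~\ref{def:pid-system-level} requires it for membership in $\Omega_{\rm pid}$, and both your $z$ (through $u^*/k_i$) and your affine bijection depend on it.

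The paper closes this by substituting $D_1=D_2=0$ and $\Theta=bI_m$ into the certificate inequality. This makes $\mathcal A(0,0,bI_m)$ Hurwitz, and Routh--Hurwitz then gives $k_i,k_p,k_d>0$. A quicker fix also works. If $k_i=0$, the first block column of $\mathcal A$ vanishes, so $v=\operatorname{col}(w,0,0)$ with $w\ne0$ gives $v^\top(P\mathcal A+\mathcal A^\top P)v=0$. That contradicts the margin $\alpha>0$. Adding either line completes your proof.
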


\begin{proof}
Let $P=P^\top\succ0$ and $\alpha>0$ be a common quadratic certificate and
its uniform decay margin. Taking $D_1=D_2=0$ and $\Theta=bI_m$ in
\eqref{eq:robust-matrix-inequality} shows that
$\mathcal A(0,0,bI_m)$ is Hurwitz. Its characteristic polynomial in each
channel is
\[
s^3+bk_ds^2+bk_ps+bk_i.
\]
The Routh--Hurwitz criterion therefore gives
\[
k_i>0,\qquad k_p>0,\qquad k_d>0,
\qquad bk_pk_d>k_i.
\]
In particular, the shifted integral state in
\eqref{eq:full-exp-certificate} is well defined.

Fix any admissible nonlinearity, constant reference, and plant initial
condition. Write
\[
z=\operatorname{col}(z_0,z_1,z_2),
\qquad
q:=k_iz_0+k_pz_1+k_dz_2.
\]
The segmentwise mean-value decomposition described above, and used similarly
in the proof of Theorem~3.1 in~\cite{ZhaoGuo2022}, gives
\begin{equation}
\label{eq:pid-mean-value}
f(y^*-z_1,-z_2,u^*+q)
=
-D_1(z)z_1-D_2(z)z_2+\Theta(z)q,
\end{equation}
where
\[
\norm{D_j(z)}\le L_j,\quad j=1,2,
\qquad
\Sym(\Theta(z))\succeq bI_m.
\]
Consequently, the augmented error dynamics satisfy
\begin{equation}
\label{eq:certificate-dynamics}
\dot z
=
\mathcal A\bigl(D_1(z),D_2(z),\Theta(z)\bigr)z,
\end{equation}
with an admissible matrix triple at every closed-loop state.

For $V(z):=z^\top Pz$, the common quadratic inequality yields
\[
\dot V
\le
-\alpha\norm{z}^2
\le
-\frac{\alpha}{\lambda_{\max}(P)}V.
\]
It follows that $\norm{z(t)}
\le
M e^{-\lambda t}\norm{z(0)},$  where
\[
M:=
\sqrt{\frac{\lambda_{\max}(P)}{\lambda_{\min}(P)}},
\qquad
\lambda:=
\frac{\alpha}{2\lambda_{\max}(P)}.
\]
Because $P$ and $\alpha$ are common to the entire matrix family, $M$ and
$\lambda$ are independent of the admissible nonlinearity, reference, and
initial condition. This proves \eqref{eq:full-exp-certificate}.

Finally, $f\in C^1$ makes the closed-loop vector field locally Lipschitz,
while the preceding estimate prevents finite escape. The standard
continuation theorem therefore gives forward completeness.
\end{proof}

Proposition~\ref{prop:cq-certificate} rigorously establishes the inclusion
$\Omega_{\rm pid}^{\rm cq}\subseteq\Omega_{\rm pid}$, thereby identifying
$\Omega_{\rm pid}^{\rm cq}$ as a common-quadratic inner region of the
system-level PID gain region. Its membership condition, however, remains
implicit through a semi-infinite LMI. We next characterize this inner region
exactly. For $m=1$, the semi-infinite condition reduces losslessly to a
four-vertex LMI test, whereas for $m\ge2$, the Lyapunov matrix can be
eliminated entirely, yielding an explicit gain-only characterization in
terms of two auxiliary scalars.

For $m=1$, let $\mathcal V:=\{-1,1\}$ denote the vertex-sign set and define,
for $s,t\in\mathcal V$,
\begin{equation}
\label{eq:scalar-vertices}
\mathcal A_{st}^{\rm sc}:=
\begin{bmatrix}
0&1&0\\
0&0&1\\
-bk_i&sL_1-bk_p&tL_2-bk_d
\end{bmatrix}.
\end{equation}
Set $e_3=\begin{bmatrix}0&0&1\end{bmatrix}^{\!\top}$,
$\kappa:=(k_i,k_p,k_d)^\top$. The associated \emph{four-vertex feasibility
condition} is
\begin{equation}
\label{eq:four-vertex-condition}
\begin{aligned}
&\exists\,P_0=P_0^\top\succ0,\qquad P_0e_3=\kappa,\\
&P_0\mathcal A_{st}^{\rm sc}
+(\mathcal A_{st}^{\rm sc})^\top P_0\prec0,
\quad (s,t)\in\mathcal V^2.
\end{aligned}
\end{equation}

For the case $m\geq 2$, let $\mu,\rho\in(0,1)$ and define
\begin{align}
\label{eq:Pi}
\Pi_{\mu,\rho}
&:=k_p^2-
\frac{L_1^2}{4b^2\mu\rho(1-\mu)},\\
\label{eq:Delta}
\Delta_{\mu,\rho}
&:=k_d^2-
\frac{k_p}{b(1-\mu)}
-\frac{L_2^2}
{4b^2\mu(1-\rho)(1-\mu)}.
\end{align}
When $(\mu,\rho)$ are fixed and no confusion can arise, we write
$\Pi:=\Pi_{\mu,\rho}$ and $\Delta:=\Delta_{\mu,\rho}$ for brevity.

The next theorem gives the exact common-quadratic region in both the scalar
and multivariable cases.

\begin{theorem}[Exact common-quadratic PID gain region]
\label{thm:main}
The exact common-quadratic PID gain regions are as follows:

i) For scalar systems $m=1$, a gain triple belongs to $\Omega_{\rm pid}^{\rm cq}$
if and only if it satisfies the four-vertex condition
\eqref{eq:four-vertex-condition}.

ii) For every $m\ge2$,
\begin{equation}
\label{eq:exact-cq-region}
\Omega_{\rm pid}^{\rm cq}=\bigcup_{\mu,\rho\in(0,1)}
\left\{
(k_p,k_i,k_d)\in\mathbb R_{>0}^3
\ \middle|\
\Pi_{\mu,\rho}>0,\quad \Delta_{\mu,\rho}>0,\quad
k_i<
\frac{\Pi_{\mu,\rho}}
{2\bigl(k_d-\sqrt{\Delta_{\mu,\rho}}\bigr)}
\right\}.
\end{equation}
\end{theorem}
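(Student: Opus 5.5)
The plan is to exploit the structure of $\mathcal A(D_1,D_2,\Theta)$, in which all uncertainty enters through the last block row. Write $B:=e_3\otimes I_m$, $C_1:=[0\ I_m\ 0]$, $C_2:=[0\ 0\ I_m]$, $K:=\kappa^\top\otimes I_m$ and $N:=\mathcal A(0,0,0)$. Then
\[
\mathcal A(D_1,D_2,\Theta)=N+B\,(D_1C_1+D_2C_2-\Theta K),
\]
and \eqref{eq:robust-matrix-inequality} reads
\[
z^\top(PN+N^\top P)z+2(B^\top Pz)^\top(D_1z_1+D_2z_2-\Theta q)\le-\alpha\norm z^2,\qquad q=Kz .
\]

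\emph{Step 1 (structure of $P$).} The admissible set of $\Theta$ is unbounded: it contains $\Theta=\theta I_m+S$ with $\theta\ge b$ and $S$ skew-symmetric, and more generally $bI_m$ plus any matrix with positive semidefinite symmetric part. I would use this to show that $B^\top P=cK$ for some scalar $c>0$, and then normalize $c=1$.

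The argument runs as follows. Suppose $w:=B^\top Pz$ is not a nonnegative multiple of $q$ for some $z$. Then one can pick $\Theta$ with $\Sym\Theta\succeq bI_m$ and $w^\top\Theta q\to-\infty$, using a rank-one term $\gamma\,vv^\top$ plus a suitable skew part, which violates the inequality. Consequently $B^\top Pz$ and $q$ must be positively collinear for every $z$. Since both are linear maps of $z$, this forces $B^\top P=cK$ with $c>0$; the sign follows from testing at $D_1=D_2=0$, $\Theta=bI_m$.

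With this normalization the $\Theta$-term becomes $-2q^\top\Theta q\le -2b\norm q^2$, with equality at $\Theta=bI_m$ plus a skew part. In the scalar case the constraint $B^\top P=K$ is exactly $P_0e_3=\kappa$.

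\emph{Step 2 (worst case over $D_j$).} The left-hand side is affine in $(D_1,D_2)$.

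For $m=1$ the sets $\{|D_j|\le L_j\}$ are intervals, so the maximum is attained at the vertices $D_1=sL_1$, $D_2=tL_2$. This gives the four-vertex condition and proves i). To pass between $\preceq-\alpha I$ and strict inequality, I use compactness of the vertex set.

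For $m\ge2$ I would first symmetrize $P$. The admissible family is invariant under the transformation $(D_1,D_2,\Theta)\mapsto(U D_1U^\top,U D_2U^\top,U\Theta U^\top)$ with $U\in O(m)$, combined with $z\mapsto(I_3\otimes U)z$. Averaging a feasible $P$ over the Haar measure therefore yields a feasible $P=P_0\otimes I_m$, which still satisfies $P_0e_3=\kappa$.

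Next, $\sup_{\norm{D_1}\le L_1}2q^\top D_1z_1=2L_1\norm q\norm{z_1}$, and $q$ and $z_1$ can be chosen non-parallel precisely because $m\ge2$. Common-quadratic admissibility is therefore equivalent to
\[
z^\top\big((P_0\tilde N+\tilde N^\top P_0)\otimes I_m\big)z-2b\norm q^2+2L_1\norm q\norm{z_1}+2L_2\norm q\norm{z_2}<0
\]
for all $z\ne0$, where $\tilde N$ denotes the scalar counterpart of $N$.

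\emph{Step 3 (lossless splitting and elimination of $P_0$).} I would split $2b\norm q^2=2b\mu\rho\norm q^2+2b\mu(1-\rho)\norm q^2+2b(1-\mu)\norm q^2$. The first two pieces absorb the cross terms via $2L_1\norm q\norm{z_1}\le 2b\mu\rho\norm q^2+\frac{L_1^2}{2b\mu\rho}\norm{z_1}^2$ and the analogous bound for $L_2$ with weight $\mu(1-\rho)$. The remaining $-2b(1-\mu)\norm q^2$ stays in the quadratic form.

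The claim is that this is lossless, i.e.\ an S-procedure with two multipliers: the condition is a quadratic form in the norm-triples $(\norm q,\norm{z_1},\norm{z_2})$ together with the remaining cone structure, which should be exactly the regime where the S-lemma applies. After this step, feasibility becomes the existence of a symmetric $3\times3$ matrix $P_0\succ0$, with last column $\kappa$ and free entries $p_{11},p_{12},p_{22}$, such that a $3\times3$ matrix inequality holds. I would eliminate $p_{11},p_{12},p_{22}$ through Schur complements, in the order $p_{22}$, then $p_{12}$, then $p_{11}$. I expect the resulting discriminant conditions to be $\Pi_{\mu,\rho}>0$ and $\Delta_{\mu,\rho}>0$ (existence of real roots), followed by the bound $k_i<\Pi/(2(k_d-\sqrt\Delta))$. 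Positivity of the gains comes from Proposition~\ref{prop:cq-certificate}. Finally, taking the union over $(\mu,\rho)$ gives \eqref{eq:exact-cq-region}.

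The main obstacle is Step 3. One difficulty is proving that the two-parameter Young splitting loses nothing, which requires a careful S-lemma or duality argument that exploits the freedom in the directions of $z_1$, $z_2$ and $q$ for $m\ge2$. The other is carrying out the elimination of $P_0$ so that exactly the closed-form inequalities appear. I would first attempt the elimination symbolically with $\mu,\rho$ fixed, and then verify necessity by constructing worst-case directions $z$ that attain the Young bounds with equality.
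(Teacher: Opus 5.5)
Your Steps 1 and 2 are essentially sound and differ from the paper only in minor ways. You normalize $B^\top P=cK$ for an arbitrary certificate using skew and scalar perturbations of $\Theta$, then average over Haar measure on $O(m)$. The paper first averages over the finite signed-permutation group and then normalizes with $\Theta=(b+\tau)I_m$. Part i) goes through as you describe.

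The genuine gap is Step 3, which is the entire necessity half of ii). A common certificate gives, for each fixed $z$, a pointwise inequality containing $2L_j\norm q\norm{z_j}$. Young's inequality is tight there only with the $z$-dependent weights $\varepsilon_j=L_j\norm{z_j}/\norm q$, whereas \eqref{eq:exact-cq-region} requires one pair $(\mu,\rho)$ that works for all $z$ at once. Saying this is ``exactly the regime where the S-lemma applies'' does not supply the argument. The lossless S-lemma is a one-constraint result, and with two multipliers losslessness fails for general real quadratic forms, so something specific to this problem must be proved. Two further points are off:
\begin{itemize}
\item Your reduction to the norm triple $(\norm q,\norm{z_1},\norm{z_2})$ is too coarse, because $z^\top\bigl((P_0\tilde N+\tilde N^\top P_0)\otimes I_m\bigr)z$ depends on all the inner products $z_i^\top z_j$.
\item You misplace the role of $m\ge2$: the identity $\sup_{\norm{D_1}\le L_1}2q^\top D_1z_1=2L_1\norm q\norm{z_1}$ holds for $m=1$ as well.
\end{itemize}

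The missing idea is a Gram-matrix convexification followed by a minimax exchange.
\begin{enumerate}
\item Lift $z$ to $G=Z^\top Z$ with $Z=[z_0\ z_1\ z_2]$, normalized to $\mathcal K=\{G\in\mathbb S^3_+:\operatorname{tr}G=1\}$. The worst case becomes $\phi(G)=\langle\mathcal Q_b,G\rangle-2L_1\sqrt{r(G)s_1(G)}-2L_2\sqrt{r(G)s_2(G)}$. Here $\mathcal Q_b=-(P_0\tilde N+\tilde N^\top P_0)+2b\kappa\kappa^\top$, $r(G)=\kappa^\top G\kappa$ and $s_j(G)=G_{j+1,j+1}$. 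Feasibility gives $\phi>0$ only on Gram matrices of rank at most $m$.
\item For $m\ge2$ this positivity extends to all of $\mathcal K$. Minimize $\langle\mathcal Q_b,\cdot\rangle$ over the compact fiber of $\mathcal K$ on which $r,s_1,s_2$ are fixed; the square-root terms are constant there. An extreme minimizer has rank at most two. Otherwise it is positive definite and admits a two-sided perturbation $X\ne0$ with $\operatorname{tr}X=\kappa^\top X\kappa=X_{22}=X_{33}=0$, which exists because the space $\mathbb S^3$ is six-dimensional. Rank-two Gram matrices are realizable in $\R^m$, and this is the only place $m\ge2$ enters.
\item The identity $2L\sqrt{rs}=\inf_{\varepsilon>0}(\varepsilon r+L^2s/\varepsilon)$ writes $\phi(G)=\sup_{\varepsilon}\langle\mathcal Q_\varepsilon,G\rangle$, where $\mathcal Q_\varepsilon:=\mathcal Q_b-(\varepsilon_1+\varepsilon_2)\kappa\kappa^\top-(L_1^2/\varepsilon_1)e_2e_2^\top-(L_2^2/\varepsilon_2)e_3e_3^\top$. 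This is affine in $G$ and concave in $\varepsilon$.
\item Sion's theorem on the compact convex set $\mathcal K$ then yields a single $\varepsilon$ with $\lambda_{\min}(\mathcal Q_\varepsilon)>0$. Set $\mu=(\varepsilon_1+\varepsilon_2)/(2b)$ and $\rho=\varepsilon_1/(\varepsilon_1+\varepsilon_2)$. The constraint $\mu<1$ is forced by the $(1,1)$ entry $(2b-\varepsilon_1-\varepsilon_2)k_i^2>0$.
\end{enumerate}

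After this step no chain of Schur complements is needed, because $p_{11}$ and $p_{22}$ enter $\mathcal Q_\varepsilon$ only in its $(1,2)$ and $(2,3)$ entries. Necessity follows from the $(2,2)$ entry together with the principal block on indices $\{1,3\}$. The condition $\Pi_{\mu,\rho}>0$ emerges from combining these two, not as a discriminant. Sufficiency chooses $p_{11},p_{22}$ to zero the $(1,2)$ and $(2,3)$ entries.

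Your plan also omits the check $P_0\succ0$ in the sufficiency direction. The paper obtains it from the Lyapunov equation for the nominal matrix $A_b$, which is Hurwitz because the region's inequalities imply $k_i<bk_pk_d$.
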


It is worth mentioning that for every gain in \eqref{eq:exact-cq-region}, one has
$0<\Delta_{\mu,\rho}<k_d^2$ for the corresponding auxiliary parameters,
so the denominator is positive. Moreover, every gain in the multivariable
region \eqref{eq:exact-cq-region} also satisfies the scalar four-vertex
condition.
The proof is deferred to Section~\ref{sec:proof-main}.

\begin{remark}\normalfont
Theorem \ref{thm:main} is a lossless elimination result at the certificate level: all matrix-valued certificate variables and uncertainty matrices have been removed. It does \emph{not}, by itself, imply that $\Omega_{\rm pid}^{\rm cq}$
 exhausts the system-level region $\Omega_{\rm pid}$; this requires the independent outer analysis below.

%Let $\mathcal U$ denote the set of matrix triples satisfying
%\eqref{eq:admissible-matrix-triples}. Theorem~\ref{thm:main}  shows that, for $m\geq 2$
%\begin{equation*}
%\begin{aligned}
%&\exists \,P=P^\top\succ0,\ \alpha>0\ \text{such that for all }\xi\in\mathcal U,\\[-1mm]
%&P\mathcal A(D_1,D_2,\Theta)+\mathcal A(D_1,D_2,\Theta)^\top P
%\preceq-\alpha I_{3m}
%\\
%\Longleftrightarrow&\quad
%(k_p,k_i,k_d) \text{ belongs to the set defined in } \eqref{eq:exact-cq-region}.
%\end{aligned}
%\end{equation*}
%Thus the final characterization involves only the PID gains and the uncertainty
%bounds $(L_1,L_2,b)$, with neither the
%matrix $P$ nor the uncertainty matrices $(D_1,D_2,\Theta)$ remaining.
\end{remark}

\subsection{Linear-subclass outer region}

The preceding subsection gives an exact \emph{inner bound} on the
system-level region $\Omega_{\rm pid}$. To obtain a complementary outer bound, we now restrict
the full nonlinear uncertainty class to its admissible linear subclass. Any
gain achieving system-level robust regulation must, in particular, stabilize
every member of this subclass.

\begin{definition}[Linear-subclass PID gain region]
\label{def:pid-linear-region}
Let $\mathcal F_m^{\mathrm{lin}}$ denote the family of linear maps
\[
f_{D_1,D_2,\Theta}(x_1,x_2,u)
:=
D_1x_1+D_2x_2+\Theta u
\]
generated by matrices satisfying
\(
\norm{D_1}\le L_1,\
\norm{D_2}\le L_2,\
\Sym(\Theta)\succeq bI_m.
\)
The linear-subclass PID gain region is defined by
\[
\begin{aligned}
\Omega_{\rm pid}^{\rm lin}
:=
\Bigl\{(k_p,k_i,k_d)\in\mathbb R^3:\;
&\ \mathcal A(D_1,D_2,\Theta)\text{ is Hurwitz for every }
f_{D_1,D_2,\Theta}\in\mathcal F_m^{\mathrm{lin}}
\Bigr\}.
\end{aligned}
\]
\end{definition}

Since $\mathcal F_m^{\rm lin}\subseteq\mathcal F_{L_1,L_2,b}$,
every system-level admissible gain must regulate each member of the
linear subclass. More precisely, take $y^*=0$, so that $u^*=0$, and set
\[
\mathcal S_{\rm pid}:=\{0\}\times\mathbb R^m\times\mathbb R^m.
\]
For each fixed $\mathcal A=\mathcal A(D_1,D_2,\Theta)$, the system-level
estimate holds on $\mathcal S_{\rm pid}$. Since
\[
\mathcal S_{\rm pid}+\mathcal A\mathcal S_{\rm pid}
=\mathbb R^{3m},
\qquad
e^{\mathcal A t}\mathcal A=\mathcal A e^{\mathcal A t},
\]
exponential decay extends to the whole augmented state space, and hence
$\mathcal A$ is Hurwitz. Therefore,
$\Omega_{\rm pid}\subseteq\Omega_{\rm pid}^{\rm lin}$.

 We next characterize this outer approximation exactly in terms of the
PID gains and the uncertainty bounds.

\begin{proposition}[Exact linear-subclass PID gain region]
\label{prop:pid-linear-outer}
For $m=1$, denote \(\
\bar k_0=bk_i,\
\bar k_1=bk_p-L_1\) and \(\ \bar k_2=bk_d-L_2,
\) then
\begin{equation}
\label{eq:pid-scalar-linear-region}
\Omega_{\rm pid}^{\rm lin}
=
\left\{
(k_p,k_i,k_d)\in\mathbb R_{>0}^3:
\bar k_1>0,\
\bar k_1\bar k_2>\bar k_0
\right\}.
\end{equation}
For $m\ge2$,
\begin{equation}
\label{eq:pid-linear-region}
\Omega_{\rm pid}^{\rm lin}
=
\left\{
(k_p,k_i,k_d)\in\mathbb R_{>0}^3
\;:\;
\begin{aligned}
b\chi(\omega)-k_p\omega^4
>
(L_1\omega+L_2\omega^2)\sqrt{\chi(\omega)},
\quad \forall\,\omega>0
\end{aligned}
\right\}
\end{equation}
where
\begin{equation}
\label{eq:pid-frequency-function}
\chi(\omega)
:=
(k_i-k_d\omega^2)^2+k_p^2\omega^2.
\end{equation}
\end{proposition}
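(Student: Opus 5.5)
The proof proceeds through the closed-loop characteristic matrix. The key formula is
\[
\det(sI-\mathcal A)=\det\bigl(s^3I_m-s^2D_2-sD_1+\Theta(k_ds^2+k_ps+k_i)\bigr),
\]
obtained by Schur complement on the companion structure. Write $c(s):=k_ds^2+k_ps+k_i$. Then $\mathcal A$ is Hurwitz iff the matrix polynomial
\[
Q(s):=s^3I-s^2D_2-sD_1+c(s)\Theta
\]
is nonsingular on $\overline{\mathbb C_+}$. The family of admissible triples is convex and connected, and it contains $(0,0,bI)$. We therefore use a zero-exclusion (continuity of roots) argument. If $\mathcal A(0,0,bI)$ is Hurwitz, then every admissible $\mathcal A$ is Hurwitz iff no member has an eigenvalue on the imaginary axis. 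Roots cannot escape to infinity, since the leading coefficient is $I$. The nominal Hurwitz condition gives positivity of the gains and $bk_pk_d>k_i$ by Routh--Hurwitz, exactly as in the proof of Proposition~\ref{prop:cq-certificate}. Conversely, positivity is necessary, since the nominal member lies in the subclass.

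\textbf{Eigenvalue at $s=0$.} Here $\det Q(0)=k_i^m\det\Theta\neq0$, so zero is excluded automatically. It thus suffices to exclude $s=j\omega$ with $\omega>0$ (conjugates handle $\omega<0$). At $s=j\omega$, $Q$ is singular iff there is a unit $v\in\mathbb C^m$ with
\[
c(j\omega)\Theta v=(j\omega^3-\omega^2D_2+j\omega D_1)v.
\]
Note that $c(j\omega)=(k_i-k_d\omega^2)+jk_p\omega$ and $|c(j\omega)|^2=\chi(\omega)$.

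\textbf{Case $m\ge2$.} The admissible $\Theta$ are exactly $bI+S+K$ with $S\succeq0$ symmetric and $K$ skew. We determine the set $\mathcal W:=\{\Theta v\}$ reachable for fixed unit $v$. I expect it to be all $w$ with $\Re(v^*w)\ge b$, or at least its real-structured analogue. Similarly, $D_jv$ ranges over the whole ball of radius $L_j$ (complex vectors realized by real matrices need $m\ge2$; this is where the dimension hypothesis enters).

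Singularity then amounts to the existence of $w$ and $d_1,d_2$ with
\[
c\,w=j\omega^3v-\omega^2d_2+j\omega d_1.
\]
Multiply by $\bar c\,v^*$ and take real parts. This gives
\[
\chi\,\Re(v^*w)=\Re\bigl(\bar c\,j\omega^3\bigr)+\Re\bigl(\bar c\,v^*(\cdots)\bigr).
\]
Since $\Re(\bar c\, j\omega^3)=k_p\omega^4$, the worst case yields the condition
\[
b\chi\le k_p\omega^4+(L_1\omega+L_2\omega^2)\sqrt\chi.
\]
This is exactly the negation of \eqref{eq:pid-linear-region}, with the equality cases included by closedness. Necessity requires constructing real $D_1,D_2,\Theta$ attaining the bound. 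Choose $d_1,d_2$ aligned with $\bar c$-phases, and pick $\Theta$ with $\Theta v=w$ using a symmetric PSD part plus a skew part. This realizability step is the main obstacle: one must show that for complex $v=a+jb'$ with independent real and imaginary parts (possible when $m\ge2$), the real matrices can map $v$ to any prescribed complex vector under the stated norm and symmetric-part constraints. I would first try $v=(e_1+je_2)/\sqrt2$, so that $\Theta$ acts as a $2\times2$ block on $\mathbb C\cong\mathbb R^2$. There, $bI+\alpha I+\beta J$ gives multiplication by $b+\alpha+j\beta$ with $\alpha\ge0$, and rotation-scalings of norm at most $L_j$ give arbitrary complex multipliers for $D_j$.

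\textbf{Case $m=1$.} Here everything is real scalar, $\Theta=\theta\ge b$ and $D_j=\delta_j\in[-L_j,L_j]$. The characteristic polynomial
\[
s^3+(\theta k_d-\delta_2)s^2+(\theta k_p-\delta_1)s+\theta k_i
\]
is Hurwitz iff all coefficients are positive and
\[
(\theta k_d-\delta_2)(\theta k_p-\delta_1)>\theta k_i.
\]
The worst case over $\delta_j$ is $\delta_j=L_j$. In $\theta$, the left side minus the right is a quadratic with positive leading coefficient. Its $\theta$-derivative at $\theta\ge b$ is nonnegative, given the positivity at $\theta=b$ together with $k_pk_d\theta^2$ dominance, which needs a short check. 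Hence the minimum occurs at $\theta=b$, which yields \eqref{eq:pid-scalar-linear-region}.
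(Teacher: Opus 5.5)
\textbf{Verdict: same route as the paper, with one omitted step.} Your proposal follows the paper's argument closely. The shared ingredients are:
\begin{itemize}
\item the monic matrix polynomial $s^3I_m+(k_d\Theta-D_2)s^2+(k_p\Theta-D_1)s+k_i\Theta$;
\item exclusion of $s=0$ because $k_i\Theta$ is nonsingular;
\item exclusion of $s=\mathrm{i}\omega$ by reducing to the scalars $v^*D_jv$ and $v^*\Theta v$ and taking real parts (you multiply by $\overline{c(\mathrm{i}\omega)}$, the paper divides by $q_\omega$; these are the same);
\item necessity through rotation--scaling $2\times2$ blocks padded with $bI_{m-2}$ and zero blocks;
\item a continuation argument that roots cannot leave the open left half-plane without crossing the imaginary axis;
\item for $m=1$, Routh--Hurwitz at the vertex $\delta_j=L_j$ plus monotonicity in $\theta$.
\end{itemize}
Your scalar ``short check'' does go through. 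With $h(\theta):=(\theta k_p-L_1)(\theta k_d-L_2)-\theta k_i$ one has $bh'(b)=h(b)+b^2k_pk_d-L_1L_2>0$, and $h$ is convex. This is equivalent to the paper's monotonicity of $g(\theta)=h(\theta)/\theta+k_i$.

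\textbf{The missing step} is in the sufficiency direction for $m\ge2$. Your continuation argument needs a Hurwitz starting point. However, you use $bk_pk_d>k_i$ only as a \emph{necessary} consequence of nominal stability. Membership in the right-hand side of \eqref{eq:pid-linear-region} gives only positive gains and the frequency inequality. You must show these already force $\mathcal A(0,0,bI_m)$ to be Hurwitz. The paper does this by evaluating the inequality at $\omega_0=\sqrt{k_i/k_d}$, where $\chi(\omega_0)=k_p^2\omega_0^2$. Dividing by $k_p\omega_0^2$ gives $bk_p-\omega_0^2>L_1+L_2\omega_0\ge0$, i.e.\ $bk_pk_d>k_i$. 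Alternatively, start the homotopy at $(0,0,\theta I_m)$ with $\theta\ge b$ and $\theta>k_i/(k_pk_d)$, which is admissible and Hurwitz. Your imaginary-axis exclusion then applies along the admissible segment.

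\textbf{A smaller point: drop the reachable-set speculation.} Sufficiency needs only the one-sided bounds $\Re(v^*\Theta v)\ge b$ and $|v^*D_jv|\le L_j$, and necessity needs only the explicit block. Your guess that $D_jv$ fills the complex ball of radius $L_j$ is in fact false for real $D_j$. With $v=(e_1+\mathrm{i}e_2)/\sqrt2$, hitting $w=L_j(1+\mathrm{i})e_1/\sqrt2$ (which has norm $L_j$) forces $D_je_1=D_je_2=L_je_1$, hence $\norm{D_j}\ge\sqrt2\,L_j$.
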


The proof is deferred to Section~\ref{sec:proof-main}.

\subsection{Endpoint coincidence and the exact PID gain region}

For $m\ge2$, the two endpoint regions are obtained through independent
arguments and admit different exact descriptions. The
common-quadratic inner region $\Omega_{\rm pid}^{\rm cq}$ is characterized
by algebraic inequalities involving two auxiliary parameters, whereas the
linear-subclass outer region $\Omega_{\rm pid}^{\rm lin}$ is described by a
frequency-domain inequality over all positive frequencies. Their coincidence
is therefore far from evident a priori.

Nevertheless, the general inclusion chain
\[
\Omega_{\rm pid}^{\rm cq}
\subseteq
\Omega_{\rm pid}
\subseteq
\Omega_{\rm pid}^{\rm lin}
\]
shows that equality of the two endpoints would close the inner--outer gap
and determine the system-level region exactly. The following
theorem shows that this is indeed the case.

\begin{theorem}[Endpoint coincidence and exact PID gain region]
\label{thm:pid-region-coincidence}
For every $m\ge2$, the three gain regions coincide:
\begin{equation}
\label{eq:pid-all-regions-equal}
\Omega_{\rm pid}^{\rm cq}
=
\Omega_{\rm pid}
=
\Omega_{\rm pid}^{\rm lin}.
\end{equation}
\end{theorem}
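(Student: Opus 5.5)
Because the chain $\Omega_{\rm pid}^{\rm cq}\subseteq\Omega_{\rm pid}\subseteq\Omega_{\rm pid}^{\rm lin}$ is already established (Proposition~\ref{prop:cq-certificate} and the argument after Definition~\ref{def:pid-linear-region}), it suffices to prove $\Omega_{\rm pid}^{\rm lin}\subseteq\Omega_{\rm pid}^{\rm cq}$ for $m\ge2$. The descriptions in Theorem~\ref{thm:main}(ii) and Proposition~\ref{prop:pid-linear-outer} make this a purely scalar statement. Fix $(k_p,k_i,k_d)\in\mathbb R_{>0}^3$ satisfying the frequency inequality $b\chi(\omega)-k_p\omega^4>(L_1\omega+L_2\omega^2)\sqrt{\chi(\omega)}$ for all $\omega>0$. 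We must produce $\mu,\rho\in(0,1)$ with $\Pi_{\mu,\rho}>0$, $\Delta_{\mu,\rho}>0$ and $2k_i(k_d-\sqrt{\Delta_{\mu,\rho}})<\Pi_{\mu,\rho}$.

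\textbf{Step 1: rewrite the common-quadratic region as a polynomial condition.} I expect the last inequality of \eqref{eq:exact-cq-region} to be the positivity, for all $\omega$, of a polynomial or quadratic form in $\omega^2$ that arises when the Lyapunov matrix is eliminated via an S-procedure/KYP argument. For fixed $(\mu,\rho)$, the condition $\Pi>0,\ \Delta>0,\ k_i<\Pi/(2(k_d-\sqrt\Delta))$ should be equivalent to
\[
b(1-\mu)\chi(\omega)-k_p\omega^4-\tfrac{L_1^2\omega^2}{4b\mu\rho}-\tfrac{L_2^2\omega^4}{4b\mu(1-\rho)}\;\ge\;0\quad\text{(suitably scaled)}\ \ \forall\omega.
\]
This is a quartic in $\omega^2$. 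Expanding $\chi$ and completing the square in the pair $(k_i,k_d\omega^2)$ should reproduce exactly the discriminant-type condition involving $k_d-\sqrt\Delta$. I will verify this equivalence directly by treating the expression as a quadratic in $\nu=\omega^2$ and analysing its minimum.

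\textbf{Step 2: remove the square root by Young's inequality.} Young's inequality gives, for any $\mu,\rho$,
\[
(L_1\omega+L_2\omega^2)\sqrt{\chi}\le \mu b\chi+\frac{(L_1\omega+L_2\omega^2)^2}{4\mu b},
\qquad (L_1\omega+L_2\omega^2)^2\le \frac{L_1^2\omega^2}{\rho}+\frac{L_2^2\omega^4}{1-\rho}.
\]
Hence the Step~1 condition implies the frequency inequality. The converse, which is the direction we need, requires choosing $(\mu,\rho)$ depending on $\omega$ so that both inequalities are tight. Specifically, take $\mu(\omega)=(L_1\omega+L_2\omega^2)/(2b\sqrt{\chi})$ and $\rho(\omega)=L_1/(L_1+L_2\omega)$. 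The difficulty is that a single pair $(\mu,\rho)$ must work for all $\omega$.

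\textbf{Step 3: the main obstacle, a minimax exchange.} The region $\Omega^{\rm lin}$ says that for every $\omega$ there exists a pair $(\mu,\rho)$ that makes the $\omega$-dependent expression positive. The region $\Omega^{\rm cq}$ requires one pair for all $\omega$. I would close this gap with a minimax/convexity argument. After the reparametrization $a=\mu\rho$ and $c=\mu(1-\rho)$, the expression
\[
g(\omega;a,c)=b\chi-k_p\omega^4-b(a+c)\chi-\tfrac{L_1^2\omega^2}{4ba}-\tfrac{L_2^2\omega^4}{4bc}
\]
is concave in $(a,c)$ and affine in the vector $(\chi,\omega^2,\omega^4)$. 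Taking convex combinations over frequencies therefore turns $\sup_{a,c}\min_\omega g$ into $\min_{\text{measures}}\sup_{a,c}$, by Sion's theorem on a compact set of frequencies.

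The delicate part is to show that the worst-case measure can be taken to be a point mass, or at most supported on two frequencies. This should follow because the quartic in $\nu=\omega^2$ has a single critical interior minimum. Where $m\ge2$ enters, I expect this single-minimum structure to be exactly what Theorem~\ref{thm:main}'s formula encodes, with the worst $\omega$ given by $\nu^*$ determined by $k_i/(k_d\pm\sqrt\Delta)$. Strictness and the endpoints $\omega\to0,\infty$ (which give $k_i>0$ and $bk_d>L_2+\dots$) are handled by compactification.

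If the minimax route becomes too messy, the fallback is a direct argument. Take the frequency $\omega^*$ at which the lin-condition is tightest, set $(\mu,\rho)=(\mu(\omega^*),\rho(\omega^*))$, and check through the tangency conditions that the Step~1 quartic is nonnegative with its minimum at $\omega^*$.
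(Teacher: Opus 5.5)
\textbf{There is a genuine gap: the step from ``some $(\mu,\rho)$ for each frequency'' to ``one $(\mu,\rho)$ for all frequencies'' is asserted, not proved.}

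Your reduction to $\Omega_{\rm pid}^{\rm lin}\subseteq\Omega_{\rm pid}^{\rm cq}$ and your Steps~1--2 are sound in substance. The Step~1 expression is the paper's $H(t)=b(1-\mu)\bigl[\Delta_{\mu,\rho}t^2+(\Pi_{\mu,\rho}-2k_ik_d)t+k_i^2\bigr]$ with $t=\omega^2$. Note that this is a quadratic in $t$, not a quartic, and the equivalence needs strict positivity also as $t\to\infty$, i.e.\ $\Delta_{\mu,\rho}>0$.

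The gap is in Step~3, which carries the whole content of the theorem. After Sion, you must show that
$\Phi(m):=bX-(L_1\sqrt{m_1}+L_2\sqrt{m_2})\sqrt{X}-k_pm_2$, with $X:=k_i^2m_0+(k_p^2-2k_ik_d)m_1+k_d^2m_2$,
is positive on the whole moment cone $\{m_1^2\le m_0m_2\}$ of frequency measures. The linear-subclass condition gives positivity only on the rays $(1,t,t^2)$.

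\begin{itemize}
\item $\Phi$ is a supremum of linear functionals of $m$, hence convex, and a convex function need not attain its minimum at extreme points. The remark that ``the quartic has a single interior minimum'' does not address this.
\item ``At most two frequencies'' would not help either, because the linear subclass controls single frequencies only.
\item Strict positivity at the point at infinity, $bk_d^2-L_2k_d-k_p>0$, is not a compactification formality. It follows from $F(t)=b\chi-k_pt^2-(L_1\sqrt t+L_2t)\sqrt\chi>0$ only through the $-L_1k_dt^{3/2}$ term in its large-$t$ expansion, and so uses $L_1>0$.
\end{itemize}

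Your route can be closed as follows, but this argument is exactly what is missing. If $\Phi(m)\le0$, then $\sqrt X$ lies below the positive root of $bs^2-(L_1\sqrt{m_1}+L_2\sqrt{m_2})s-k_pm_2$. Lowering $m_0$ to $m_1^2/m_2$ decreases $X$, since its $m_0$-coefficient is $k_i^2>0$, and so keeps $\Phi\le0$. This lands on a point mass or on the point at infinity, contradicting the linear-subclass condition.

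Your fallback is closer to the paper, but it lacks the device that makes tangency work. At the original, strictly feasible $k_d$, tangency at a minimizer $t^*$ of $F$ gives only $H(t^*)=F(t^*)>0$ and $H'(t^*)=0$. Since $H''=F''(t^*)-(F-H)''(t^*)$ with both terms nonnegative, the sign of the leading coefficient $b(1-\mu)\Delta_{\mu,\rho}$ is not controlled, and $t^*$ may be a maximum of $H$.

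The paper avoids this as follows:
\begin{enumerate}
\item It lowers $k_d$ to the boundary value $k_d^\partial=D_*=\sup_{t>t_c}\psi(t)$, the infimum of admissible derivative gains.
\item It proves, via the asymptotics at $t_c$ and at $\infty$, that this supremum is attained at an interior $t_*$. Hence $F_\partial\ge0$ with $F_\partial(t_*)=F_\partial'(t_*)=0$.
\item It chooses $(\mu_*,\rho_*)$ so that both squares in $F_\partial-H_\partial$ vanish at $t_*$. Then $H_\partial(t_*)=H_\partial'(t_*)=0$.
\item Because $H_\partial$ has degree at most two and constant term $b(1-\mu_*)k_i^2>0$, it must equal $b(1-\mu_*)k_i^2t_*^{-2}(t-t_*)^2$. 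This gives equality in the integral-gain bound at $k_d^\partial$.
\item Strict monotonicity of $r\mapsto r-\sqrt{r^2-A_*}$, where $\Delta_{\mu_*,\rho_*}=k_d^2-A_*$, then makes the bound strict at the original $k_d>D_*$, with the same $(\mu_*,\rho_*)$.
\end{enumerate}
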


The proof is deferred to
Section~\ref{sec:proof-main}.

\begin{remark}\normalfont
Theorem~\ref{thm:pid-region-coincidence}, together with the exact
characterizations of the two endpoint regions, is the \emph{central result}
of this paper and has three complementary consequences. First, it provides a
complete gain-only characterization of the system-level PID region
$\Omega_{\rm pid}$ in terms of the PID gains and the uncertainty bounds
$(L_1,L_2,b)$. Second, every gain that uniformly exponentially regulates the
full nonlinear uncertainty class admits a common quadratic certificate;
hence common-quadratic certification is lossless at the gain-region level.
Third, stabilizing every admissible linear closed-loop matrix is sufficient
for robust regulation of the full nonlinear class; hence the linear
subclass is a lossless outer test.
\end{remark}

\subsection{The endpoint gap in the scalar case}
The coincidence above is a genuinely multivariable phenomenon. To clarify
the scope of Theorem~\ref{thm:pid-region-coincidence}, we now turn to the
scalar case $m=1$. Both endpoint regions remain exactly characterized:
$\Omega_{\rm pid}^{\rm cq}$ by the four-vertex LMIs in
\eqref{eq:four-vertex-condition}, and
$\Omega_{\rm pid}^{\rm lin}$ by
\eqref{eq:pid-scalar-linear-region}. However, the inner--outer gap no longer
closes, as shown next.

\begin{proposition}\normalfont
\label{prop:pid-scalar-strict}
For $m=1$,
\begin{equation}
\label{eq:pid-scalar-strict-inclusion}
\Omega_{\rm pid}^{\rm cq}
\subsetneq
\Omega_{\rm pid}^{\rm lin}.
\end{equation}
\end{proposition}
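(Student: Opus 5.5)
The inclusion $\Omega_{\rm pid}^{\rm cq}\subseteq\Omega_{\rm pid}^{\rm lin}$ already follows from Proposition~\ref{prop:cq-certificate} and the chain $\Omega_{\rm pid}^{\rm cq}\subseteq\Omega_{\rm pid}\subseteq\Omega_{\rm pid}^{\rm lin}$. So the work is to exhibit, for the given $(L_1,L_2,b)$, one gain triple in \eqref{eq:pid-scalar-linear-region} for which the four-vertex condition \eqref{eq:four-vertex-condition} of Theorem~\ref{thm:main}(i) is infeasible.

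The key structural fact I will exploit is the constraint $P_0e_3=\kappa$. It fixes the third column of $P_0$, leaving only $P_{11},P_{12},P_{22}$ free. The vertices differ only by rank-one terms:
\[
\mathcal A^{\rm sc}_{+t}-\mathcal A^{\rm sc}_{-t}=2L_1e_3e_2^\top,\qquad
\mathcal A^{\rm sc}_{s+}-\mathcal A^{\rm sc}_{s-}=2L_2e_3e_3^\top.
\]
Hence for any complex $v$,
\[
v^*\bigl(P_0\mathcal A^{\rm sc}_{+t}+(\mathcal A^{\rm sc}_{+t})^\top P_0\bigr)v
=
v^*\bigl(P_0\mathcal A^{\rm sc}_{-t}+(\mathcal A^{\rm sc}_{-t})^\top P_0\bigr)v
+4L_1\Re\bigl(\overline{\kappa^\top v}\,v_2\bigr),
\]
and the correction term is independent of the free entries of $P_0$.

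Take the gain $g^\ast$ on the face $\bar k_1\bar k_2=\bar k_0$ of \eqref{eq:pid-scalar-linear-region}, where $\mathcal A^{\rm sc}_{--}$ has an eigenvalue $i\omega_0$ with $\omega_0^2=\bar k_1$ and eigenvector $v=(1,i\omega_0,-\omega_0^2)^\top$. Then
\[
v^*\bigl(P_0\mathcal A^{\rm sc}_{--}+(\mathcal A^{\rm sc}_{--})^\top P_0\bigr)v=0
\]
for every $P_0$. A direct computation gives
\[
\Re\bigl(\overline{\kappa^\top v}\,v_2\bigr)=k_p\omega_0^2>0,
\]
so the $(+,-)$ vertex form evaluated at $v$ equals $4L_1k_p\omega_0^2>0$ for every admissible $P_0$. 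Thus the vertex LMI fails at $g^\ast$ with a margin that does not depend on $P_0$.

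The difficulty is that $g^\ast$ itself is not in the open set $\Omega_{\rm pid}^{\rm lin}$. I must move to a nearby interior gain $g_\varepsilon$ with $\bar k_1\bar k_2=\bar k_0+\varepsilon$. There the eigenvalue becomes $\lambda_\varepsilon$ with $\Re\lambda_\varepsilon=O(-\varepsilon)$, and the $(-,-)$ form evaluated at $v_\varepsilon$ equals $2\Re\lambda_\varepsilon\,v_\varepsilon^*P_0v_\varepsilon$, which a large $P_0$ could make very negative. This step is the main obstacle. My first attempt is a dual (Farkas/S-procedure) certificate. I will look for positive-semidefinite weights $W_{st}$, built from rank-one terms $\Re(v v^*)$ with $v$ eigenvectors of the nearby vertices or simple real test vectors, such that
\[
\sum_{s,t}\operatorname{tr}\Bigl(W_{st}\bigl(P_0\mathcal A^{\rm sc}_{st}+(\mathcal A^{\rm sc}_{st})^\top P_0\bigr)\Bigr)
\]
has zero coefficients in $P_{11},P_{12},P_{22}$ and a strictly positive constant term. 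Since only three free entries must be cancelled and four vertices are available, such weights should exist near $g^\ast$ by continuity of the boundary-point computation: at $g^\ast$ the single weight $vv^*$ on the $(+,-)$ vertex already cancels the free entries and leaves the positive constant $4L_1k_p\omega_0^2$.

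As a fallback, I would argue by contradiction through compactness. Suppose $g_\varepsilon\in\Omega_{\rm pid}^{\rm cq}$ with certificates $P_\varepsilon$ satisfying $P_\varepsilon e_3=\kappa_\varepsilon$. I would then show that $P_\varepsilon$ stays bounded, for example by testing the $(\pm,\pm)$ LMIs on $e_1$ and $e_2$, which controls the free diagonal entries through the fixed third column. Passing to a limit gives a non-strict vertex LMI at $g^\ast$, contradicting the strictly positive value $4L_1k_p\omega_0^2$ found above. In either version the conclusion is a gain $g_\varepsilon\in\Omega_{\rm pid}^{\rm lin}\setminus\Omega_{\rm pid}^{\rm cq}$, which proves \eqref{eq:pid-scalar-strict-inclusion}.
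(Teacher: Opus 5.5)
Your inclusion argument is fine. The strictness argument, however, rests on a sign error, and once the error is corrected the route you propose no longer produces a contradiction.

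\textbf{Which vertex is marginal.} With the convention \eqref{eq:scalar-vertices}, the vertex $\mathcal A^{\rm sc}_{st}$ has characteristic polynomial $s^3+(bk_d-tL_2)s^2+(bk_p-sL_1)s+bk_i$. On the face $\bar k_1\bar k_2=\bar k_0$, it is therefore $\mathcal A^{\rm sc}_{++}$, not $\mathcal A^{\rm sc}_{--}$, that has the eigenvalue $\mathrm i\omega_0$ with $\omega_0^2=\bar k_1$. The $(-,-)$ vertex has polynomial $s^3+(bk_d+L_2)s^2+(bk_p+L_1)s+bk_i$, which is Hurwitz on that face.

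\textbf{Why the sign kills the argument.} Every admissible neighbour of $\mathcal A^{\rm sc}_{++}$ is obtained by \emph{subtracting} $2L_1e_3e_2^\top$ or $2L_2e_3e_3^\top$. Write $M_{st}:=P_0\mathcal A^{\rm sc}_{st}+(\mathcal A^{\rm sc}_{st})^\top P_0$. Your own rank-one identity, together with $bk_i=\omega_0^2(bk_d-L_2)$ on the face, then gives
\[
v^*M_{++}v=0,\qquad v^*M_{-+}v=-4L_1k_p\omega_0^2<0,\qquad v^*M_{+-}v=-\frac{4L_2^2\omega_0^4}{b}<0 .
\]
These values are fully compatible with the vertex LMIs. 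The only thing $v$ detects is $v^*M_{++}v=0$, which merely restates $g^\ast\notin\Omega_{\rm pid}^{\rm lin}$. That obstruction disappears once you move into the interior.

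\textbf{Consequences for both routes.}
\begin{itemize}
\item In the Farkas route, the weight $vv^*$ placed on the marginal vertex leaves the constant $0$, not a positive one.
\item The compactness fallback has nothing to contradict in the limit.
\item The boundedness step is also unsupported. One has $e_1^\top M_{st}e_1=-2bk_i^2$, which contains no free entry. Likewise $e_2^\top M_{st}e_2=2p_{12}+2k_p(sL_1-bk_p)$ only bounds $p_{12}$ from above. Neither test controls $p_{11}$ or $p_{22}$.
\end{itemize}

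\textbf{The missing idea.} You correctly observed that $P_0e_3=\kappa$ freezes the third column of $P_0$. Use that column itself as the test vector at the $(+,+)$ vertex:
\[
e_3^\top M_{++}e_3=2\kappa^\top\mathcal A^{\rm sc}_{++}e_3=2\bigl[k_p-k_d(bk_d-L_2)\bigr].
\]
This expression contains no free entry of $P_0$. So every gain in $\Omega_{\rm pid}^{\rm cq}$ must satisfy $k_d(bk_d-L_2)>k_p$, a gain-only and $P_0$-independent necessary condition.

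This condition fails at interior points of $\Omega_{\rm pid}^{\rm lin}$. Take $bk_p=L_1+\varepsilon_p$, $bk_d=L_2+\varepsilon_d$, $bk_i=\varepsilon_p\varepsilon_d/2$, with $\varepsilon_d(L_2+\varepsilon_d)<L_1+\varepsilon_p$. Then $\bar k_1=\varepsilon_p>0$ and $\bar k_1\bar k_2=\varepsilon_p\varepsilon_d>\bar k_0$, so the gain lies in $\Omega_{\rm pid}^{\rm lin}$, while $k_d(bk_d-L_2)<k_p$. This is the paper's argument. It needs no boundary point, no eigenvector, no perturbation and no compactness.
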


\begin{proof}
The inclusion
$\Omega_{\rm pid}^{\rm cq}\subseteq\Omega_{\rm pid}^{\rm lin}$
follows directly from the definitions, since a common quadratic certificate
makes every admissible linear closed-loop matrix Hurwitz.

To prove strictness, suppose that
$(k_p,k_i,k_d)\in\Omega_{\rm pid}^{\rm cq}$. By
Theorem~\ref{thm:main}, there exists $P_0=P_0^\top\succ0$ satisfying
\eqref{eq:four-vertex-condition}. In particular, with
\[
Q_{++}:=
-P_0\mathcal A_{++}^{\rm sc}
-(\mathcal A_{++}^{\rm sc})^\top P_0\succ0,
\]
the normalization $P_0e_3=(k_i,k_p,k_d)^\top$ gives
\begin{align}
0
&<e_3^\top Q_{++}e_3 \notag\\
&=-2(P_0e_3)^\top\mathcal A_{++}^{\rm sc}e_3
 =2\bigl[k_d(bk_d-L_2)-k_p\bigr].
\label{eq:scalar-cq-necessary}
\end{align}
Hence every gain in $\Omega_{\rm pid}^{\rm cq}$ necessarily satisfies
\[
k_d(bk_d-L_2)>k_p.
\]

We now construct a gain in $\Omega_{\rm pid}^{\rm lin}$ that violates this
condition. Choose $\varepsilon_p,\varepsilon_d>0$ such that
\(
\varepsilon_d(L_2+\varepsilon_d)
<
L_1+\varepsilon_p,
\)
and set
\[
k_p:=\frac{L_1+\varepsilon_p}{b},
\qquad
k_d:=\frac{L_2+\varepsilon_d}{b},
\qquad
k_i:=\frac{\varepsilon_p\varepsilon_d}{2b}.
\]
Then
\(
\bar k_1=\varepsilon_p,\
\bar k_2=\varepsilon_d,\
\bar k_0=\frac{\varepsilon_p\varepsilon_d}{2},
\)
so that
\[
\bar k_1>0,
\qquad
\bar k_1\bar k_2>\bar k_0.
\]
Proposition~\ref{prop:pid-linear-outer} therefore gives
$(k_p,k_i,k_d)\in\Omega_{\rm pid}^{\rm lin}$. On the other hand,
\[
bk_d(bk_d-L_2)
=\varepsilon_d(L_2+\varepsilon_d)
<
L_1+\varepsilon_p
=bk_p,
\]
which contradicts the necessary condition
\eqref{eq:scalar-cq-necessary}. Thus this gain does not belong to
$\Omega_{\rm pid}^{\rm cq}$, proving the proper inclusion.
\end{proof}

\begin{remark}\normalfont
\label{rem:pid-region-relations}
Proposition~\ref{prop:pid-scalar-strict}, together with the general
inner--outer inclusions, gives
\[
\Omega_{\rm pid}^{\rm cq}
\subseteq
\Omega_{\rm pid}
\subseteq
\Omega_{\rm pid}^{\rm lin},
\qquad m=1.
\]
Thus the two exact endpoints bracket, but do not determine, the scalar system-level PID region.
\end{remark}

\subsection{Strict improvement and sequential PID tuning}

We now compare the exact system-level region obtained above with the
sufficient region derived in Zhao and Guo~\cite{ZhaoGuo2022}. Their region, denoted
by $\Omega_{\rm pid}^{\rm ZG}$, consists of all
$(k_p,k_i,k_d)\in\mathbb R_{>0}^3$ satisfying
\begin{equation}
\label{eq:ZG}
k_p^2>2k_i k_d+\bar k,
\quad
k_d^2>k_p/b+\bar k,
\end{equation}
where
\(
\bar k:=(L_1+L_2)(k_p+k_d)/b.
\)

\begin{proposition}\normalfont
\label{prop:strict}
For every $m\ge2$,
\begin{equation}
\label{eq:strict-inclusion}
\Omega_{\rm pid}^{\rm ZG}
\subsetneq
\Omega_{\rm pid}.
\end{equation}
Moreover, for every $c\in(0,b)$, there exists $T_c>0$ such that
\begin{equation}
\label{eq:high-gain-family}
(t,ct^2,t)
\in
\Omega_{\rm pid}\setminus\Omega_{\rm pid}^{\rm ZG},
\qquad t\ge T_c.
\end{equation}
\end{proposition}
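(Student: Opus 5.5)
The plan is to prove the inclusion by citing \cite{ZhaoGuo2022} and the two strictness claims by direct computation with the explicit inner region of Theorem~\ref{thm:main}ii).

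\textbf{Inclusion.} The inclusion $\Omega_{\rm pid}^{\rm ZG}\subseteq\Omega_{\rm pid}$ is, as already noted before Definition~\ref{def:cq-pid-region}, precisely the sufficiency result of \cite{ZhaoGuo2022}. That paper constructs a particular common quadratic Lyapunov function for every gain satisfying \eqref{eq:ZG}. Hence $\Omega_{\rm pid}^{\rm ZG}\subseteq\Omega_{\rm pid}^{\rm cq}\subseteq\Omega_{\rm pid}$ by Proposition~\ref{prop:cq-certificate}. A self-contained alternative would be to exhibit explicit $(\mu,\rho)$ such that \eqref{eq:ZG} implies the three inequalities in \eqref{eq:exact-cq-region}. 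I would only attempt this if the citation is judged insufficient.

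\textbf{The family lies outside $\Omega_{\rm pid}^{\rm ZG}$.} Substitute $k_p=k_d=t$ and $k_i=ct^2$ into the first inequality of \eqref{eq:ZG}. It then requires $t^2>2ct^3+\bar k$, which already fails once $t\ge 1/(2c)$ because $\bar k>0$.

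\textbf{The family lies in $\Omega_{\rm pid}$.} By Theorem~\ref{thm:pid-region-coincidence} it suffices to show membership in $\Omega_{\rm pid}^{\rm cq}$, using the description \eqref{eq:exact-cq-region}. Fix $\rho=1/2$ and choose $\mu\in(0,1)$ small enough that $c<b(1-\mu)$; this is possible because $c<b$. Write
\[
C_1:=\frac{L_1^2}{4b^2\mu\rho(1-\mu)},\qquad
C_2:=\frac{L_2^2}{4b^2\mu(1-\rho)(1-\mu)},\qquad
\beta:=\frac{1}{b(1-\mu)}.
\]
Then $\Pi=t^2-C_1$ and $\Delta=t^2-\beta t-C_2$, and both are positive for large $t$. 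Rationalizing the denominator gives the key identity
\[
\frac{\Pi}{2(k_d-\sqrt\Delta)}
=\frac{\Pi\,(k_d+\sqrt\Delta)}{2(k_d^2-\Delta)}
=\frac{(t^2-C_1)\bigl(t+\sqrt{t^2-\beta t-C_2}\bigr)}{2(\beta t+C_2)}.
\]
Dividing by $t^2$ shows that this expression equals $t^2\bigl(\beta^{-1}+o(1)\bigr)=t^2\bigl(b(1-\mu)+o(1)\bigr)$ as $t\to\infty$. Since $c<b(1-\mu)$, the third inequality $ct^2<\Pi/(2(k_d-\sqrt\Delta))$ holds for all $t$ beyond some threshold $T'$. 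Setting $T_c:=\max\{T',1/(2c)\}$ yields \eqref{eq:high-gain-family}.

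\textbf{Strictness of \eqref{eq:strict-inclusion}.} Strict inclusion follows immediately, since any member of the family \eqref{eq:high-gain-family} is a witness in $\Omega_{\rm pid}\setminus\Omega_{\rm pid}^{\rm ZG}$.

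\textbf{Main obstacle.} The delicate step is the asymptotic comparison in the third inequality. Here the rationalization identity is essential: it converts the small difference $k_d-\sqrt\Delta\approx\beta/2$ into a clean ratio. It also exposes why the threshold is exactly $c<b$, namely through the freedom to send $\mu\to0$. Making the threshold $T'$ explicit is routine once that identity is in hand.
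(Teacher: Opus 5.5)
Your proposal is correct and follows essentially the same route as the paper. The inclusion comes from the Zhao--Guo common quadratic certificate together with Proposition~\ref{prop:cq-certificate}, and the high-gain family is placed in the explicit region \eqref{eq:exact-cq-region} via the rationalization $2(t-\sqrt{\Delta})=2(\beta t+C_2)/(t+\sqrt{\Delta})\to\beta$ with $c<b(1-\mu)$, while failure of the first inequality in \eqref{eq:ZG} excludes it from $\Omega_{\rm pid}^{\rm ZG}$. Your departures (fixing $\rho=1/2$, the explicit threshold $t\ge 1/(2c)$) are cosmetic, and invoking Theorem~\ref{thm:pid-region-coincidence} for the family is harmless but unnecessary, since Proposition~\ref{prop:cq-certificate} already gives membership in $\Omega_{\rm pid}$.
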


\begin{proof}
By Theorem~\ref{thm:pid-region-coincidence}, it suffices to prove
\(
\Omega_{\rm pid}^{\rm ZG}
\subsetneq
\Omega_{\rm pid}^{\rm cq}
\)
and establish the corresponding high-gain claim for
$\Omega_{\rm pid}^{\rm cq}$.

\emph{Inclusion.}
For every gain triple in $\Omega_{\rm pid}^{\rm ZG}$,
Zhao and Guo~\cite[Prop.~4.2]{ZhaoGuo2022} construct the common quadratic
certificate
\begin{equation*}
P_{\rm ZG}^{(m)}:=
\begin{bmatrix}
2k_ik_pbI_m&2k_ik_dbI_m&k_iI_m\\
2k_ik_dbI_m&(2k_pk_db-k_i)I_m&k_pI_m\\
k_iI_m&k_pI_m&k_dI_m
\end{bmatrix}.
\end{equation*}
Hence,
\(
\Omega_{\rm pid}^{\rm ZG}
\subseteq
\Omega_{\rm pid}^{\rm cq}.
\)

\emph{Strictness.}
Fix $c\in(0,b)$, choose $0<\mu<1-c/b$ and $\rho\in(0,1)$, and set
\begin{equation*}
\begin{aligned}
a_\mu&:=\frac{1}{b(1-\mu)},\qquad
C_1:=\frac{L_1^2}{4b^2\mu\rho(1-\mu)},\\
C_2&:=\frac{L_2^2}{4b^2\mu(1-\rho)(1-\mu)}.
\end{aligned}
\end{equation*}
For $(k_p,k_i,k_d)=(t,ct^2,t)$, define
\[
\Pi(t):=t^2-C_1,
\qquad
\Delta(t):=t^2-a_\mu t-C_2.
\]
Both quantities are positive for all sufficiently large $t$. The admissible
upper bound on $k_i$ in \eqref{eq:exact-cq-region} is
\[
K_i(t):=
\frac{\Pi(t)}
{2\bigl(t-\sqrt{\Delta(t)}\bigr)}.
\]
Rationalization gives
\begin{align*}
2\bigl(t-\sqrt{\Delta(t)}\bigr)
&=\frac{2(a_\mu t+C_2)}
{t+\sqrt{\Delta(t)}}\longrightarrow a_\mu,\\
\frac{K_i(t)}{t^2}
&\longrightarrow\frac{1}{a_\mu}
=b(1-\mu)>c.
\end{align*}
Therefore, $ct^2<K_i(t)$ for all sufficiently large $t$, and hence
\[
(t,ct^2,t)\in\Omega_{\rm pid}^{\rm cq}.
\]

For the region $\Omega_{\rm pid}^{\rm ZG}$,
$\bar k=2(L_1+L_2)t/b$, and its first defining inequality would require
\(
t^2>2ct^3+\frac{2(L_1+L_2)t}{b},
\)
which fails for all sufficiently large $t$. Thus,
\[
(t,ct^2,t)
\in
\Omega_{\rm pid}^{\rm cq}\setminus\Omega_{\rm pid}^{\rm ZG}
\]
for all sufficiently large $t$. This proves
$\Omega_{\rm pid}^{\rm ZG}\subsetneq\Omega_{\rm pid}^{\rm cq}$ and the
high-gain claim at the certificate level. The equality
$\Omega_{\rm pid}^{\rm cq}=\Omega_{\rm pid}$ from
Theorem~\ref{thm:pid-region-coincidence} then yields
\eqref{eq:strict-inclusion} and \eqref{eq:high-gain-family}.
\end{proof}

\begin{remark}\normalfont
The threshold $c<b$ in \eqref{eq:high-gain-family} is sharp: along this family, it coincides with the Routh--Hurwitz condition for the admissible nominal system corresponding to $D_1=D_2=0$ and $\Theta=bI_m$.
\end{remark}

The strict enlargement in \eqref{eq:strict-inclusion} is illustrated in Figure~\ref{fig:pid-region-comparison} on the $(k_p,k_d)$ cross-section obtained by fixing $k_i=0.5$. For $L_1=L_2=5$ and $b=1$, the exact region $\Omega_{\rm pid}$ extends well beyond the existing sufficient region $\Omega_{\rm pid}^{\rm ZG}$, admitting a substantially broader range of proportional and derivative gains.
\begin{figure}[t]
\centering
\includegraphics[width=0.72\linewidth]
{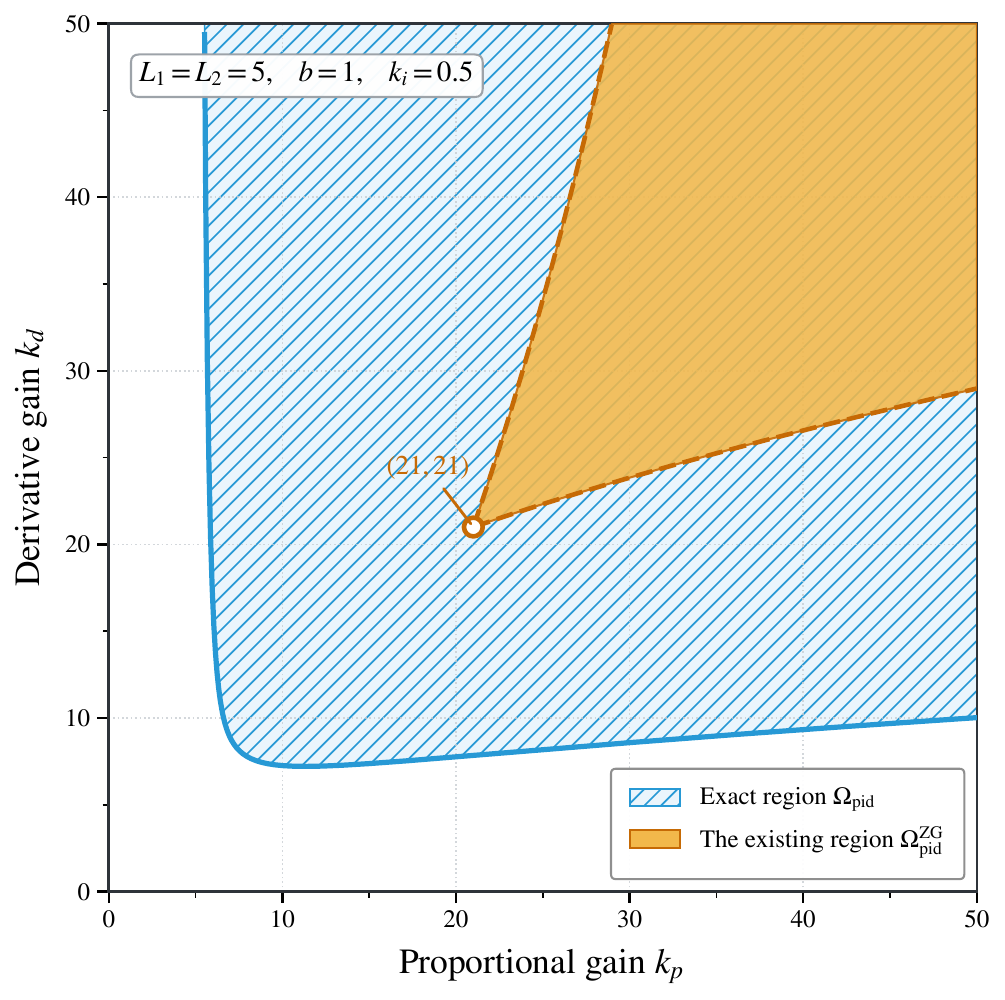}
\caption{Comparison of the PID gain regions for $L_1=L_2=5$, $b=1$, and $k_i=0.5$. The orange area represents the sufficient region $\Omega_{\rm pid}^{\rm ZG}$ derived in \cite{ZhaoGuo2022}, whereas the blue hatched area represents the exact region $\Omega_{\rm pid}$ established in this paper. The open circle at $(21,21)$ marks the intersection of the two boundary curves of the existing region.}
\label{fig:pid-region-comparison}
\end{figure}

The exact description of $\Omega_{\rm pid}^{\rm cq}$ involves a two-dimensional
search over $(\mu,\rho)$.  The balanced choice $\mu=\rho=1/2$ sacrifices some
flexibility in exchange for a transparent sequential tuning rule.

\begin{corollary}[A Simple Sequential PID Tuning Rule]
\label{cor:balanced}
 The PID gains can
be selected sequentially as follows:
\begin{enumerate}
\renewcommand{\labelenumi}{\roman{enumi})}

\item Choose
\[
k_p>\frac{\sqrt{2}\,L_1}{b}.
\]

\item Having fixed $k_p$, choose
\[
k_d>
\sqrt{\frac{2k_p}{b}+\frac{2L_2^2}{b^2}}.
\]

\item Define
\(
\Delta_*:=
k_d^2-\frac{2k_p}{b}-\frac{2L_2^2}{b^2},
\)
and choose
\[
0<k_i<
\frac{k_p^2-2L_1^2/b^2}
{2\bigl(k_d-\sqrt{\Delta_*}\bigr)}.
\]
\end{enumerate}
Every gain triple obtained by this procedure belongs to
$\Omega_{\rm pid}^{\rm cq}$ for every $m\ge2$.
\end{corollary}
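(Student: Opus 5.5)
The corollary is the case $\mu=\rho=1/2$ of Theorem~\ref{thm:main}\,ii). I will check that, with this choice, the three sequential conditions are exactly the defining inequalities of one set in the union \eqref{eq:exact-cq-region}.

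\emph{Evaluate $\Pi$ and $\Delta$.} Setting $\mu=\rho=1/2$ in \eqref{eq:Pi}--\eqref{eq:Delta} gives
\[
\mu\rho(1-\mu)=\mu(1-\rho)(1-\mu)=\tfrac18,\qquad \frac{1}{1-\mu}=2 .
\]
Hence
\[
\Pi_{1/2,1/2}=k_p^2-\frac{2L_1^2}{b^2},\qquad
\Delta_{1/2,1/2}=k_d^2-\frac{2k_p}{b}-\frac{2L_2^2}{b^2}=\Delta_* .
\]

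\emph{Match the three steps to the three inequalities.}
\begin{enumerate}
\renewcommand{\labelenumi}{\roman{enumi})}
\item Step i) says $k_p>\sqrt2L_1/b>0$. Since $k_p>0$, this is equivalent to $\Pi_{1/2,1/2}>0$.
\item Step ii) says $k_d$ exceeds a positive square root. So $k_d>0$, and squaring gives $\Delta_*>0$.
\item Step iii) is literally the inequality $0<k_i<\Pi/\bigl(2(k_d-\sqrt{\Delta})\bigr)$ from \eqref{eq:exact-cq-region} at $(\mu,\rho)=(1/2,1/2)$.
\end{enumerate}
For step iii) to make sense, the interval for $k_i$ must be nonempty. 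Because $0<\Delta_*<k_d^2$ (the subtracted terms are positive), we get $0<\sqrt{\Delta_*}<k_d$. So the denominator is positive. The numerator is positive by step i), so the upper bound on $k_i$ is positive.

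\emph{Conclude.} The triple $(k_p,k_i,k_d)$ lies in $\mathbb R^3_{>0}$ and satisfies all three inequalities for $(\mu,\rho)=(1/2,1/2)\in(0,1)^2$. It therefore belongs to the union in \eqref{eq:exact-cq-region}, which Theorem~\ref{thm:main}\,ii) identifies with $\Omega_{\rm pid}^{\rm cq}$ for every $m\ge2$.

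The argument is pure substitution. The only point needing care is the arithmetic of the constants, in particular $4b^2\cdot\tfrac18=b^2/2$, which gives the factor $2$ in front of $L_1^2/b^2$ and $L_2^2/b^2$. By Theorem~\ref{thm:pid-region-coincidence}, the resulting gains also lie in $\Omega_{\rm pid}$.
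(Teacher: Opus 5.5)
Your proposal is correct and follows the paper's proof exactly: you substitute $\mu=\rho=1/2$ into \eqref{eq:Pi}--\eqref{eq:Delta}, obtain $\Pi_{1/2,1/2}=k_p^2-2L_1^2/b^2$ and $\Delta_{1/2,1/2}=\Delta_*$, and read the three steps as the defining inequalities of one set in the union \eqref{eq:exact-cq-region}. Your extra checks are accurate details that the paper leaves implicit: positivity of the gains, and $0<\sqrt{\Delta_*}<k_d$, which makes the denominator positive and the $k_i$-interval nonempty.
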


\begin{proof}
Set $\mu=\rho=1/2$ in
\eqref{eq:Pi}, \eqref{eq:Delta}, and \eqref{eq:exact-cq-region}. Then
\begin{equation*}
\Pi_{1/2,1/2}=k_p^2-\frac{2L_1^2}{b^2},\qquad
\Delta_{1/2,1/2}=\Delta_*.
\end{equation*}
The three conditions  are exactly the resulting
positivity and integral-gain inequalities.
\end{proof}

\section{PI control of first-order non-affine MIMO systems}
\label{sec:pi}
The PID result above achieves system-level exactness by closing an
inner--outer gap. The first-order PI class exhibits an analogous
exactness phenomenon, but through a different mechanism: necessity is
captured by a boundary-linear test family, whereas sufficiency follows
from a common quadratic certificate. We first define the corresponding
system-level and common-quadratic PI regions, then establish the
boundary-linear equivalence and the exact gain condition, and finally
compare the result with the existing sufficient region and discuss the
scalar case.

\subsection{Problem formulation and the system-level PI gain region}

Let $m\ge1$ and consider the first-order non-affine system
\begin{equation}\label{sys:first-order}
\dot x(t)=f(x(t),u(t)),\qquad x(t),u(t)\in\R^m,
\end{equation}
where, for prescribed constants $L\ge0$ and $b>0$,
$\mathcal F_{L,b}$ consists of all
$f\in C^1(\R^{2m};\R^m)$ satisfying, for every $(x,u)$,
\begin{equation*}
\begin{aligned}
\big\|\frac{\partial f}{\partial x}\big\|\le L,\qquad
\Sym\!\big(\frac{\partial f}{\partial u}\big)\succeq bI_m.
\end{aligned}
\end{equation*}
Here $\partial f/\partial x$ and $\partial f/\partial u$ are again
$m\times m$ Jacobian matrices, with entries
$\partial f_r/\partial x_s$ and $\partial f_r/\partial u_s$, respectively;
the norm is the induced $2$-norm specified in the notation paragraph.
For a constant reference $y^*\in\R^m$, apply the PI controller
\begin{equation}
\label{eq:pi-controller}
\begin{aligned}
u(t)=k_pe(t)+k_i\int_0^t e(s)\,ds,\qquad
e(t)=y^*-x(t).
\end{aligned}
\end{equation}
 Set $\eta(t):=\int_0^t e(s)\,ds$, so that
$\dot\eta=e$ and $\eta(0)=0$.

For given $f\in\mathcal F_{L,b}$ and $y^*\in\mathbb R^m$, the map
$u\mapsto f(y^*,u)$ is strongly monotone and coercive, and therefore admits
a unique zero $u^*\in\mathbb R^m$. For any gain pair with $k_i\ne0$, the
 augmented closed-loop equilibrium is
$(x,\eta)=(y^*,u^*/k_i)$.

We distinguish the trajectory-defined system-level PI region from its
common-quadratic inner region. The following definitions introduce
these two regions.

\begin{definition}[System-level robust PI gain region]
\label{def:pi-uniform-region}
A gain pair $(k_p,k_i)\in\mathbb R^2$ with $k_i\ne0$ belongs to
$\Omega_{\rm pi}$ if there exist constants $M\ge1$ and $\lambda>0$,
depending only on the gains and the uncertainty bounds $(L,b)$, such that,
for every $f\in\mathcal F_{L,b}$, every constant reference, and every
initial condition, the closed-loop solution is forward
complete and satisfies
\begin{equation}
\label{eq:pi-uniform-estimate}
\norm{z(t)}
\le
Me^{-\lambda t}\norm{z(0)},
\qquad t\ge0,
\end{equation}
where
\[
z(t):=
\operatorname{col}\!\left(
\eta(t)-u^*/k_i,\,e(t)
\right),
\qquad
\eta(t):=\textstyle\int_0^t e(s)\,ds.
\]
The set $\Omega_{\rm pi}$ is called the
\emph{system-level robust PI gain region}.
\end{definition}

To compare common-quadratic
certification with system-level robustness, define
\[
\mathcal A_{\rm PI}(D,\Theta):=
\begin{bmatrix}
0&I_m\\
-k_i\Theta&D-k_p\Theta
\end{bmatrix},
\qquad
D,\Theta\in\mathbb R^{m\times m}.
\]

\begin{definition}[Common-quadratic PI gain region]
\label{def:pi-common-quadratic}
A gain pair $(k_p,k_i)\in\mathbb R^2$ is called
\emph{common-quadratically admissible} if there exist a matrix
$P=P^\top\succ0$ in $\mathbb R^{2m\times2m}$ and a constant $\alpha>0$
such that
\begin{equation*}
P\mathcal A_{\rm PI}(D,\Theta)
+\mathcal A_{\rm PI}(D,\Theta)^\top P
\preceq-\alpha I_{2m}
\end{equation*}
for any $D,\Theta\in\mathbb R^{m\times m}$ satisfying
\(
\norm D\le L, \
\Sym(\Theta)\succeq bI_m.
\)
Any such $P$ is called a \emph{common quadratic certificate}. The set of
all common-quadratically admissible PI gains is denoted by
$\Omega_{\rm pi}^{\rm cq}$ and called the
\emph{common-quadratic PI gain region}.
\end{definition}

The same state-dependent Jacobian argument used for
Proposition~\ref{prop:cq-certificate} shows that every common quadratic
certificate yields uniform exponential regulation over
$\mathcal F_{L,b}$. Hence,
 one can obtain the relationship
\(
\Omega_{\rm pi}^{\rm cq}
\subseteq
\Omega_{\rm pi}.
\)

\subsection{Boundary-linear equivalence and the exact PI gain region}

To characterize $\Omega_{\rm pi}$ exactly, we construct a
boundary-linear test family whose state and symmetric input Jacobians
attain the uncertainty bounds while its skew-symmetric input component
becomes arbitrarily large.

For $m\ge2$, define
\[
J_2:=
\begin{bmatrix}
0&-1\\
1&0
\end{bmatrix},\quad
J_m:=
\begin{bmatrix}
J_2 & 0_{2\times(m-2)}\\
0_{(m-2)\times2} & 0_{m-2}
\end{bmatrix}.
\]
For $m=2$, the zero-dimensional blocks in this display are absent, so
$J_m=J_2$.
Let $\mathbb N_0:=\{0,1,2,\ldots\}$. We introduce the countable
boundary-linear family
\begin{equation}
\label{eq:pi-boundary-family}
\begin{aligned}
\mathcal F_{\partial}^{m}
&:=\{f_\nu:\nu\in\mathbb N_0\},\\
f_\nu(x,u)&:=Lx+(bI_m+\nu J_m)u.
\end{aligned}
\end{equation}
Every member of $\mathcal F_{\partial}^{m}$ belongs to
$\mathcal F_{L,b}$ because
\[
\norm{\partial f_\nu/\partial x}=L,
\qquad
\Sym(\partial f_\nu/\partial u)=bI_m.
\]
Thus \eqref{eq:pi-boundary-family} is a subset of the nonlinear
uncertainty class $\mathcal F_{L,b}$ while retaining an unbounded skew-symmetric direction.

The inclusion
$\mathcal F_{\partial}^{m}\subseteq\mathcal F_{L,b}$ gives only one
direction; the nontrivial question is whether regulation of this
boundary family also recovers the full nonlinear gain region.

To compare regulation over the boundary-linear family with that over the
full nonlinear class, we introduce the following terminology. Consider
system~\eqref{sys:first-order} under the PI controller with
$(k_p,k_i)\in\mathbb R^2$ and $k_i\ne0$. For any subclass
$\mathcal G\subseteq\mathcal F_{L,b}$, the PI controller is said to
\emph{uniformly exponentially regulate} $\mathcal G$ if there exist
constants $M\ge1$ and $\lambda>0$, independent of the particular
$f\in\mathcal G$, reference, and initial condition, such that
\eqref{eq:pi-uniform-estimate} holds.

\begin{theorem}[Boundary-linear equivalence and exact PI region]
\label{thm:pi-exact-uniform}
Suppose $m\ge2$. Then the following statements are equivalent:
\begin{itemize}
\item[\rm i)] The PI controller uniformly exponentially regulates the full
nonlinear class $\mathcal F_{L,b}$.

\item[\rm ii)] The PI controller uniformly exponentially regulates the
boundary-linear family $\mathcal F_{\partial}^{m}$.

\item[\rm iii)] The gains satisfy
\begin{equation}
\label{eq:pi-exact-region}
k_p>0,\qquad
k_i>0,\qquad
k_i+Lk_p<bk_p^2.
\end{equation}
\end{itemize}
\end{theorem}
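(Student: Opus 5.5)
The plan is to prove i)$\Rightarrow$ii)$\Rightarrow$iii)$\Rightarrow$i). The implication i)$\Rightarrow$ii) is immediate from $\mathcal F_\partial^m\subseteq\mathcal F_{L,b}$, since the constants $M,\lambda$ in Definition~\ref{def:pi-uniform-region} already serve every subclass.

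\emph{ii)$\Rightarrow$iii).} Take $y^*=0$, so $u^*=0$. For $f_\nu$ the augmented error dynamics are linear, $\dot z=\mathcal A_{\rm PI}(LI_m,bI_m+\nu J_m)z$. Arguing as for $\Omega_{\rm pid}\subseteq\Omega_{\rm pid}^{\rm lin}$, with $\mathcal S=\{0\}\times\mathbb R^m$ and $\mathcal S+\mathcal A\mathcal S=\mathbb R^{2m}$, the uniform estimate forces every eigenvalue of every such matrix to have real part at most $-\lambda$, with $\lambda$ independent of $\nu$.

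The block structure reduces this to scalar complex polynomials. On the first two coordinates, $bI+\nu J_2$ is diagonalized by $(1,\mp i)$ with eigenvalues $\theta=b\pm i\nu$. Hence the spectrum is the union of the roots of
\[
s^2+(k_p\theta-L)s+k_i\theta,\qquad \theta\in\{b,\ b\pm i\nu\}.
\]
For $\nu=0$, Routh--Hurwitz gives $bk_p>L\ge0$ and $k_i>0$, hence $k_p>0$.

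For $\nu>0$, I would use the Hurwitz criterion for a complex quadratic $s^2+as+c$: it is Hurwitz iff $\Re a>0$ and
\[
(\Re a)^2\Re c+\Re a\,\Im a\,\Im c-(\Im c)^2>0.
\]
This can be checked via the imaginary-axis crossing $s=i\omega$. With $a_r=bk_p-L$, this expression equals
\[
k_i\bigl[a_r^2b+\nu^2\bigl(a_rk_p-k_i\bigr)\bigr].
\]
Letting $\nu\to\infty$ forces $a_rk_p\ge k_i$, that is, $k_i+Lk_p\le bk_p^2$.

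The delicate point is excluding equality, which needs the uniformity in $\lambda$. If $k_i=a_rk_p$, I would expand the slow root for large $\nu$:
\[
s=-\frac{k_i\theta}{k_p\theta-L}+O(|\theta|^{-1})
=-\frac{k_i}{k_p}\Bigl(1+\frac{L}{k_p\theta}\Bigr)+\dots
\]
The leading real part does not vanish, so the decisive quantity must come from the next order. The cleaner route is to show that the product of the real parts of the two roots is proportional to the criterion expression above divided by a quantity of order $\nu^2$. Under equality this product stays bounded while the fast root has real part of order $-k_p b$ up to bounded corrections, and the intended conclusion is that the slow root's real part tends to $0$. I flag this as the step to verify carefully, since the leading-order expansion above does not by itself show it. If it holds, it contradicts a uniform $\lambda$, and strict inequality follows.

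\emph{iii)$\Rightarrow$i).} Here I would exhibit a common quadratic certificate, so that $(k_p,k_i)\in\Omega_{\rm pi}^{\rm cq}\subseteq\Omega_{\rm pi}$. The key is to avoid any cross term with the unbounded skew part of $\Theta$. Change variables to $z_0$ and the input deviation $q:=k_iz_0+k_pz_1$, so that $z_1=(q-k_iz_0)/k_p$. Then
\[
\dot z_0=\frac{q-k_iz_0}{k_p},\qquad
\dot q=(k_iI+k_pD)\frac{q-k_iz_0}{k_p}-k_p\Theta q.
\]
Set $V=\tfrac12\norm q^2+\tfrac\beta2\norm{z_0}^2$. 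Since $q^\top\Theta q\ge b\norm q^2$, the skew part of $\Theta$ drops out, and $\norm{k_iI+k_pD}\le k_i+k_pL$ bounds the remaining terms. This yields
\[
\dot V\le-\Bigl(bk_p-\frac{k_i+k_pL}{k_p}\Bigr)\norm q^2-\frac{\beta k_i}{k_p}\norm{z_0}^2+\Bigl(\frac{\beta}{k_p}+\frac{k_i(k_i+k_pL)}{k_p}\Bigr)\norm q\,\norm{z_0}.
\]
The $\norm q^2$ coefficient is negative exactly when $bk_p^2>k_i+Lk_p$.

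The remaining task is choosing $\beta>0$ so that this $2\times2$ form is negative definite. Its discriminant condition is $4\gamma\beta k_i/k_p>(\beta+k_i(k_i+k_pL))^2/k_p^2$, where $\gamma$ is the $\norm q^2$ margin. This quadratic-in-$\beta$ condition may fail for every $\beta$ when $\gamma$ is small. If so, I would add a cross term $\epsilon z_0^\top q$ to $V$. This term is harmless because $\Theta$ enters only as $-k_p\epsilon z_0^\top\Theta q$, which I expect to bound via a Schur complement. Alternatively, I would retain the exact structure rather than the crude norm bound. Tuning this certificate is the step I expect to be the main obstacle. The resulting $P$, pulled back to $z$ coordinates, gives the common quadratic inequality, and Proposition~\ref{prop:cq-certificate}'s argument then yields i).
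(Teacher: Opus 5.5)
\paragraph{Main gap: (iii)$\Rightarrow$(i).}
Your proposal has a genuine gap in (iii)$\Rightarrow$(i), which is the half that decides exactness. Your ansatz $V=\tfrac12\|q\|^2+\tfrac\beta2\|z_0\|^2$ is actually the right family. In your coordinates the paper's certificate $P_0\otimes I_m$, with $P_0=\begin{bmatrix}\gamma(2bk_p-L)&\gamma\\ \gamma&1\end{bmatrix}$ and $\gamma=k_i/k_p$, is a multiple of it with $\beta=k_i(2bk_p^2-Lk_p-k_i)$.

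What fails is the estimate. Bounding $q^\top Dq\le L\|q\|^2$ and $|q^\top Dz_0|\le L\|q\|\|z_0\|$ separately cannot reach the boundary. Put $N:=bk_p^2-Lk_p-k_i$. Even after cancelling $\tfrac{\beta}{k_p}z_0^\top q$ against $-\tfrac{k_i^2}{k_p}q^\top z_0$, your test reads $4k_iN\beta>(|\beta-k_i^2|+k_ik_pL)^2$. For $L>0$ this fails for every $\beta>0$ once $N<\min\{k_p^2L^2/(8k_i),\,k_pL/2\}$.

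Your fallback is not available either. Adding $\epsilon z_0^\top q$ creates $-\epsilon k_p z_0^\top\Theta q$, and $\Theta$ is unbounded in the admissible set. Already $\Theta=\theta I_m$ makes the $\Theta$-part of $\dot V$ equal to $-k_p\theta(\|q\|^2+\epsilon z_0^\top q)$, which is indefinite and unbounded as $\theta\to\infty$, so no Schur complement can absorb it. This is the PI analogue of Lemma~\ref{lem:automatic-normalization}: any certificate $P_0\otimes I_m$ must satisfy $P_0(0,1)^\top\propto(k_i,k_p)^\top$, which in your coordinates means exactly that there is no $z_0$--$q$ cross term.

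The missing idea is to keep $\Sym(D)$ on the diagonal coupled to $D$ off the diagonal. In $(z_0,z_1)$ coordinates the paper obtains $-\dot V\ge z^\top Q_b(D)z$, where $Q_b(D)$ has blocks $2bk_p\gamma^2I_m$, $\gamma(LI_m-D)$, and $2((bk_p-\gamma)I_m-\Sym(D))$. The inequality $\|(LI_m-D)v\|^2\le 2L(L-r_v)\|v\|^2$, with $r_v:=v^\top\Sym(D)v/\|v\|^2$, then makes the Schur complement at least $2\delta$, where $\delta=bk_p-\gamma-L>0$. The off-diagonal coupling is small exactly when $\Sym(D)$ approaches its worst value $LI_m$.

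\paragraph{Minor error: (ii)$\Rightarrow$(iii).}
This direction is essentially sound, and it is somewhat cleaner than the paper's explicit excitation of the fast mode from an initial state with $\eta(0)=0$. It rests on the uniform spectral bound $\Re s\le-\lambda$ and the complex Hurwitz test; your $E=k_i[(bk_p-L)^2b+\nu^2N]$ is correct.

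In the equality case, however, you have the roots reversed. The slow root tends to $-k_i/k_p$, whereas the fast root is $-\mathrm{i}k_p\nu-N/k_p+O(\nu^{-1})$, so it is the fast root that approaches the imaginary axis. Correspondingly, the product of the real parts tends to $0$, rather than merely staying bounded.

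The identity you need is $E=x_1x_2\,[(x_1+x_2)^2+(y_1-y_2)^2]$ for roots $x_j+\mathrm{i}y_j$. Here $x_1+x_2=-(bk_p-L)$ is fixed, $E$ is constant when $N=0$, and $(y_1-y_2)^2\sim k_p^2\nu^2$. Hence $\max_j x_j\to0$, which contradicts the uniform $\lambda$.
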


\begin{remark}\normalfont
Theorem~\ref{thm:pi-exact-uniform} reveals two distinct forms of
losslessness. First, the equivalence of statements~(i) and~(ii) shows that
the countable boundary-linear family $\mathcal F_{\partial}^{m}$ is a
\emph{lossless test family}: uniform exponential regulation over this family
recovers exactly the gain region for the full nonlinear uncertainty class.
This boundary-reduction principle is reminiscent of Kharitonov's theorem
for interval polynomials, although the present
equivalence concerns uniform exponential regulation of a nonlinear MIMO
class through a countable boundary family.
Second, the proof of (iii) $\Rightarrow $ (i) constructs a common quadratic
certificate for every gain pair satisfying the
condition in~(iii). Since condition~(iii) exactly characterizes
$\Omega_{\rm pi}$, this gives
\(
\Omega_{\rm pi}\subseteq\Omega_{\rm pi}^{\rm cq}.
\)
Together with the general reverse inclusion, we obtain
\begin{equation}
\label{eq:pi-cq-system-equality}
\Omega_{\rm pi}^{\rm cq}
=
\Omega_{\rm pi}
=
\left\{
(k_p,k_i)\in\mathbb R_{>0}^2:
k_i+Lk_p<bk_p^2
\right\}.
\end{equation}
Thus, losslessness occurs at two different levels:
$\mathcal F_{\partial}^{m}$ is lossless as a test family for the nonlinear
uncertainty class, while the common-quadratic class is lossless as a
certificate class for the system-level PI gain region.
\end{remark}

\begin{remark}\normalfont
For $m\ge2$, the exact condition in
Theorem~\ref{thm:pi-exact-uniform} improves the sufficient condition of
Zhao and Guo~\cite{ZhaoGuo2022} by removing the additional term
$L^2/(4b)$.  As illustrated in
Figure~\ref{fig:pi-region-comparison}, the admissible upper bound on $k_i$
is increased by $L^2/(4b)$ for each fixed $k_p$.
For $m=1$, the same common-quadratic sufficiency construction remains valid,
whereas the admissible linear system $f(x,u)=Lx+bu$ yields the necessary
condition $bk_p>L$. Consequently,
\begin{align*}
&\bigl\{(k_p,k_i)\in\mathbb R_{>0}^2:
k_i+Lk_p<bk_p^2\bigr\}
\subseteq
\Omega_{\rm pi}^{\rm cq}
\subseteq
\Omega_{\rm pi}\\
&\hspace{25mm}\subseteq
\bigl\{(k_p,k_i)\in\mathbb R_{>0}^2:
bk_p>L\bigr\}.
\end{align*}
The outer set is the exact pointwise robust PI region obtained
in~\cite{ZhaoGuo2022}. Since pointwise asymptotic regulation is weaker than
uniform exponential regulation, these inclusions bracket, but do not
determine, the scalar system-level PI gain region.
\end{remark}

\begin{figure}[t]
\centering
\includegraphics[width=0.72\linewidth]
{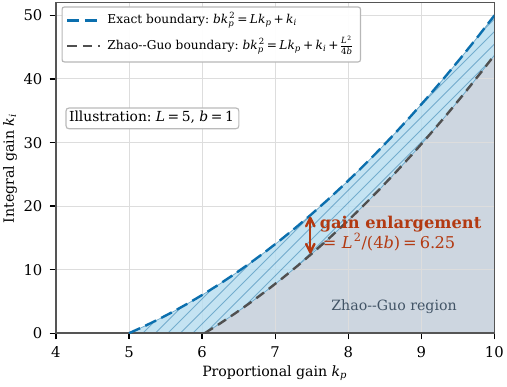}
\caption{Comparison of the PI gain regions for $L=5$ and $b=1$.
The gray area is derived in
\cite{ZhaoGuo2022}, whereas the hatched area represents the
additional gains admitted by the exact condition. The vertical
difference between the two boundaries is
$L^2/(4b)=6.25$.}
\label{fig:pi-region-comparison}
\end{figure}

\section{Scope and Limits of Common-Quadratic Losslessness}
\label{sec:cq-losslessness-counterexample}
The preceding PID and PI results show that, for the two nonlinear
uncertainty classes considered above and for $m\ge2$, common-quadratic
certification is lossless:
\[
\Omega_{\rm pid}^{\rm cq}=\Omega_{\rm pid},
\qquad
\Omega_{\rm pi}^{\rm cq}=\Omega_{\rm pi}.
\]
This losslessness is, however, specific to those uncertainty classes.
As the following example shows, even a one-parameter linear family over
a compact interval can have a system-level PID gain region that is
strictly larger than its common-quadratic counterpart.

Fix $L>0$ and consider the second-order linear family
\begin{equation}
\label{eq:cq-counterexample-plant}
\dot x_1=x_2,\qquad
\dot x_2=a x_2+u,
\end{equation}
where $x_1,x_2,u\in\mathbb R$ and $a\in[0,L]$ is an unknown but time-invariant parameter. Apply the PID controller
\eqref{eq:pid}. Since the equilibrium input associated with a constant
reference is zero, the augmented error state
$z=\operatorname{col}(\eta,e,\dot e)$ satisfies
\begin{equation}
\label{eq:cq-counterexample-matrix}
\dot z=A(a)z,
\qquad
A(a):=
\begin{bmatrix}
0&1&0\\
0&0&1\\
-k_i&-k_p&a-k_d
\end{bmatrix}.
\end{equation}
For this family, let $\mathcal R_{\rm sys}(L)$ denote the set of PID
gains for which the uniform exponential requirement in
Definition~\ref{def:pid-system-level} holds uniformly over all fixed
$a\in[0,L]$.
 Let
\[
\mathcal R_{\rm cq}(L)
:=
\left\{
(k_p,k_i,k_d):
\exists\,P\succ0\ \text{such that for all } a\in[0,L],\
\ PA(a)+A(a)^\top P\prec0
\right\}
\]
denote its common-quadratic PID gain region.

\begin{proposition}\normalfont
\label{prop:cq-losslessness-counterexample}
For every $L>0$,
\(
\mathcal R_{\rm cq}(L)
\subsetneq
\mathcal R_{\rm sys}(L).
\)
\end{proposition}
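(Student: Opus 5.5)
The plan has two parts: prove the inclusion $\mathcal R_{\rm cq}(L)\subseteq\mathcal R_{\rm sys}(L)$, then exhibit one gain triple that lies in $\mathcal R_{\rm sys}(L)$ but not in $\mathcal R_{\rm cq}(L)$.

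\emph{Inclusion.} Suppose $P\succ0$ satisfies $PA(a)+A(a)^\top P\prec0$ for every $a\in[0,L]$. The map $a\mapsto \lambda_{\max}(PA(a)+A(a)^\top P)$ is continuous on the compact interval $[0,L]$, so it attains a negative maximum $-\alpha$. The Lyapunov argument of Proposition~\ref{prop:cq-certificate} then gives constants $M,\lambda$ that are uniform in $a$, so $\mathcal R_{\rm cq}(L)\subseteq\mathcal R_{\rm sys}(L)$.

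\emph{Gains with every $A(a)$ Hurwitz are in $\mathcal R_{\rm sys}(L)$.} The characteristic polynomial of $A(a)$ is $s^3+(k_d-a)s^2+k_ps+k_i$. By Routh--Hurwitz, $A(a)$ is Hurwitz for all $a\in[0,L]$ if and only if
\[
k_p>0,\qquad k_i>0,\qquad k_d>L,\qquad (k_d-L)k_p>k_i .
\]
For such gains, the spectral abscissa of $A(a)$ is continuous in $a$, so on the compact interval it is bounded by some $-2\lambda<0$. A compactness argument on $\|e^{A(a)t}\|e^{\lambda t}$ then gives a uniform $M$, so these gains lie in $\mathcal R_{\rm sys}(L)$. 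The one point to check is uniformity of $M$. One way is to use the continuous solution $P(a)$ of $PA(a)+A(a)^\top P=-I$ and bound its eigenvalues uniformly over $[0,L]$.

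\emph{Failure of a common quadratic certificate.} Write $A(a)=A_0+a\,e_3e_3^\top$ with $A_0=A(0)$. This is a rank-one, single-input/single-output sector family. By the Kalman--Yakubovich--Popov lemma (the circle criterion), a common $P$ for $a\in[0,L]$ exists if and only if $A_0$ is Hurwitz and the frequency condition
\[
1-L\,\Re\,G(j\omega)>0\quad\text{for all }\omega\in\mathbb R\cup\{\infty\}
\]
holds. Here $G(s)=e_3^\top(sI-A_0)^{-1}e_3=s^2/p_0(s)$ and $p_0(s)=s^3+k_ds^2+k_ps+k_i$. An equivalent route is the Shorten--Narendra test: $A(0)A(L)$ has no real negative eigenvalue.

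Check at the critical frequency: at $\omega^2=k_p$, the Routh condition $(k_d-L)k_p>k_i$ is exactly the frequency condition. So the violation must occur at a different frequency. Expanding,
\[
\Re\, G(j\omega)=\frac{\omega^2\bigl(k_d\omega^2-k_i\bigr)}{(k_i-k_d\omega^2)^2+\omega^2(k_p-\omega^2)^2}.
\]

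\emph{Choice of gains.} Take $k_d=L+\varepsilon$ with $\varepsilon$ small, $k_p=1$, and $k_i=\varepsilon/2$ so that the Routh conditions hold. Near $\omega^2\approx k_p$, the denominator's second term is small while the numerator is of order $k_dk_p$. The quantity $L\,\Re\,G$ should then exceed $1$ at a frequency slightly off $\sqrt{k_p}$, in the spirit of the peaking argument used for Proposition~\ref{prop:pid-scalar-strict}. Parametrizing $\omega^2=k_p+\delta$ and optimizing over $\delta$ should exhibit a violation.

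If the explicit violation is not clean, a fallback is the scalar necessary condition from Proposition~\ref{prop:pid-scalar-strict}: take $x=e_3$ together with a suitably chosen second test vector, and derive an inequality such as $k_d(k_d-L)>k_p$ that any common $P$ must satisfy, then violate it while keeping $(k_d-L)k_p>k_i$.

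\emph{Main obstacle.} The hard step is exhibiting a concrete gain triple, valid for every $L>0$, at which the frequency inequality fails while all $A(a)$ remain Hurwitz. This needs careful algebra near the resonance. Scaling the gains with $L$ should reduce the verification to a single normalized case.
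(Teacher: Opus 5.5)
Your inclusion argument and the step ``every $A(a)$ Hurwitz $\Rightarrow$ membership in $\mathcal R_{\rm sys}(L)$'' are fine. The circle-criterion reduction is also correct. After clearing $|p_0(\mathrm i\omega)|^2$, a common $P$ exists if and only if
\[
\Phi(t):=(k_dt-k_i)^2-Lt(k_dt-k_i)+t(k_p-t)^2>0\qquad\text{for all }t=\omega^2>0 .
\]
One checks $\det(\mu I-A_LA_0)=-\Phi(-\mu)$, so this is exactly the object the paper exploits. For the paper's triple, $\Phi(t)=(t-L^2)\bigl(t^2+4L^2t-\frac{25}{4}L^4\bigr)$, which is negative just to the right of $t=L^2$.

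\textbf{The gap: no verified witness.} Strictness rests entirely on exhibiting a gain triple that is Hurwitz for all $a$ yet admits no common $P$, and that step is missing. Worse, the candidate you propose is not a witness. For $k_p=1$, $k_d=L+\varepsilon$, $k_i=\varepsilon/2$ one has exactly
\[
\Phi(t)=t(t-1)^2+L\varepsilon\,t\Bigl(t-\frac12\Bigr)+\varepsilon^2\Bigl(t-\frac12\Bigr)^2 .
\]
This is positive for $t\ge\frac12$. For $0<t<\frac12$ it satisfies $\Phi(t)>t(1-2L\varepsilon)/4\ge0$ whenever $L\varepsilon\le\frac12$, and it is positive for every $\varepsilon>0$ once $L\le\sqrt2$. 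So for small $\varepsilon$ your gains lie in $\mathcal R_{\rm cq}(L)$. The peaking heuristic fails for a concrete reason. Here $\Phi(k_p)=L\varepsilon/2+\varepsilon^2/4$ is of order $\varepsilon$, while moving off $t=k_p$ lowers $\Phi$ by only about $\Phi'(k_p)^2/(2\Phi''(k_p))=O(L^2\varepsilon^2)$.

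\textbf{The fallback fails too.} In Proposition~\ref{prop:pid-scalar-strict}, the inequality $k_d(bk_d-L_2)>k_p$ comes from the normalization $P_0e_3=\kappa$. Lemma~\ref{lem:automatic-normalization} forces that normalization through the unbounded input-gain direction $\Theta=(b+\tau)I_m$. The family $A(a)$, $a\in[0,L]$, has no such direction. Indeed, your own small-$\varepsilon$ candidate violates $k_d(k_d-L)>k_p$ while still admitting a common $P$.

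\textbf{How to close the gap.} Place $k_i$ near the Routh boundary rather than at half of it. At $k_i=(k_d-L)k_p$ one has
\[
\Phi(k_p)=(k_dk_p-k_i)\bigl((k_d-L)k_p-k_i\bigr)=0,\qquad \Phi'(k_p)=Lk_p(k_d-L)>0 .
\]
Hence $\Phi(t_1)<0$ for some $t_1<k_p$. By continuity this persists for $k_i$ slightly below $(k_d-L)k_p$, where every $A(a)$ is still Hurwitz. For example, $L=1$ and $(k_p,k_i,k_d)=(1,0.95,2)$ give $\Phi(0.8)<0$. You can then finish with the necessity half of the KYP/circle-criterion equivalence you cite. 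Alternatively, use the paper's elementary argument with a real eigenvector of $A_LA_0$ for a negative eigenvalue, which avoids KYP altogether. The paper itself simply takes the explicit triple $(\frac32L^2,\frac52L^3,3L)$, for which $A_LA_0$ has the eigenvalue $-L^2$.
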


\begin{proof}
The inclusion
$\mathcal R_{\rm cq}(L)\subseteq\mathcal R_{\rm sys}(L)$
follows directly from the Lyapunov inequality. Indeed, for a fixed common
$P\succ0$, the compactness of $[0,L]$ and continuity of $A(a)$ give a
uniform positive margin in
$-\bigl(PA(a)+A(a)^\top P\bigr)$, and hence uniform exponential decay.

It remains to prove strictness. Consider the gain triple
\begin{equation}
\label{eq:cq-counterexample-gains}
(k_p,k_i,k_d)
=
\left(
\frac{3}{2}L^2,
\frac{5}{2}L^3,
3L
\right).
\end{equation}
For any $a\in[0,L]$, the characteristic polynomial of $A(a)$ is
\[
p_a(s)
=
s^3+(3L-a)s^2+\frac{3}{2}L^2s+\frac{5}{2}L^3.
\]
All its coefficients are positive and
\[
(3L-a)\frac{3}{2}L^2
\ge
3L^3
>
\frac{5}{2}L^3.
\]
The Routh--Hurwitz criterion therefore shows that $A(a)$ is Hurwitz for
every $a\in[0,L]$. Since the matrix family is continuous and the uncertainty
interval is compact, the corresponding semigroups admit uniform constants
$M\ge1$ and $\lambda>0$. Thus the gain triple
\eqref{eq:cq-counterexample-gains} belongs to $\mathcal R_{\rm sys}(L)$.

We now show that the same gain triple does not belong to
$\mathcal R_{\rm cq}(L)$. Let $A_0:=A(0)$ and $A_L:=A(L)$. A direct
calculation gives
\begin{equation}
\label{eq:cq-counterexample-product}
\det(\mu I_3-A_LA_0)
=
\frac14(\mu+L^2)
\bigl(4\mu^2-16L^2\mu-25L^4\bigr).
\end{equation}
Hence $A_LA_0$ has the negative real eigenvalue $-L^2$. Let
$v\ne0$ be a corresponding real eigenvector and set $w:=A_0v$. If a common
$P=P^\top\succ0$ existed, then the Lyapunov inequality for $A_0$ would give
\(
2v^\top PA_0v=2v^\top Pw<0,
\)
so $v^\top Pw<0$. On the other hand, since
$A_Lw=A_LA_0v=-L^2v$, the Lyapunov inequality for $A_L$ would require
\[
0>
2w^\top PA_Lw
=
-2L^2w^\top Pv
>0,
\]
a contradiction. Therefore no common quadratic Lyapunov matrix exists for
the endpoint pair, and consequently none exists for the whole interval.
Thus
\[
\left(
\frac{3}{2}L^2,
\frac{5}{2}L^3,
3L
\right)
\in
\mathcal R_{\rm sys}(L)
\setminus
\mathcal R_{\rm cq}(L),
\]
which proves the proposition.
\end{proof}

Thus, the coincidence between the common-quadratic and system-level
gain regions established above is a structural property of the two
nonlinear uncertainty classes, rather than an automatic consequence of
robust stability over a compact parametric family.

\section{Proofs of the main results}
\label{sec:proof-main}

The proofs follow the structure of the main results. We first establish
the exact common-quadratic PID characterization, then derive the exact
linear-subclass outer region. Their coincidence closes the PID
inner--outer gap. Finally, we prove the boundary-linear equivalence and
the exact PI characterization.

\subsection{Common-quadratic PID characterization}

The proof of Theorem \ref{thm:main} begins with two structural reductions.  The first shows that the
isotropy of the uncertainty class makes the Kronecker form of the
certificate lossless.  The second derives positivity of the gains and a
normalization that fixes the last column of the reduced certificate.

\begin{lemma}
\label{lem:symmetry-reduction}
A gain triple is common-quadratically admissible if and only if
\eqref{eq:robust-matrix-inequality} admits a certificate of the form
$P_0\otimes I_m$, where $P_0=P_0^\top\succ0$ belongs to
$\mathbb R^{3\times3}$.
Here $\otimes$ denotes the Kronecker product: if $A=(a_{ij})$, then
$A\otimes B$ is the block matrix whose $(i,j)$ block is $a_{ij}B$.
In particular, the $(i,j)$ block of $P_0\otimes I_m$ is
$(P_0)_{ij}I_m$.
\end{lemma}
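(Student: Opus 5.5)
The "if" direction is immediate: a certificate $P_0\otimes I_m$ is in particular a common quadratic certificate in the sense of Definition~\ref{def:cq-pid-region}. For the converse, the plan is a symmetry-averaging argument over the orthogonal group $O(m)$, using that the admissible set \eqref{eq:admissible-matrix-triples} is invariant under simultaneous orthogonal similarity.

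\emph{Invariance of the data.} For $U\in O(m)$ set $\hat U:=I_3\otimes U$. A direct block computation gives
\[
\hat U^\top\mathcal A(D_1,D_2,\Theta)\hat U=\mathcal A(U^\top D_1U,\,U^\top D_2U,\,U^\top\Theta U).
\]
This holds because the constant blocks $0$ and $I_m$ commute with $U$, and each remaining block is linear in the corresponding matrix. Orthogonal similarity preserves the spectral norm. It also satisfies $\Sym(U^\top\Theta U)=U^\top\Sym(\Theta)U\succeq bI_m$. Hence the map $(D_1,D_2,\Theta)\mapsto(U^\top D_1U,U^\top D_2U,U^\top\Theta U)$ is a bijection of the admissible set onto itself.

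\emph{Averaging.} Let $P\succ0,\alpha>0$ satisfy \eqref{eq:robust-matrix-inequality} for all admissible triples. For each $U$, the matrix $P_U:=\hat U P\hat U^\top$ is again a certificate with the same $\alpha$. To see this, apply the inequality to the admissible triple $(U^\top D_1U,U^\top D_2U,U^\top\Theta U)$ and conjugate by $\hat U$. The certificate condition is convex in $P$, i.e.\ an intersection of LMIs in $P$ with fixed right-hand side $-\alpha I$. Therefore the Haar average
\[
\bar P:=\int_{O(m)}\hat U P\hat U^\top\,dU
\]
is also a certificate, and it is positive definite. Passing the integral through the inequality is legitimate since the integrand is continuous and the group is compact. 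Alternatively one can average over a finite subgroup that is rich enough; see below.

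\emph{Structure of the average.} Partition $\bar P$ into $m\times m$ blocks $\bar P_{ij}=\int U P_{ij}U^\top dU$, $i,j\in\{1,2,3\}$. Each block commutes with every $U\in O(m)$. The main point to verify is that the commutant of $O(m)$ acting on $\mathbb R^m$ is $\{cI_m\}$. This holds because $\mathbb R^m$ is absolutely irreducible for $O(m)$. Concretely, a matrix commuting with all diagonal sign matrices must be diagonal, and commuting with all permutation matrices then forces equal diagonal entries. So the finite signed-permutation group already suffices, and the averaging can be a finite sum.

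It follows that $\bar P_{ij}=(P_0)_{ij}I_m$ with $(P_0)_{ij}=\operatorname{tr}(P_{ij})/m$, so $\bar P=P_0\otimes I_m$. Symmetry of $P_0$ follows from $\bar P_{ji}=\bar P_{ij}^\top$. Positive definiteness of $P_0$ follows from that of $\bar P$, by testing on vectors $x\otimes e_1$.

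I expect no step to be a real obstacle. The only care needed is the commutant claim and checking that $\hat U$ indeed acts blockwise as described. Using the finite signed-permutation group avoids any measure-theoretic issues.
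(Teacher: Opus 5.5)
Your proposal is correct and is essentially the paper's argument. Both average an arbitrary certificate over the conjugations $I_3\otimes U$, which preserve the admissible set, and then use invariance under sign changes and permutations to force every $m\times m$ block to be a multiple of $I_m$. Your Haar average over $O(m)$ is a continuous variant of this; the paper averages directly over the finite signed-permutation group $\mathcal G_m^{\pm}$, which you also note suffices.
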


\begin{proof}
Only necessity requires proof.  Let $P=P^\top\succ0$ be an arbitrary
certificate with margin $\alpha>0$, and let $\mathcal G_m^{\pm}$ denote
the signed-permutation group
\begin{align*}
\mathcal G_m^{\pm}:=
\bigl\{E_{\varepsilon}\mathsf P:\;&E_{\varepsilon}=
\operatorname{diag}(\varepsilon_1,\ldots,\varepsilon_m),
\varepsilon_j\in\{-1,1\},\\[-1mm]
&\ \mathsf P\text{ is an }m\times m\text{ permutation matrix}\bigr\}.
\end{align*}
Equivalently, each $U\in\mathcal G_m^{\pm}$ has exactly one nonzero entry
in every row and column, and that entry is either $1$ or $-1$.  Thus
$\mathcal G_m^{\pm}$ is a finite subgroup of the orthogonal group, with
$U^\top U=I_m$ and $|\mathcal G_m^{\pm}|=2^m m!$.  For
$U\in\mathcal G_m^{\pm}$, set $T_U:=I_3\otimes U$.

The uncertainty set is invariant under the simultaneous conjugation
\[
(D_1,D_2,\Theta)
\longmapsto
(UD_1U^\top,UD_2U^\top,U\Theta U^\top),
\]
and the corresponding closed-loop matrices satisfy
\[
\mathcal A(UD_1U^\top,UD_2U^\top,U\Theta U^\top)
=T_U\mathcal A(D_1,D_2,\Theta)T_U^\top.
\]
Applying the inequality for $P$ to the conjugated uncertainty triple and
then conjugating the result by $T_U$ shows that $T_U^\top PT_U$ is also a
certificate with margin $\alpha$.  Hence the group average
\[
\overline P:=
\frac{1}{|\mathcal G_m^{\pm}|}
\sum_{U\in\mathcal G_m^{\pm}}T_U^\top PT_U
\]
is positive definite and remains a certificate with the same margin.

The group property also gives
$T_U^\top\overline PT_U=\overline P$ for every
$U\in\mathcal G_m^{\pm}$.  Write $\overline P$ in $m\times m$ blocks.
Invariance under coordinate sign changes forces each block to be diagonal,
whereas invariance under coordinate permutations forces all of its diagonal
entries to coincide.  Every block is therefore a scalar multiple of $I_m$,
so
\(
\overline P=P_0\otimes I_m
\)
for some $P_0=P_0^\top\succ0$.  The converse is immediate.
\end{proof}

Lemma~\ref{lem:symmetry-reduction} shows that restricting the certificate
to $P_0\otimes I_m$ entails no loss of admissible gains.  The next lemma
shows that feasibility also forces positivity of the gains and permits a
useful normalization of $P_0$.

\begin{lemma}
\label{lem:automatic-normalization}
If $(k_p,k_i,k_d)\in\Omega_{\rm pid}^{\rm cq}$, then
\[
k_i>0,\qquad k_p>0,\qquad k_d>0,
\qquad bk_pk_d>k_i.
\]
Besides, there is a certificate $P_0\otimes I_m$
such that
\[
P_0(0,0,1)^\top=(k_i,k_p,k_d)^\top.
\]
\end{lemma}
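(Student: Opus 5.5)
The plan is to derive positivity from a single nominal member of the uncertainty family, and the normalization from a high-gain limit along the input direction.

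\emph{Positivity.} Take $D_1=D_2=0$ and $\Theta=bI_m$ in \eqref{eq:robust-matrix-inequality}, exactly as in the proof of Proposition~\ref{prop:cq-certificate}. This shows that $\mathcal A(0,0,bI_m)$ is Hurwitz, so each channel polynomial $s^3+bk_ds^2+bk_ps+bk_i$ is Hurwitz. The Routh--Hurwitz criterion then gives $k_i,k_p,k_d>0$ and $bk_pk_d>k_i$.

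\emph{Normalization.} By Lemma~\ref{lem:symmetry-reduction} we may take a certificate of the form $P_0\otimes I_m$. Write $p:=P_0e_3\in\mathbb R^3$ and $\kappa:=(k_i,k_p,k_d)^\top$. The closed-loop matrix splits as
\[
\mathcal A(D_1,D_2,\Theta)=\mathcal A(D_1,D_2,0)-(e_3\kappa^\top)\otimes\Theta .
\]
Now restrict to $D_1=D_2=0$ and $\Theta=\theta I_m$ with $\theta\ge b$. Each such triple is admissible, and all blocks are scalar multiples of $I_m$. The inequality therefore reduces to
\[
S_0-\theta\,(p\kappa^\top+\kappa p^\top)\prec0\quad\text{for all }\theta\ge b,
\]
where $S_0:=P_0\mathcal A_0^{\rm sc}+(\mathcal A_0^{\rm sc})^\top P_0$ is fixed and $\mathcal A_0^{\rm sc}$ is the $3\times3$ nominal matrix with $\Theta=0$. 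Dividing by $\theta$ and letting $\theta\to\infty$ yields
\[
p\kappa^\top+\kappa p^\top\succeq0 .
\]

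\emph{The key step.} The main point is a small linear-algebra fact. If $p$ and $\kappa$ are linearly independent, then $p\kappa^\top+\kappa p^\top$ has rank two. Its trace-free part on $\operatorname{span}\{p,\kappa\}$ makes it indefinite: take $v\perp\kappa$ with $v^\top p\ne0$ and $w\perp p$ with $w^\top\kappa\ne0$, and evaluate $(v\pm w)^\top(p\kappa^\top+\kappa p^\top)(v\pm w)=\pm2(v^\top p)(w^\top\kappa)$, which takes both signs. This contradicts positive semidefiniteness, so $p=c\kappa$ for some $c\in\mathbb R$. Since $P_0\succ0$ we have $e_3^\top p=(P_0)_{33}>0$, and $k_d>0$ by the positivity step, so $c=(P_0)_{33}/k_d>0$.

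Finally, replace $P_0$ by $P_0/c$. Scaling preserves positive definiteness and the certificate property, with margin $\alpha/c$, and the rescaled matrix satisfies $P_0e_3=\kappa$. The only delicate point is ensuring that the limiting argument uses only admissible triples. This holds because $\Sym(\theta I_m)=\theta I_m\succeq bI_m$ for every $\theta\ge b$, and the argument is valid for every $m\ge1$.
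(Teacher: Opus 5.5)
Your proposal is correct and follows essentially the same route as the paper. Positivity comes from the nominal triple $(0,0,bI_m)$ via Routh--Hurwitz, and the normalization comes from the Kronecker-form certificate of Lemma~\ref{lem:symmetry-reduction}, the high-gain ray $\Theta=\theta I_m$ forcing $p\kappa^\top+\kappa p^\top\succeq0$, collinearity $p=c\kappa$, and rescaling by $c=(P_0)_{33}/k_d>0$. The only differences are cosmetic: you show indefiniteness with explicit test vectors $v\pm w$ where the paper uses a negative $2\times2$ determinant on $\operatorname{span}\{\kappa,r\}$, and you split off $\Theta=0$ rather than $\Theta=bI_m$.
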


\begin{proof}

First, as shown in the proof of Proposition~\ref{prop:cq-certificate}, the nominal
choice $D_1=D_2=0$ and $\Theta=bI_m$, together with the Routh--Hurwitz
criterion, gives $k_i,k_p,k_d>0$ and $bk_pk_d>k_i$.

By Lemma~\ref{lem:symmetry-reduction}, there exists a certificate
$\overline P_0\otimes I_m$, where
$\overline P_0=\overline P_0^\top\succ0$, with some margin $\alpha>0$.
It remains to normalize this certificate. Denote $e_3=(0,0,1)^\top,
\kappa=(k_i,k_p,k_d)^\top,$ and put
$p:=\overline P_0e_3$.  For $\tau\ge0$, choose
$D_1=D_2=0$ and $\Theta=(b+\tau)I_m$.  The robust inequality becomes
\[
\bigl[H_b-\tau(p\kappa^\top+\kappa p^\top)\bigr]\otimes I_m
\preceq-\alpha I_{3m},
\qquad \tau\ge0,
\]
where
\[
A_b:=\begin{bmatrix}0&1&0\\0&0&1\\-bk_i&-bk_p&-bk_d\end{bmatrix},
\qquad H_b:=\overline P_0A_b+A_b^\top\overline P_0
\]
are independent of
$\tau$.  Consequently,
\[
p\kappa^\top+\kappa p^\top\succeq0;
\]
otherwise, evaluation along a negative direction of this matrix would
contradict the preceding inequality for all sufficiently large $\tau$.

Since $\kappa\ne0$, the last positive-semidefiniteness condition implies
$p=c\kappa$ for some $c\ge0$.  To see this, write
$p=c\kappa+r$, where $r^\top\kappa=0$.  If $r\ne0$, the restriction of
$p\kappa^\top+\kappa p^\top$ to
$\operatorname{span}\{\kappa,r\}$ has negative determinant, which is
impossible for a positive-semidefinite matrix.  Thus $r=0$, and positive
semidefiniteness then gives $c\ge0$.  Furthermore,
\[
e_3^\top p=e_3^\top\overline P_0e_3>0,
\qquad
e_3^\top\kappa=k_d>0,
\]
so $c>0$.  Defining $P_0:=c^{-1}\overline P_0$ gives another common
quadratic certificate, now with margin $\alpha/c$, and satisfies
$P_0e_3=\kappa$.
\end{proof}

\begin{remark}\normalfont
The preceding lemmas reduce the certificate-feasibility problem substantially.
Lemma~\ref{lem:symmetry-reduction} restricts the search losslessly from
$P\in\mathbb S_{++}^{3m}$ to $P_0\in\mathbb S_{++}^{3}$, while
Lemma~\ref{lem:automatic-normalization} gives
\[
P_0=
\begin{bmatrix}
p_{11}&p_{12}&k_i\\
p_{12}&p_{22}&k_p\\
k_i&k_p&k_d
\end{bmatrix}.
\]
Hence, for fixed PID gains, only $p_{11}$, $p_{12}$, and $p_{22}$ remain
to be eliminated. For $m\ge2$, the rank-two Gram reduction and minimax
argument perform this elimination exactly and yield the auxiliary scalars
$\mu$ and $\rho$ in \eqref{eq:exact-cq-region}. These scalars arise from
the reduction itself rather than from an a priori parametrization of $P_0$.
The only step requiring $m\ge2$ is the rank-two Gram realizability.
\end{remark}

\begin{proof}[Proof of Theorem~\ref{thm:main}]
Set $\kappa:=(k_i,k_p,k_d)^\top$ and
\[
F:=
\begin{bmatrix}
0&1&0\\
0&0&1\\
0&0&0
\end{bmatrix}.
\]

\emph{Part I: exact characterization for $m=1$.}
With $m=1$ understood, we prove that membership in
$\Omega_{\rm pid}^{\rm cq}$ is equivalent to
\eqref{eq:four-vertex-condition}.

First, let $(k_p,k_i,k_d)\in\Omega_{\rm pid}^{\rm cq}$. By
Lemma~\ref{lem:automatic-normalization}, the gains are positive and a
certificate may be normalized so that $P_0e_3=\kappa$. For admissible
scalars $d_1,d_2,\theta$, define
\[
\mathcal A^{\rm sc}(d_1,d_2,\theta):=
\begin{bmatrix}
0&1&0\\
0&0&1\\
-\theta k_i&d_1-\theta k_p&d_2-\theta k_d
\end{bmatrix}
\]
and
\[
\begin{aligned}
Q(d_1,d_2,\theta)
:=-P_0\mathcal A^{\rm sc}(d_1,d_2,\theta)
-\mathcal A^{\rm sc}(d_1,d_2,\theta)^\top P_0.
\end{aligned}
\]
Since
\begin{equation*}
\mathcal A^{\rm sc}(d_1,d_2,\theta)
=
\mathcal A^{\rm sc}(d_1,d_2,b)
-(\theta-b)e_3\kappa^\top,
\end{equation*}
the normalization $P_0e_3=\kappa$ gives
\begin{equation}
\label{eq:scalar-theta-reduction}
Q(d_1,d_2,\theta)
=
Q(d_1,d_2,b)
+2(\theta-b)\kappa\kappa^\top.
\end{equation}
Thus \(\theta=b\) is the least favorable value.

At $\theta=b$, the matrix $Q(d_1,d_2,b)$ is affine in $(d_1,d_2)$.
Every point of $[-L_1,L_1]\times[-L_2,L_2]$ is a convex combination of
its four vertices. Hence positivity for every
admissible $(d_1,d_2)$ is equivalent to
\[
-\Bigl(
P_0\mathcal A_{st}^{\rm sc}
+(\mathcal A_{st}^{\rm sc})^\top P_0
\Bigr)\succ0,
\qquad (s,t)\in\mathcal V^2.
\]
Therefore every gain in $\Omega_{\rm pid}^{\rm cq}$ satisfies
\eqref{eq:four-vertex-condition}.

Conversely, suppose a gain triple in $\mathbb R^3$ satisfies
\eqref{eq:four-vertex-condition}, and let $P_0$ be the corresponding
four-vertex certificate. Define the vertex average
\begin{equation*}
A_b:=
\begin{bmatrix}
0&1&0\\
0&0&1\\
-bk_i&-bk_p&-bk_d
\end{bmatrix}
=\frac14\sum_{(s,t)\in\mathcal V^2}\mathcal A_{st}^{\rm sc}.
\end{equation*}
Averaging the four strict LMIs therefore yields
$P_0A_b+A_b^\top P_0\prec0$, so $A_b$ is Hurwitz. Its characteristic polynomial is
$s^3+bk_ds^2+bk_ps+bk_i$, and the Routh--Hurwitz criterion yields
\[
k_i>0,\qquad k_p>0,\qquad k_d>0,
\qquad bk_pk_d>k_i.
\]
Thus positivity is a consequence of the four-vertex condition rather than
an additional assumption. Now set
\[
\alpha_1:=
\min_{(s,t)\in\mathcal V^2}
\lambda_{\min}\!\left(
-P_0\mathcal A_{st}^{\rm sc}
-(\mathcal A_{st}^{\rm sc})^\top P_0
\right)>0.
\]
Convexity and \eqref{eq:scalar-theta-reduction} yield
$Q(d_1,d_2,\theta)\succeq\alpha_1I_3$ for all $|d_j|\le L_j$ and
$\theta\ge b$. Thus $P_0$ satisfies the uniform robust matrix inequality,
so the gain belongs to $\Omega_{\rm pid}^{\rm cq}$. This proves the scalar
equivalence.

\emph{Part II: sufficiency of the explicit multivariable conditions.}
Let $(k_p,k_i,k_d)$ satisfy the conditions on the right-hand side of
\eqref{eq:exact-cq-region}. Choose the corresponding
$\mu,\rho\in(0,1)$ and define
\[
\sigma:=2b(1-\mu),\qquad
\varepsilon_1:=2b\mu\rho,\qquad
\varepsilon_2:=2b\mu(1-\rho).
\]
Then $\sigma+\varepsilon_1+\varepsilon_2=2b$.
Put
\begin{align*}
A_0
&:=\sigma(k_p^2-2k_ik_d)-\frac{L_1^2}{\varepsilon_1}
 =\sigma(\Pi_{\mu,\rho}-2k_ik_d),\\
C_0
&:=\sigma k_d^2-2k_p-\frac{L_2^2}{\varepsilon_2}
 =\sigma\Delta_{\mu,\rho}.
\end{align*}
The defining inequalities in \eqref{eq:exact-cq-region} give
$C_0>0$ and $A_0+2k_i\sqrt{\sigma C_0}>0$. Hence one may choose
$\xi\in\mathbb R$ such that
\begin{equation}
\label{eq:xi-choice}
\max\left\{
-\frac{A_0}{2},
-k_i\sqrt{\sigma C_0}
\right\}
<\xi<
k_i\sqrt{\sigma C_0}.
\end{equation}

Define
\[
P_0:=
\begin{bmatrix}
\sigma k_ik_p&\sigma k_ik_d-\xi&k_i\\
\sigma k_ik_d-\xi&\sigma k_pk_d-k_i&k_p\\
k_i&k_p&k_d
\end{bmatrix}.
\]
Set $P:=P_0\otimes I_m$. Then $P_0e_3=\kappa$. For every admissible triple and every
$z=\operatorname{col}(z_0,z_1,z_2)$, put
$q:=(\kappa^\top\otimes I_m)z$.
Set
\[
\mathcal Q_b:=-(P_0F+F^\top P_0)+2b\kappa\kappa^\top.
\]
The identity $P_0e_3=\kappa$ gives
\begin{align*}
&-z^\top\!\left[
P\mathcal A(D_1,D_2,\Theta)
+\mathcal A(D_1,D_2,\Theta)^\top P
\right]z\\
&=z^\top(\mathcal Q_b\otimes I_m)z
+2q^\top\bigl(\Sym(\Theta)-bI_m\bigr)q\\
&\quad-2q^\top D_1z_1-2q^\top D_2z_2.
\end{align*}
Applying Young's inequality to the last two terms and using
$\Sym(\Theta)\succeq bI_m$ yields
\[
-2q^\top D_jz_j
\ge-\varepsilon_j\norm q^2
-\frac{L_j^2}{\varepsilon_j}\norm{z_j}^2,
\qquad j=1,2,
\]
and hence
\begin{align}
&-z^\top\Bigl[
P\mathcal A(D_1,D_2,\Theta)
+\mathcal A(D_1,D_2,\Theta)^\top P
\Bigr]z\notag\\
&\quad\ge z^\top(\widehat Q\otimes I_m)z,
\label{eq:constructed-robust-bound}
\end{align}
where
\[
\widehat Q:=
\begin{bmatrix}
\sigma k_i^2&0&\xi\\
0&A_0+2\xi&0\\
\xi&0&C_0
\end{bmatrix}.
\]
By \eqref{eq:xi-choice}, $A_0+2\xi>0$ and
$\xi^2<\sigma k_i^2C_0$, and therefore $\widehat Q\succ0$.

It remains to verify $P_0\succ0$. Since
$k_d^2-\Delta_{\mu,\rho}\ge k_p/[b(1-\mu)]$, we have
\begin{align*}
k_d-\sqrt{\Delta_{\mu,\rho}}
=\frac{k_d^2-\Delta_{\mu,\rho}}
{k_d+\sqrt{\Delta_{\mu,\rho}}}
>\frac{k_p}{2bk_d}.
\end{align*}
Hence
$2bk_pk_d(k_d-\sqrt{\Delta_{\mu,\rho}})>k_p^2>
\Pi_{\mu,\rho}$. Together with the upper bound on $k_i$, this gives
$k_i<bk_pk_d$;
hence $s^3+bk_ds^2+bk_ps+bk_i$ is Hurwitz. Let
$A_b:=(F-be_3\kappa^\top)\otimes I_m$ and
$Q_b^{\rm nom}:=-(PA_b+A_b^\top P)$.
Taking $D_1=D_2=0$ and $\Theta=bI_m$ in
\eqref{eq:constructed-robust-bound} gives
\[
Q_b^{\rm nom}\succeq\widehat Q\otimes I_m\succ0.
\]
Since $A_b$ is Hurwitz, uniqueness of the Lyapunov equation gives
\[
P=\int_0^\infty
e^{A_b^\top t}Q_b^{\rm nom}e^{A_bt}\,dt\succ0.
\]
Hence $P_0\succ0$. Equation
\eqref{eq:constructed-robust-bound} now supplies a uniform negative margin,
so the gain is common-quadratically admissible. Thus every gain satisfying
the explicit conditions in \eqref{eq:exact-cq-region} belongs to
$\Omega_{\rm pid}^{\rm cq}$, in fact for every $m\ge1$.

\emph{Part III: necessity of the explicit multivariable conditions.}
Let $(k_p,k_i,k_d)\in\Omega_{\rm pid}^{\rm cq}$. By
Lemma~\ref{lem:automatic-normalization}, the gains are positive and we may
choose a normalized certificate $P_0$ satisfying $P_0e_3=\kappa$. Define
$\mathcal Q_b:=-(P_0F+F^\top P_0)+2b\kappa\kappa^\top$ and, for
$z=\operatorname{col}(z_0,z_1,z_2)$, set
$q:=(\kappa^\top\otimes I_m)z$, and abbreviate
$\mathcal A=\mathcal A(D_1,D_2,\Theta)$.  The normalization
$P_0e_3=\kappa$ gives the exact identity
\begin{align*}
&-z^\top\!\left[
  (P_0\otimes I_m)\mathcal A
  +\mathcal A^\top(P_0\otimes I_m)
  \right]z\\
&\quad=z^\top(\mathcal Q_b\otimes I_m)z\\
&\qquad+2q^\top\bigl(\Sym(\Theta)-bI_m\bigr)q\\
&\qquad-2q^\top D_1z_1-2q^\top D_2z_2.
\end{align*}
Thus the skew-symmetric part of $\Theta$ cancels and $\Theta=bI_m$ is
least favorable.  Minimizing independently over the two norm balls, and
using the uniform margin of the normalized certificate, gives some
$\gamma>0$ such that, for every $z\ne0$,
\begin{equation}
z^\top(\mathcal Q_b\otimes I_m)z
-2L_1\norm q\norm{z_1}
-2L_2\norm q\norm{z_2}
\ge\gamma\norm z^2.
\label{eq:exact-worst-form}
\end{equation}
Indeed, equality in each norm bound is attained, whenever the relevant
vectors are nonzero, by
\[
D_j
=
L_j\frac{q}{\norm q}
\frac{z_j^\top}{\norm{z_j}},
\qquad j=1,2.
\]

Let $Z=[z_0\ z_1\ z_2]\in\mathbb R^{m\times3}$ and let
$G:=Z^\top Z$ be its Gram matrix, so
$G_{ij}=z_{i-1}^\top z_{j-1}$ for $i,j=1,2,3$.
Write
\[
\mathcal K:=\{G\in\mathbb S_+^3:\operatorname{tr}G=1\},
\]
where $\operatorname{tr}$ is the sum of the diagonal entries. Below,
$\langle A,B\rangle:=\operatorname{tr}(A^\top B)$ is the Frobenius
inner product, $\operatorname{rank}$ is matrix rank, and
$e_2=(0,1,0)^\top$, $e_3=(0,0,1)^\top$ are the second and third
standard basis vectors of $\mathbb R^3$.
Define
\[
r(G):=\kappa^\top G\kappa,\qquad
s_1(G):=e_2^\top Ge_2,\qquad
s_2(G):=e_3^\top Ge_3,
\]
and
\[
\phi(G):=\langle\mathcal Q_b,G\rangle
-2L_1\sqrt{r(G)s_1(G)}
-2L_2\sqrt{r(G)s_2(G)}.
\]
Equation~\eqref{eq:exact-worst-form} gives $\phi(G)\ge\gamma$ for every
realizable $G\in\mathcal K$ with $\operatorname{rank}G\le m$.

Indeed, $\operatorname{tr}G=\norm z^2$, $r(G)=\norm q^2$, and
$s_j(G)=\norm{z_j}^2$. Both sides of \eqref{eq:exact-worst-form} are
homogeneous of degree two in $z$, so restricting to $\norm z=1$ is
lossless. Also, $\mathcal K$ is compact: its eigenvalues are nonnegative
and sum to one, and hence $\norm G\le1$.

For an arbitrary $G\in\mathcal K$, consider the compact convex fiber
\[
\mathcal K_G:=\{H\in\mathcal K:\ r(H)=r(G),\quad
s_1(H)=s_1(G),\quad s_2(H)=s_2(G)\}.
\]

The fiber is nonempty because it contains $G$. The linear functional
$H\mapsto\langle\mathcal Q_b,H\rangle$ attains its minimum on this
compact fiber. Its minimizers form a nonempty compact face, so we may
choose a minimizer $H_*$ that is an extreme point of $\mathcal K_G$.
To justify the last assertion, an extreme point of the minimizing face
is also extreme in the fiber: if it is a nontrivial convex combination
of two fiber points, linearity forces both points to be minimizers.

We claim that $\operatorname{rank}H_*\le2$. Otherwise $H_*\succ0$.
Since $\dim\mathbb S^3=6$, the following four homogeneous linear
equations have a nonzero solution $X\in\mathbb S^3$:
\[
\operatorname{tr}X=0,\qquad
\kappa^\top X\kappa=0,\qquad
e_2^\top Xe_2=0,\qquad e_3^\top Xe_3=0.
\]
For $0<\tau<\lambda_{\min}(H_*)/\norm X$, both $H_*+\tau X$ and
$H_*-\tau X$ are positive definite and satisfy all fiber constraints.
They are distinct, and their midpoint is $H_*$, contradicting
extremality. This proves the claim without adding the objective value
as an extra affine constraint.

All nonlinear terms in $\phi$ are constant on the fiber. Consequently,
$\phi(H_*)\le\phi(G)$. If $r_*:=\operatorname{rank}H_*\le2$, its
spectral decomposition gives $H_*=Y^\top Y$ with
$Y\in\mathbb R^{r_*\times3}$. For $m\ge2$, appending zero rows gives
a matrix in $\mathbb R^{m\times3}$ with the same Gram matrix. Thus
$H_*$ is realizable by three vectors in $\mathbb R^m$, and
$\phi(G)\ge\phi(H_*)\ge\gamma$ for every $G\in\mathcal K$.
Since $\phi$ is continuous on $\mathcal K$, it follows that

\begin{equation}
\label{eq:positive-gram-margin}
\delta:=\min_{G\in\mathcal K}\phi(G)>0.
\end{equation}

For $\varepsilon_1,\varepsilon_2>0$, define
\[
\mathcal Q_{\varepsilon}
:=
\mathcal Q_b
-(\varepsilon_1+\varepsilon_2)\kappa\kappa^\top
-\frac{L_1^2}{\varepsilon_1}e_2e_2^\top
-\frac{L_2^2}{\varepsilon_2}e_3e_3^\top.
\]
Using
\[
2L\sqrt{rs}
=
\inf_{\varepsilon>0}
\left(
\varepsilon r+\frac{L^2}{\varepsilon}s
\right),
\]
we obtain $\phi(G)=\sup_{\varepsilon_1,\varepsilon_2>0}
\langle\mathcal Q_{\varepsilon},G\rangle$.

The scalar identity holds for all $r,s\ge0$. For $r,s>0$, equality is
attained at $\varepsilon=L\sqrt{s/r}$; for $r=0<s$ the infimum is
approached as $\varepsilon\to\infty$, whereas for $s=0<r$ it is
approached as $\varepsilon\downarrow0$. If both vanish it is zero
for every $\varepsilon$. Thus the identity remains valid even when
$r(G)$ or $s_j(G)$ vanishes; no pointwise attainment is assumed.

Together with \eqref{eq:positive-gram-margin}, this identifies $\delta$ as
a minimax value.

More explicitly, put $\mathcal E:=(0,\infty)^2$ and
$\Phi(G,\varepsilon):=\langle\mathcal Q_\varepsilon,G\rangle$.
For each $\varepsilon\in\mathcal E$, $\Phi(\cdot,\varepsilon)$ is
continuous and affine, hence convex, on the compact convex set
$\mathcal K$. For each $G\in\mathcal K$, it is continuous and concave
in $\varepsilon$, since
\[
\frac{\partial^2\Phi}{\partial\varepsilon_j^2}
=-\frac{2L_j^2s_j(G)}{\varepsilon_j^3}\le0,
\qquad
\frac{\partial^2\Phi}{\partial\varepsilon_1\partial\varepsilon_2}=0.
\]
The set $\mathcal E$ is convex. These are the hypotheses of Sion's
minimax theorem~\cite{Sion1958}; only $\mathcal K$ needs to be compact,
so the openness and unboundedness of $\mathcal E$ cause no difficulty.
Therefore

\begin{align*}
0<\delta
&=\min_{G\in\mathcal K}
\sup_{\varepsilon_1,\varepsilon_2>0}
\langle\mathcal Q_{\varepsilon},G\rangle\\
&=
\sup_{\varepsilon_1,\varepsilon_2>0}
\min_{G\in\mathcal K}
\langle\mathcal Q_{\varepsilon},G\rangle\\
&=
\sup_{\varepsilon_1,\varepsilon_2>0}
\lambda_{\min}(\mathcal Q_{\varepsilon}).
\end{align*}

For the last equality, diagonalize $\mathcal Q_\varepsilon$: its
inner product with any $G\in\mathcal K$ is a convex combination of
its eigenvalues, and $G=vv^\top$ for a unit minimum-eigenvalue
eigenvector attains the smallest one. Since $\delta>0$, the definition
of the supremum now supplies a \emph{finite} pair
$\varepsilon\in\mathcal E$ for which
$\lambda_{\min}(\mathcal Q_\varepsilon)>\delta/2>0$.
Attainment of the supremum is not needed. Crucially, this single pair
works for every $G$, rather than depending on the state vectors.

Put
\[
\sigma:=2b-\varepsilon_1-\varepsilon_2,\qquad
\mu:=\frac{\varepsilon_1+\varepsilon_2}{2b},\qquad
\rho:=\frac{\varepsilon_1}{\varepsilon_1+\varepsilon_2}.
\]
The $(1,1)$ entry of $\mathcal Q_{\varepsilon}$ is
$\sigma k_i^2>0$. Hence $\sigma>0$ and $\mu,\rho\in(0,1)$.
Write
\[
P_0=
\begin{bmatrix}
p_{11}&p_{12}&k_i\\
p_{12}&p_{22}&k_p\\
k_i&k_p&k_d
\end{bmatrix},
\qquad
\xi:=\sigma k_ik_d-p_{12},
\]
and define
\[
A_0
:=
\sigma(k_p^2-2k_ik_d)
-\frac{L_1^2}{\varepsilon_1},
\qquad
C_0
:=
\sigma k_d^2-2k_p
-\frac{L_2^2}{\varepsilon_2}.
\]
Direct calculation gives the following entry and principal submatrix,
where the paired subscript $\{1,3\},\{1,3\}$ selects rows and columns
1 and 3:
\[
(\mathcal Q_{\varepsilon})_{22}=A_0+2\xi,
\qquad
(\mathcal Q_{\varepsilon})_{\{1,3\},\{1,3\}}
=
\begin{bmatrix}
\sigma k_i^2&\xi\\
\xi&C_0
\end{bmatrix}.
\]
Therefore $C_0>0$, $A_0+2\xi>0$, and
$\xi^2<\sigma k_i^2C_0$. Since
$C_0=\sigma\Delta_{\mu,\rho}$ and
$A_0=\sigma(\Pi_{\mu,\rho}-2k_ik_d)$, we obtain
$\Delta_{\mu,\rho}>0$. Moreover, $k_p,k_d>0$ implies
$0<\Delta_{\mu,\rho}<k_d^2$. Finally, $A_0+2\xi>0$ and
$\xi<k_i\sqrt{\sigma C_0}$ give
\begin{align*}
0<A_0+2k_i\sqrt{\sigma C_0}
=\sigma\left[
\Pi_{\mu,\rho}
-2k_i\bigl(k_d-\sqrt{\Delta_{\mu,\rho}}\bigr)
\right].
\end{align*}
Since $k_i>0$ and $k_d-\sqrt{\Delta_{\mu,\rho}}>0$, it follows that
\[
\Pi_{\mu,\rho}>0,
\qquad
k_i<
\frac{\Pi_{\mu,\rho}}
{2\bigl(k_d-\sqrt{\Delta_{\mu,\rho}}\bigr)}.
\]
Thus $(k_p,k_i,k_d)$ satisfies the conditions on the right-hand side of
\eqref{eq:exact-cq-region}.

For completeness, the three necessary inequalities above are equivalent to the nonempty
interval
\[
\max\{-A_0/2,-k_i\sqrt{\sigma C_0}\}
<\xi<k_i\sqrt{\sigma C_0},
\qquad C_0>0.
\]
Conversely, choose $\xi$ in this interval and set
$p_{12}=\sigma k_ik_d-\xi$, $p_{11}=\sigma k_ik_p$, and
$p_{22}=\sigma k_pk_d-k_i$. These choices cancel the $(1,2)$ and
$(2,3)$ entries of $\mathcal Q_\varepsilon$, leaving exactly the
positive-definite matrix $\widehat Q$ constructed in Part II.
The separate Lyapunov-equation argument there establishes $P_0\succ0$.
Thus eliminating these entries and then $\xi$ introduces no additional
restriction or unverified positivity assumption on the certificate.

Combining Parts II and III proves the exact
multivariable representation. Finally, Part II with $m=1$, together with
Part I, shows that every gain in the multivariable region also satisfies
the scalar four-vertex condition.
\end{proof}

\subsection{Linear-subclass PID characterization}

\begin{proof}[Proof of Proposition~\ref{prop:pid-linear-outer}]
We first consider $m=1$. Write the admissible matrices as scalars
$d_1,d_2,\theta$, where
\[
|d_1|\le L_1,\qquad
|d_2|\le L_2,\qquad
\theta\ge b.
\]
The characteristic polynomial of the augmented PID closed loop is
\begin{equation*}
p_{d_1,d_2,\theta}(s)
=
s^3+(\theta k_d-d_2)s^2
+(\theta k_p-d_1)s+\theta k_i.
\end{equation*}
By the Routh--Hurwitz criterion, this polynomial is Hurwitz if and only if
\begin{equation*}
\begin{aligned}
&\theta k_i>0,\quad
\theta k_p-d_1>0,\quad
\theta k_d-d_2>0,\\
&(\theta k_p-d_1)(\theta k_d-d_2)>\theta k_i.
\end{aligned}
\end{equation*}

Necessity follows by choosing
$d_1=L_1$, $d_2=L_2$, and $\theta=b$. Indeed, the preceding conditions give
\[
\bar k_0>0,\qquad
\bar k_1>0,\qquad
\bar k_2>0,\qquad
\bar k_1\bar k_2>\bar k_0.
\]
Since $\bar k_1>0$ and
$\bar k_1\bar k_2>\bar k_0>0$ already imply $\bar k_2>0$, these conditions
are equivalent to those in \eqref{eq:pid-scalar-linear-region}.

Conversely, suppose that the conditions in
\eqref{eq:pid-scalar-linear-region} hold. They imply
$k_i,k_p,k_d>0$ and $\bar k_2>0$. Hence, for every admissible triple,
\[
\theta k_p-d_1\ge \bar k_1>0,
\qquad
\theta k_d-d_2\ge \bar k_2>0.
\]
Define
\begin{equation*}
g(\theta)
:=
\frac{(\theta k_p-L_1)(\theta k_d-L_2)}{\theta}.
\end{equation*}
Since $bk_p>L_1$ and $bk_d>L_2$,
\begin{equation*}
g'(\theta)
=
k_pk_d-\frac{L_1L_2}{\theta^2}
\ge
k_pk_d-\frac{L_1L_2}{b^2}
>0,
\qquad \theta\ge b.
\end{equation*}
Therefore,
\begin{align*}
\frac{(\theta k_p-d_1)(\theta k_d-d_2)}{\theta}
\ge g(\theta)
\ge g(b)
=
\frac{\bar k_1\bar k_2}{b}
>
\frac{\bar k_0}{b}
=k_i.
\end{align*}
All Routh--Hurwitz inequalities consequently hold for every admissible
$d_1,d_2,\theta$. This proves
\eqref{eq:pid-scalar-linear-region}.

We now turn to the case $m\ge2$.
We first note that the frequency inequality in
\eqref{eq:pid-linear-region} automatically implies the nominal
Routh--Hurwitz condition. Indeed, let
\[
\omega_0:=\sqrt{\frac{k_i}{k_d}}.
\]
Since the gains are positive,
\[
\chi(\omega_0)=k_p^2\omega_0^2,
\qquad
\sqrt{\chi(\omega_0)}=k_p\omega_0.
\]
Evaluating the frequency inequality at $\omega=\omega_0$ and dividing by
$k_p\omega_0^2>0$ gives
\[
bk_p-\omega_0^2>L_1+L_2\omega_0\ge0.
\]
Consequently,
\[
bk_p>\omega_0^2=\frac{k_i}{k_d},
\qquad\text{and hence}\qquad
bk_pk_d>k_i.
\]
Therefore the nominal polynomial
\[
s^3+bk_ds^2+bk_ps+bk_i
\]
is Hurwitz by the Routh--Hurwitz criterion. This provides the stable
starting point for the continuation argument below.

For the linear plant, the characteristic matrix of the augmented
PID closed loop is
\begin{equation}
\label{eq:pid-linear-characteristic}
\begin{aligned}
\mathcal P(s)
=s^3I_m+(k_d\Theta-D_2)s^2
+(k_p\Theta-D_1)s+k_i\Theta.
\end{aligned}
\end{equation}

We first prove necessity. Choosing
$D_1=D_2=0$ and $\Theta=bI_m$ gives the scalar characteristic polynomial
\begin{equation*}
p_0(s)=s^3+bk_ds^2+bk_ps+bk_i
\end{equation*}
in each channel. The Routh--Hurwitz criterion therefore requires
\[
k_i>0,\qquad k_p>0,\qquad k_d>0,
\qquad bk_pk_d>k_i.
\]

Suppose next that the frequency inequality in
\eqref{eq:pid-linear-region} fails at some $\omega>0$. In what follows,
$\mathrm i^2=-1$, $\overline\zeta$ is complex conjugation, and
$\Re\zeta$, $\Im\zeta$, and $|\zeta|$ denote the real part, imaginary
part, and modulus of $\zeta\in\mathbb C$. Put
\begin{equation*}
q_\omega:=k_i-k_d\omega^2+\mathrm{i}k_p\omega,
\qquad
|q_\omega|^2=\chi(\omega),
\end{equation*}
and define
\begin{equation*}
c_1:=\frac{\mathrm{i}\omega}{q_\omega},
\qquad
c_2:=-\frac{\omega^2}{q_\omega},
\qquad
\alpha_j:=L_j\frac{\overline{c_j}}{|c_j|}.
\end{equation*}
Finally, let
\begin{equation*}
\beta:=
\frac{\mathrm{i}\omega^3
+\mathrm{i}\omega\alpha_1-\omega^2\alpha_2}
{q_\omega}.
\end{equation*}
Then $|\alpha_j|=L_j$ and
\begin{equation*}
\Re\beta
=
\frac{k_p\omega^4}{\chi(\omega)}
+
\frac{L_1\omega+L_2\omega^2}
{\sqrt{\chi(\omega)}}
\ge b.
\end{equation*}
Moreover,
\begin{equation}
\label{eq:pid-complex-marginal}
q_\omega\beta+\omega^2\alpha_2
-\mathrm{i}\omega\alpha_1-\mathrm{i}\omega^3=0.
\end{equation}

For $\zeta\in\mathbb C$, define its realification by
\begin{equation*}
\Realify(\zeta):=
\begin{bmatrix}
\Re\zeta&-\Im\zeta\\
\Im\zeta&\Re\zeta
\end{bmatrix}.
\end{equation*}
When $m=2$, choose
\[
D_j=\Realify(\alpha_j),
\qquad
\Theta=\Realify(\beta).
\]
Then
\[
\norm{D_j}=|\alpha_j|\le L_j,
\qquad
\Sym(\Theta)=(\Re\beta)I_2\succeq bI_2.
\]
Identity \eqref{eq:pid-complex-marginal} implies that
$\mathcal P(\mathrm{i}\omega)$ is singular. Hence the corresponding
closed loop has an imaginary-axis characteristic root and is not Hurwitz.

For $m>2$, the same construction is embedded by taking
\begin{equation*}
D_j=
\operatorname{diag}\bigl(\Realify(\alpha_j),0_{m-2}\bigr),
\qquad
\Theta=
\operatorname{diag}\bigl(\Realify(\beta),bI_{m-2}\bigr).
\end{equation*}
Thus the frequency inequality is necessary for every $m\ge2$.

We now prove sufficiency.
 Suppose that
$(k_p,k_i,k_d)\in\mathbb R_{>0}^3$ satisfies the frequency condition on the
right-hand side of \eqref{eq:pid-linear-region}. Assume, to the contrary,
that an admissible linear plant has a characteristic root
$s=\mathrm{i}\omega$ for some $\omega>0$.

 Let
$v\in\mathbb C^m$ be a corresponding unit vector and set
\[
a_j:=v^*D_jv,\qquad
\beta:=v^*\Theta v.
\]
Then
\[
|a_j|\le L_j,\qquad \Re\beta\ge b.
\]
From $\mathcal P(\mathrm{i}\omega)v=0$ we obtain
\begin{equation*}
q_\omega\beta+\omega^2a_2
-\mathrm{i}\omega a_1-\mathrm{i}\omega^3=0.
\end{equation*}
Dividing by $q_\omega$ and taking real parts gives
\begin{equation*}
0
\ge
b-\frac{k_p\omega^4}{\chi(\omega)}
-\frac{L_1\omega+L_2\omega^2}
{\sqrt{\chi(\omega)}},
\end{equation*}
which contradicts \eqref{eq:pid-linear-region}. Hence no nonzero
imaginary-axis root is possible. A zero root is also impossible because
\[
\mathcal P(0)=k_i\Theta
\]
is nonsingular.

Finally, connect the nominal matrices $(0,0,bI_m)$ to any admissible
$(D_1,D_2,\Theta)$ through
\begin{equation*}
D_j(\tau)=\tau D_j,\qquad
\Theta(\tau)=(1-\tau)bI_m+\tau\Theta,
\qquad 0\le\tau\le1.
\end{equation*}
This path remains admissible. The nominal polynomial is Hurwitz by
$bk_pk_d>k_i$, and the preceding argument excludes any imaginary-axis
crossing along the path.  Since the characteristic polynomial is monic of fixed degree and
depends continuously on the path parameter, its roots cannot leave
the open left half-plane without such a crossing. Therefore every admissible linear closed loop is
Hurwitz.
\end{proof}

\subsection{PID endpoint coincidence}

% 原来 Theorem~\ref{thm:pid-region-coincidence} 后面的完整证明
\begin{proof}[Proof of Theorem \ref{thm:pid-region-coincidence}]
In view of Theorem~\ref{thm:main} and the inclusions
\(
\Omega_{\rm pid}^{\rm cq}
\subseteq
\Omega_{\rm pid}
\subseteq
\Omega_{\rm pid}^{\rm lin},
\)
it suffices to prove
\[
\Omega_{\rm pid}^{\rm cq}
=
\Omega_{\rm pid}^{\rm lin}.
\]

For $t>0$, corresponding to $t=\omega^2$, define
\begin{equation*}
S(t):=
\sqrt{(k_i-k_dt)^2+k_p^2t},
\qquad
R(t):=L_1\sqrt t+L_2t,
\end{equation*}
and
\begin{equation*}
F(t):=
bS(t)^2-k_pt^2-R(t)S(t).
\end{equation*}
Thus the frequency inequality in
\eqref{eq:pid-linear-region} is equivalent to
\begin{equation}
\label{eq:F-positive}
F(t)>0,\qquad \forall\,t>0.
\end{equation}

\emph{Step 1: $
\Omega_{\rm pid}^{\rm cq}
\subseteq
\Omega_{\rm pid}^{\rm lin}$.}
Suppose that
$(k_p,k_i,k_d)\in\Omega_{\rm pid}^{\rm cq}$.
By Theorem~\ref{thm:main}, there exist $\mu,\rho\in(0,1)$ such that
\[
\Pi_{\mu,\rho}>0,\qquad
\Delta_{\mu,\rho}>0,\qquad
k_i<
\frac{\Pi_{\mu,\rho}}
{2\bigl(k_d-\sqrt{\Delta_{\mu,\rho}}\bigr)}.
\]
Define
\begin{align*}
H(t):=
b(1-\mu)S(t)^2-k_pt^2
-\frac{L_1^2t}{4b\mu\rho}
-\frac{L_2^2t^2}{4b\mu(1-\rho)}.
\end{align*}
By the definitions of $\Pi_{\mu,\rho}$ and
$\Delta_{\mu,\rho}$,
\begin{equation}
\label{eq:H-polynomial}
H(t)
=
b(1-\mu)
\left[
\Delta_{\mu,\rho}t^2
+\bigl(\Pi_{\mu,\rho}-2k_ik_d\bigr)t
+k_i^2
\right].
\end{equation}
Moreover, direct completion of squares gives the exact identity
\begin{align}
F(t)-H(t)
&=
\left(
\sqrt{b\mu\rho}\,S(t)
-\frac{L_1\sqrt t}{2\sqrt{b\mu\rho}}
\right)^2
\nonumber\\
&+
\left(
\sqrt{b\mu(1-\rho)}\,S(t)
-\frac{L_2t}{2\sqrt{b\mu(1-\rho)}}
\right)^2
\ge0.
\label{eq:exact-young-identity}
\end{align}
The quadratic polynomial in \eqref{eq:H-polynomial} satisfies
\begin{align*}
&\Delta_{\mu,\rho}t^2
+\bigl(\Pi_{\mu,\rho}-2k_ik_d\bigr)t+k_i^2\\
&\quad=
\left(\sqrt{\Delta_{\mu,\rho}}\,t-k_i\right)^2
+
\left[
\Pi_{\mu,\rho}
-2k_i\bigl(k_d-\sqrt{\Delta_{\mu,\rho}}\bigr)
\right]t.
\end{align*}
The coefficient in square brackets is strictly positive. Hence
$H(t)>0$ for every $t>0$, and
\eqref{eq:exact-young-identity} yields \eqref{eq:F-positive}.

It remains to verify the Routh--Hurwitz condition. Set
$t_0:=k_i/k_d$. Since
$S(t_0)=k_p\sqrt{t_0}$, \eqref{eq:F-positive} gives
\begin{align*}
0<F(t_0)
&=
k_pt_0
\left(
bk_p-L_1-L_2\sqrt{t_0}-t_0
\right).
\end{align*}
Therefore
\[
bk_p>t_0=\frac{k_i}{k_d},
\]
and hence $bk_pk_d>k_i$. This proves
\(
\Omega_{\rm pid}^{\rm cq}
\subseteq
\Omega_{\rm pid}^{\rm lin}.
\)

\emph{Step 2: $
\Omega_{\rm pid}^{\rm lin}
\subseteq
\Omega_{\rm pid}^{\rm cq}$.}

Let $(k_p,k_i,k_d)\in \Omega_{\rm pid}^{\rm lin}$. Then $F(t)>0$ for all $t>0$. Define
\begin{equation}
\label{eq:sigma-frequency}
\sigma(t):=
\frac{
R(t)+\sqrt{R(t)^2+4bk_pt^2}
}{2b}.
\end{equation}
This is the positive root of
\[
bs^2-R(t)s-k_pt^2=0.
\]
Consequently,
\begin{equation}
\label{eq:F-sigma-equivalence}
F(t)>0
\quad\Longleftrightarrow\quad
S(t)>\sigma(t).
\end{equation}

Define
\[
Z(t):=\sigma(t)^2-k_p^2t.
\]
Evaluating the polynomial
$bs^2-R(t)s-k_pt^2$ at $s=k_p\sqrt t$ shows that
\begin{equation}
\label{eq:Z-sign}
Z(t)>0
\quad\Longleftrightarrow\quad
t+L_2\sqrt t+L_1>bk_p.
\end{equation}
Taking again $t_0=k_i/k_d$ in \eqref{eq:F-positive} gives
\[
t_0+L_2\sqrt{t_0}+L_1<bk_p.
\]
In particular, $bk_p>L_1$. Let $t_c>0$ be the unique solution of
\begin{equation}
\label{eq:tc-definition}
t_c+L_2\sqrt{t_c}+L_1=bk_p.
\end{equation}
The preceding inequality implies $t_0<t_c$.

For $t>t_c$, one has $Z(t)>0$ and $k_dt>k_i$. Hence
\eqref{eq:F-sigma-equivalence} is equivalent to
\begin{equation}
\label{eq:kd-frequency-bound}
k_d>
\psi(t):=
\frac{k_i+\sqrt{Z(t)}}{t}.
\end{equation}
For $0<t<t_c$, one has
$\sigma(t)<k_p\sqrt t\le S(t)$, so no additional restriction
arises. At $t=t_c$, strictness follows from $t_0<t_c$.

Thus, under $k_i/k_d<t_c$, \eqref{eq:F-positive} is equivalent to
$k_d>\psi(t)$ for every $t>t_c$. Define
\[
D_*:=\sup_{t>t_c}\psi(t).
\]
Before replacing these pointwise strict inequalities by $k_d>D_*$,
we must prove that this supremum is attained.

First we verify the asserted behavior at $t_c$. Set
\[
g(t):=t+L_2\sqrt t+L_1-bk_p,\qquad h(t):=k_p\sqrt t,
\qquad p(t,s):=bs^2-R(t)s-k_pt^2.
\]
Then $g(t_c)=0$, $g'(t_c)=1+L_2/(2\sqrt{t_c})>0$,
$p(t,\sigma(t))=0$, and
\[
p(t,h(t))=-k_pt\,g(t),\qquad
\partial_s p(t,\sigma(t))
=\sqrt{R(t)^2+4bk_pt^2}>0.
\]
Since $\sigma(t_c)=h(t_c)$, differentiating the two identities for
$p$ at $t_c$ and subtracting gives
\[
\sigma'(t_c)-h'(t_c)
=\frac{k_pt_cg'(t_c)}{\sqrt{R(t_c)^2+4bk_pt_c^2}}>0.
\]
Consequently,
\[
Z'(t_c)
=2k_p\sqrt{t_c}\bigl(\sigma'(t_c)-h'(t_c)\bigr)>0.
\]
Here $\partial_s p$ is the partial derivative in the scalar variable $s$.
Taylor expansion, with $O(\cdot)$ denoting a remainder bounded in
absolute value by a constant times its argument in the indicated limit,
now gives, as $t\downarrow t_c$,

\begin{equation*}
\psi(t)
=
\frac{k_i}{t_c}
+
\frac{\sqrt{Z'(t_c)}}{t_c}\sqrt{t-t_c}
+O(t-t_c).
\end{equation*}
Hence $\psi(t)>k_i/t_c$ immediately to the right of $t_c$.
On the other hand, as $t\to\infty$,
\begin{equation*}
\psi(t)
=
\kappa_\infty
+\frac{c_\infty}{\sqrt t}
+O(t^{-1}),
\end{equation*}
where
\begin{equation*}
\kappa_\infty
:=
\frac{L_2+\sqrt{L_2^2+4bk_p}}{2b},
\qquad
c_\infty
:=
\frac{L_1\kappa_\infty}
{\sqrt{L_2^2+4bk_p}}>0.
\end{equation*}
Let
\[
\ell:=\max\left\{\frac{k_i}{t_c},\,\kappa_\infty\right\}.
\]

If $\ell=k_i/t_c$, the first expansion supplies a finite point with
$\psi(\widehat t)>\ell$; if $\ell=\kappa_\infty$, the second does so
because $c_\infty>0$. Choose a real number $a$ strictly between
$\ell$ and $\psi(\widehat t)$. The two endpoint limits give
$t_c<t_-<\widehat t<t_+<\infty$ such that $\psi(t)<a$ for
$t\in(t_c,t_-)$ and for $t>t_+$. The maximum of $\psi$ on
$[t_-,t_+]$ therefore equals its supremum on $(t_c,\infty)$.
In particular, for some $t_*\in(t_c,\infty)$,
\[
D_*=\psi(t_*)>\max\{k_i/t_c,\kappa_\infty\}.
\]
Pointwise strictness at this maximizer now gives $k_d>D_*$ for the
original gain. Conversely, $k_d>D_*$ implies $k_i/k_d<t_c$ and
$k_d>\psi(t)$ for every $t>t_c$, and hence gives $F(t)>0$ for all
$t>0$. We have therefore justified the strict equivalence
\begin{equation}
\label{eq:kd-Dstar}
F(t)>0\ \text{for every }t>0
\quad\Longleftrightarrow\quad k_d>D_*.
\end{equation}

Consider the boundary value
\(
k_d^\partial:=D_*.
\)

To keep the original and boundary gains distinct, write
\[
S_\partial(t):=\sqrt{(k_i-k_d^\partial t)^2+k_p^2t},\qquad
F_\partial(t):=bS_\partial(t)^2-k_pt^2-R(t)S_\partial(t).
\]
Since $D_*>k_i/t_c$, one has $k_i/k_d^\partial<t_c$.
For $t>t_c$, the inequality $D_*\ge\psi(t)$ yields
$k_d^\partial t-k_i\ge\sqrt{Z(t)}$, and thus
$S_\partial(t)\ge\sigma(t)$. For $0<t<t_c$, strictness follows from
$\sigma(t)<k_p\sqrt t\le S_\partial(t)$; at $t=t_c$, it follows
from $k_d^\partial t_c>k_i$. Therefore
\[
F_\partial(t)\ge0\quad(t>0),\qquad F_\partial(t_*)=0,
\]
where equality at $t_*$ follows from $D_*=\psi(t_*)$.
This proves boundary feasibility without applying a strict-gain
criterion at equality.

Let
\[
S_*:=
\sqrt{(k_i-k_d^\partial t_*)^2+k_p^2t_*}
=\sigma(t_*).
\]
Define
\begin{equation}
\label{eq:optimal-mu-rho}
\mu_*:=
\frac{R(t_*)}{2bS_*},
\qquad
\rho_*:=
\frac{L_1\sqrt{t_*}}{R(t_*)}.
\end{equation}
Because $L_1,L_2>0$, one has $\rho_*\in(0,1)$. Furthermore, the
boundary equality
\(
bS_*^2-k_pt_*^2-R(t_*)S_*=0
\)
gives
\[
\mu_*
=
\frac12
\left(
1-\frac{k_pt_*^2}{bS_*^2}
\right)
\in\left(0,\frac12\right).
\]
Thus $(\mu_*,\rho_*)\in(0,1)^2$.

Use \eqref{eq:H-polynomial}, denoting its left-hand side by
$H_\partial(t)$, with
$k_d=k_d^\partial$ and
$(\mu,\rho)=(\mu_*,\rho_*)$. By
\eqref{eq:optimal-mu-rho}, both squares in
\eqref{eq:exact-young-identity} vanish at $t=t_*$. Hence
\[
H_\partial(t_*)=0.
\]
Since $F_\partial(t)\ge0$ and $F_\partial(t_*)=0$ at the interior point $t_*$,
one also has $F_\partial'(t_*)=0$. The derivatives of the two squares in
\eqref{eq:exact-young-identity} vanish at their common zero, and
therefore
\[
H_\partial'(t_*)=0.
\]

All functions in the square terms are differentiable near $t_*>t_c>0$;
the derivative of a square is zero at a zero of its underlying function.
Thus both the value and derivative of $H_\partial$ vanish at $t_*$.
Its constant term is $b(1-\mu_*)k_i^2>0$, so it is not identically zero
and cannot be a polynomial of degree at most one with this double zero.
Consequently,
\[
H_\partial(t)=b(1-\mu_*)\frac{k_i^2}{t_*^2}(t-t_*)^2.
\]
Comparing coefficients proves the identities below. Their strict
positivity also uses
$k_d^\partial-k_i/t_*=\sqrt{Z(t_*)}/t_*>0$.

We obtain
\begin{equation}
\label{eq:boundary-Pi-Delta}
\Delta_*=\frac{k_i^2}{t_*^2}>0,
\qquad
\Pi_{\mu_*,\rho_*}
=
2k_i
\left(
k_d^\partial-\frac{k_i}{t_*}
\right)>0,
\end{equation}
where $\Delta_*$ denotes
$\Delta_{\mu_*,\rho_*}$ evaluated at
$k_d=k_d^\partial$. In particular,
\[
\sqrt{\Delta_*}=\frac{k_i}{t_*}
\]
and
\begin{equation}
\label{eq:boundary-integral-equality}
k_i
=
\frac{\Pi_{\mu_*,\rho_*}}
{2\bigl(k_d^\partial-\sqrt{\Delta_*}\bigr)}.
\end{equation}

Return now to the original gain $k_d>D_*$ and keep
$(\mu_*,\rho_*)$ fixed. Put
\begin{equation*}
A_*:=
\frac{k_p}{b(1-\mu_*)}
+
\frac{L_2^2}
{4b^2\mu_*(1-\rho_*)(1-\mu_*)}.
\end{equation*}
Then
\(
\Delta_{\mu_*,\rho_*}=k_d^2-A_*,
\
\Delta_*=(k_d^\partial)^2-A_*.
\)
Since the function
\[
r\longmapsto r-\sqrt{r^2-A_*}
\]
is strictly decreasing for $r>\sqrt{A_*}$, it follows that
\begin{align*}
k_d-\sqrt{\Delta_{\mu_*,\rho_*}}
<
k_d^\partial-\sqrt{\Delta_*}
=
\frac{\Pi_{\mu_*,\rho_*}}{2k_i}.
\end{align*}

Indeed, $A_*>0$ and
$\frac{d}{dr}(r-\sqrt{r^2-A_*})=1-r/\sqrt{r^2-A_*}<0$.
Moreover, $\Delta_*>0$ implies $k_d^\partial>\sqrt{A_*}$, so the
whole interval from $k_d^\partial$ to $k_d$ lies in this domain.
For the fixed pair $(\mu_*,\rho_*)$, $\Pi_{\mu_*,\rho_*}$ does not
depend on $k_d$, while $\Delta_{\mu_*,\rho_*}=k_d^2-A_*$ increases
strictly and remains positive. The displayed denominators are positive
because $0<k_d^2-A_*<k_d^2$. Thus increasing $k_d$ from its boundary
value makes the integral-gain inequality strictly feasible without
changing the selected multipliers.

Consequently,
\[
k_i<
\frac{\Pi_{\mu_*,\rho_*}}
{2\bigl(k_d-\sqrt{\Delta_{\mu_*,\rho_*}}\bigr)}.
\]
Together with \eqref{eq:boundary-Pi-Delta}, this proves that
the gain satisfies the explicit conditions in
\eqref{eq:exact-cq-region}, and hence belongs to
$\Omega_{\rm pid}^{\rm cq}$ by Theorem~\ref{thm:main}.
Therefore,
\(
\Omega_{\rm pid}^{\rm lin}
\subseteq
\Omega_{\rm pid}^{\rm cq}.
\)
The two inclusions prove
\[
\Omega_{\rm pid}^{\rm cq}
=
\Omega_{\rm pid}^{\rm lin}.
\]
Combining this identity with Theorem~\ref{thm:main} and the
common-quadratic/system-level/linear inclusions gives
\eqref{eq:pid-all-regions-equal}.
\end{proof}

\subsection{Exact PI characterization}

\begin{proof}[Proof of Theorem \ref{thm:pi-exact-uniform}]
We prove $(iii)\Rightarrow(i)\Rightarrow(ii)\Rightarrow(iii)$.

\emph{Step 1: $(iii)\Rightarrow(i)$.}
The sufficiency argument is valid, in fact, for every $m\ge1$.
Condition~(iii) implies $k_p>0$ and $k_i>0$. For a given reference $y^*$, let $u^*$ satisfy $f(y^*,u^*)=0$, and define
\[
z_0:=\eta-\frac{u^*}{k_i},
\quad
z_1:=e,
\quad
q:=k_i z_0+k_p z_1.
\]
The mean-value theorem and the Jacobian bounds defining
$\mathcal F_{L,b}$ yield
\(
f(y^*-z_1,u^*+q)
=
-D(z)z_1+\Theta(z)q,
\)
for some state-dependent matrices satisfying
\[
\norm{D(z)}\le L,
\quad
\Sym(\Theta(z))\succeq bI_m.
\]
Hence, with $z=\operatorname{col}(z_0,z_1)$,
\[
\dot z
=
\mathcal A_{\rm PI}\bigl(D(z),\Theta(z)\bigr)z.
\]

Put
\[
\gamma:=\frac{k_i}{k_p},
\quad
\delta:=bk_p-\gamma-L.
\]
The inequality in~(iii) is equivalent to $\delta>0$. Consider
\begin{equation}
\label{eq:pi-P0}
P_0:=
\begin{bmatrix}
\gamma(2bk_p-L)&\gamma\\
\gamma&1
\end{bmatrix}.
\end{equation}
Since
\[
\det P_0
=
\gamma(2bk_p-L-\gamma)
=
\gamma(bk_p+\delta)>0,
\]
we have $P_0\succ0$.

Let $P=P_0\otimes I_m$ and define
\[
Q(D,\Theta):=
-\left(
P\mathcal A_{\rm PI}(D,\Theta)
+\mathcal A_{\rm PI}(D,\Theta)^\top P
\right).
\]
Writing $H=\Sym(\Theta)$, direct calculation gives
\[
Q(D,\Theta)
=
Q_b(D)
+
2k_p
\begin{bmatrix}
\gamma^2&\gamma\\
\gamma&1
\end{bmatrix}
\otimes(H-bI_m),
\]
where
\[
Q_b(D)=
\begin{bmatrix}
2bk_p\gamma^2I_m
&
\gamma(LI_m-D)\\
\gamma(LI_m-D^\top)
&
2\bigl((bk_p-\gamma)I_m-\Sym(D)\bigr)
\end{bmatrix}.
\]
Thus $Q(D,\Theta)\succeq Q_b(D)$. The Schur complement of the
upper-left block of $Q_b(D)$ is
\[
\begin{aligned}
S(D)
={}&2\bigl((bk_p-\gamma)I_m-\Sym(D)\bigr)\\
&-\frac{1}{2bk_p}
(LI_m-D^\top)(LI_m-D).
\end{aligned}
\]
For any nonzero $v$, let
\[
r_v:=\frac{v^\top\Sym(D)v}{\norm v^2}.
\]
Since $\norm D\le L$, we have
\(
\norm{(LI_m-D)v}^2
\le
2L(L-r_v)\norm v^2.
\)
Consequently,
\[
\frac{v^\top S(D)v}{\norm v^2}
\ge
2\delta+
\left(2-\frac{L}{bk_p}\right)(L-r_v)
\ge2\delta>0.
\]
Indeed, $\delta>0$ implies $bk_p>L$, while $r_v\le L$. Hence
$Q_b(D)\succ0$ for every $\norm D\le L$. By compactness of the matrix
ball $\{D:\norm D\le L\}$, there exists $\alpha>0$, depending only on
$(k_p,k_i,L,b)$, such that
\[
Q(D,\Theta)\succeq\alpha I_{2m}
\]
for all admissible $D$ and $\Theta$.

The standard Lyapunov estimate, together with equivalence of norms on
$\mathbb R^{2m}$, yields \eqref{eq:pi-uniform-estimate} with constants
depending only on $(k_p,k_i,L,b)$. In fact, this estimate holds for
arbitrary $z(0)$ and hence, in particular, under the prescribed
initialization $\eta(0)=0$. The same estimate precludes finite escape and
therefore proves forward completeness. Thus~(i) holds. Moreover,
$P_0\otimes I_m$ is a common quadratic certificate.

\emph{Step 2: $(i)\Rightarrow(ii)$.}
This follows immediately from
\[
\mathcal F_{\partial}^{m}\subset\mathcal F_{L,b}.
\]

\emph{Step 3: $(ii)\Rightarrow(iii)$.}
We first prove that the gains must be positive. Since
$0\in\mathbb N_0$, the boundary family contains
\[
f_0(x,u)=Lx+bI_m u.
\]
Choose $y^*=0$, so that $u^*=0$. In the unshifted augmented coordinates
$z=\operatorname{col}(\eta,e)$, the corresponding closed-loop matrix is
\[
\mathcal C_0
=
\begin{bmatrix}
0&I_m\\
-bk_iI_m&(L-bk_p)I_m
\end{bmatrix}.
\]

Statement~(ii), applied to arbitrary plant initial conditions with
$\eta(0)=0$, gives exponential decay for every initial state in
\[
\mathcal S:=\{0\}\times\mathbb R^m.
\]
As in the corresponding PID argument, this restricted set of initial
conditions is sufficient to test Hurwitz stability. Indeed,
\[
\mathcal S+\mathcal C_0\mathcal S=\mathbb R^{2m},
\]
and $e^{\mathcal C_0t}\mathcal C_0
=\mathcal C_0e^{\mathcal C_0t}$. Hence exponential decay on
$\mathcal S$ extends to $\mathcal C_0\mathcal S$ and therefore to all of
$\mathbb R^{2m}$. Thus $\mathcal C_0$ is Hurwitz.

Its scalar characteristic polynomial is
\[
p_0(s)
=
s^2+(bk_p-L)s+bk_i.
\]
The Routh--Hurwitz criterion now gives
\[
bk_p-L>0,
\qquad
bk_i>0.
\]
Since $b>0$ and $L\ge0$, we conclude that
\(
k_p>0,
k_i>0.
\)

It remains to prove the strict gain inequality. Put
\[
N:=bk_p^2-Lk_p-k_i.
\]
For $f_\nu\in\mathcal F_{\partial}^{m}$, the shifted closed-loop matrix is
\[
\mathcal C_\nu=
\begin{bmatrix}
0&I_m\\
-k_iB_\nu&LI_m-k_pB_\nu
\end{bmatrix},
\qquad
B_\nu=bI_m+\nu J_m.
\]
Consider the complexified invariant subspace associated with the eigenvalue
$\mathrm i$ of the leading $J_2$ block. On this subspace, the
characteristic equation is
\[
p_\nu(s)
=
s^2+\bigl(k_p(b+\mathrm i\nu)-L\bigr)s
+k_i(b+\mathrm i\nu)=0.
\]
Because $k_p>0$ and $k_i>0$ have already been established, its two roots
satisfy, as $\nu\to\infty$,
\begin{align*}
s_{\nu,1}
=
-\frac{k_i}{k_p}+O(\nu^{-1}),\quad
s_{\nu,2}
=
-\mathrm i k_p\nu-\frac{N}{k_p}+O(\nu^{-1}).
\end{align*}
In particular,
\(
\lim_{\nu\to\infty}\Re s_{\nu,2}
=
-\frac{N}{k_p}.
\)

To account for the prescribed initialization $\eta(0)=0$, choose a plant
initial condition supported on the first two coordinates. Identify this
real two-dimensional block with $\mathbb C$. Under this identification,
$z_1(0)=1$ represents the admissible real initial vector $(1,0)^\top$.
Normalize the initial state by
\(
z_0(0)=0,
z_1(0)=1.
\)
The corresponding error component is
\(
z_{1,\nu}(t)
=
a_{\nu,1}e^{s_{\nu,1}t}
+
a_{\nu,2}e^{s_{\nu,2}t},
\)
where
\begin{align*}
a_{\nu,1}
&=
\frac{s_{\nu,1}}{s_{\nu,1}-s_{\nu,2}}
=
O(\nu^{-1}),\\
a_{\nu,2}
&=
-\frac{s_{\nu,2}}{s_{\nu,1}-s_{\nu,2}}
=
1+O(\nu^{-1}).
\end{align*}
Thus the second mode is excited by an admissible plant initial condition
with a coefficient bounded away from zero.

If $N<0$, then $\Re s_{\nu,2}>0$ for all sufficiently large $\nu$, which
contradicts stability. If $N=0$, then $\Re s_{\nu,2}\to0$. Let $M$ and
$\lambda$ be the uniform constants in~(ii). Choose $\nu$ sufficiently large
that
\[
\Re s_{\nu,2}>-\frac{\lambda}{2},
\qquad
\Re s_{\nu,1}<\Re s_{\nu,2},
\]
and $|a_{\nu,2}|$ is bounded away from zero. Since the first mode then
decays strictly faster than the second, the reverse triangle inequality
gives, for all sufficiently large $t$,
\[
|z_{1,\nu}(t)|
\ge
\frac14e^{\Re s_{\nu,2}t}
>
Me^{-\lambda t},
\]
contradicting the uniform estimate in~(ii). Therefore $N>0$, namely,
\(
k_i+Lk_p<bk_p^2.
\)
Together with $k_p>0$ and $k_i>0$, this proves statement~(iii) and
completes the proof.
\end{proof}

\section{Conclusion}

Characterizing all PID and PI gains that guarantee uniform exponential regulation of uncertain nonlinear systems is a fundamental yet highly challenging problem in control theory.
This paper established exact system-level PID and PI gain regions for
two classes of non-affine MIMO systems with $m\geq2$. For the
second-order PID class, independently characterized common-quadratic
inner and linear-subclass outer regions were shown to coincide. Thus,
common-quadratic certification is lossless at the gain-region level,
and the linear subclass is a lossless outer test. For the first-order
PI class, uniform exponential regulation of a countable boundary-linear
family was shown to be equivalent to regulation of the full nonlinear
class, and the resulting exact condition admits a common quadratic
certificate. The PID region strictly enlarges the corresponding
existing sufficient region and yields a sequential tuning rule, while
the PI condition relaxes the corresponding existing bound by
$L^2/(4b)$.

Beyond these two gain-region formulas, the results raise a broader
structural question: for which nonlinear uncertainty classes can a
tractable Lyapunov-based inner region and a tractable test-family outer
region both coincide with the full system-level gain region? The
compact linear counterexample shows that common-quadratic losslessness
does not hold in general. Identifying verifiable structural conditions
that guarantee such an exact inner--outer coincidence is an important
direction for future research.

\end{document}